\newcommand{\D}{~{\mathop{}\!\mathrm{d}}} 
\newcommand{\R}{\mathbb{R}}
\newcommand{\Q}{\mathbb{Q}}
\newcommand{\C}{\operatorname{C}}
\newcommand{\G}{\mathcal{G}}
\newcommand{\HM}{\mathcal{H}_{B,L}}
\newcommand{\N}{\mathbb{N}}
\newcommand{\PP}{\mathbb{P}}
\newcommand{\E}{\mathbb{E}}
\newcommand{\one}{ 1 \hspace{-3pt} \mathrm{l}} %
\numberwithin{equation}{section}  
\newtheorem{defn}{Definition}[section]
\newtheorem{rem}[defn]{Remark}
\newtheorem{thm}[defn]{Theorem}
\newtheorem{prop}[defn]{Proposition}
\newtheorem{lem}[defn]{Lemma}
\newtheorem{asu}[defn]{Assumption}
\begin{document}
\title[Detecting Robust Statistical Arbitrage Strategies with Deep Neural Networks]{Detecting data-driven robust statistical arbitrage strategies with deep neural networks}

\author[A. Neufeld, J. Sester, D. Yin, ]{ Ariel Neufeld$^{1}$, Julian Sester$^{2}$, Daiying Yin$^{3}$}
\thanks{\textit{E-Mail addresses:}~
$^{1}$\texttt{ariel.neufeld@ntu.edu.sg},~
$^{2}$\texttt{jul$\_$ses@nus.edu.sg},~
$^{3}$\texttt{yind0004@e.ntu.edu.sg}}

\maketitle
\date{\today}

\begin{center}
\normalsize{\today} \\ \vspace{0.5cm}
\small\textit{$^{1,3}$NTU Singapore, Division of Mathematical Sciences,\\ 21 Nanyang Link, Singapore 637371.\\[2mm]
$^{2}$NUS Singapore, Department of Mathematics,\\ 21 Lower Kent Ridge Road, Singapore 119077}                                                                                                                              
\end{center}

\begin{abstract}
We present an approach, based on deep neural networks, that allows identifying robust statistical arbitrage strategies in financial markets. Robust statistical arbitrage strategies refer to  trading strategies that enable profitable trading under model ambiguity. The presented novel methodology allows to consider a large amount of underlying securities simultaneously and does not depend on the identification of cointegrated pairs of assets, hence it is applicable on high-dimensional financial markets or in markets where classical pairs trading approaches fail. Moreover, we provide a method to build an ambiguity set of admissible probability measures that can be derived from observed market data. Thus, the approach can be considered as being model-free and entirely data-driven. We showcase the applicability of our method by providing empirical investigations with highly profitable trading performances even in $50$ dimensions, during financial crises, and when the cointegration relationship between asset pairs stops to persist.
\\ \\
\textbf{Keywords: }{Robust Statistical Arbitrage, Model Uncertainty, Deep Learning, Trading Strategies}
\end{abstract}

\section{Introduction}
In this paper we present an empirically tractable method using neural networks that allows identifying robust statistical arbitrage strategies in financial markets.

The term \emph{statistical arbitrage} is commonly used in finance to describe trading strategies which are profitable on average, but, in contrast to pure arbitrage strategies (see e.g., \cite{cohen,cui,neufeld2020model} for their detection), not necessarily in every market scenario that is deemed to be possible. This generalized notion therefore forms the foundation for a systematic approach enabling to trade profitably even in markets in which it is difficult or impossible to detect pure arbitrage strategies.

A popular class of trading strategies that are often referred to as statistical arbitrage strategies are  \emph{pairs trading} strategies, which all rely on the fundamental idea that, given two financial assets are strongly related, e.g., through a cointegration property (compare e.g.~\cite{caldeira2013selection}) or through a low level of the variance of the spread between the two assets (compare e.g.~\cite{gatev2006pairs}), then deviations of the spread are assumed to last only for a short period of time and, thus, eventually the spread of the asset pair will return to its long-term equilibrium. This mean-reversion property is exploited by trading in the opposite direction of the deviation after the deviation has exceeded a certain threshold and by clearing the position when the spread has reached again a level close to its long-term equilibrium, see also \cite{avellaneda2010statistical, do2006new, elliott2005pairs, gatev2006pairs, krauss2017statistical, krauss2017deep, rad2016profitability, vidyamurthy2004pairs}.

The apparent drawback of applying pairs trading strategies is the strong dependence on the underlying mean-reversion property of the spread process. Indeed, if the mean-reversion relation breaks down and thus the spread does not converge to its equilibrium, then the pairs trading approach will in general not be profitable. Moreover, in an empirical study, \cite{do2010does} provide evidence that the profitability of pairs trading strategies has declined over the recent years, mainly as it has become increasingly difficult to identify profitable asset pairs.

In contrast, our contribution builds on another class of statistical arbitrage strategies that do not rely on a mean-reversion property. We will show that these strategies are profitable even in periods where the pairs trading approach fails. Our approach relies on the idea of Bondarenko (\cite{bondarenko2003statistical}), who introduced and characterized statistical arbitrage strategies as  strategies which are profitable on average given any terminal value of the underlying securities at a fixed maturity $t_n>0$, i.e., if $\Phi(S)$ denotes the profit of a trading strategy investing in underlying securities $S=(S)_{ t \leq t_n}$, then statistical arbitrage strategies fulfil $\E_{\PP}[\Phi(S)~|~S_{t_n}] \geq 0 $ as well as $\PP\big(\E_{\PP}[\Phi(S)~|~S_{t_n}] > 0\big)>0$. Based on this idea, \cite{kassberger2017additive} generalized the notion of statistical arbitrage from \cite{bondarenko2003statistical} by introducing $\mathcal{G}$-arbitrage defined through zero-cost payoffs $Y$ (which are not necessarily the payoffs of  trading strategies) fulfilling \begin{equation}\label{eq_g_arbitrage}
\E_{\PP}[Y~|~\mathcal{G}] \geq 0 \text{ and } \PP \big(\E_{\PP}[Y~|~\mathcal{G}]> 0\big)>0
\end{equation} for $\mathcal{G}$ being a $\sigma$-algebra $\mathcal{G}\subseteq \sigma(S)$, which allows, in particular, to take into account more flexible choices of trading strategies, possibly adjusted to available information. 

Building on the definition of $\mathcal{G}$-arbitrage, the results from \cite[Proposition 1]{bondarenko2003statistical}, \cite[Proposition 6]{kassberger2017additive}, and \cite[Theorem 3.3]{rein2021generalized} characterize the existence of $\mathcal{G}$-arbitrage strategies by relating the absence of  strategies fulfilling \eqref{eq_g_arbitrage} to the existence of $\mathcal{G}$-measurable Radon---Nikodym densities, a result which can be considered as an extension of the fundamental theorem of asset-pricing~(compare \cite{acciaio2016model,bouchard2015arbitrage,BurzoniRiedelSonerFTAP,delbaen1994general,harrison1981martingales,schachermayer2004fundamental} for several versions of the fundamental theorem of asset-pricing in different underlying settings) which connects the absence of arbitrage with the existence of pricing measures. The authors from \cite{rein2021generalized} further propose and validate empirically an embedding-methodology to exploit statistical arbitrage on financial markets. All these mentioned contributions assume that the underlying securities behave according to a previously fixed underlying probability measure $\PP$. To account for ambiguity with respect to the choice of an underlying probability measure \cite{lutkebohmert2021robust} recently introduced the notion of $\mathcal{P}$-robust $\mathcal{G}$-arbitrage referring to  trading strategies that allow for $\mathcal{G}$-arbitrage independent which measure $\PP$ from the considered ambiguity set of measures $\mathcal{P}$ is the "correct" underlying probability measure, as the strategy is required to be profitable for all measures from the ambiguity set. Hence, this notion allows, in particular, to take into account probability measures that would incur losses for conventional pairs trading strategies, and thus to determine trading strategies that are profitable even in scenarios where pairs trading fails.

To compute  $\mathcal{P}$-robust statistical arbitrage strategies, i.e., $\mathcal{P}$-robust $\mathcal{G}$-arbitrage strategies for the choice $\mathcal{G}=\sigma(S_{t_n})$, where $S_{t_n}$ denotes the terminal values of the underlying securities, a methodology in \cite{lutkebohmert2021robust} is provided that relies on a linear programming approach. This routine cannot be applied in high-dimensional settings since it relies on linear programming. To overcome this limitation, we establish a numerical method involving deep neural networks to determine  $\mathcal{P}$-robust statistical arbitrage strategies. To this end, we introduce a functional which penalizes  trading strategies that are not statistical arbitrage strategies for each of the considered measures from the ambiguity set $\mathcal{P}$. We show as a main result in Theorem~\ref{thm_summary} that minimizing this functional among trading strategies that can be represented as the outputs of deep neural networks is, for a sufficiently large penalization parameter, equivalent to the determination of  $\mathcal{P}$-robust statistical arbitrage strategies. In particular, through the representation of trading strategies by neural networks, our approach is applicable even in high dimensions, in contrast to approaches relying on linear programming, and is hence tractable even when the trading strategy depends on a large amount of underlying assets. We provide empirical evidence in Section~\ref{exa_large_number} of the applicability of our approach in a high-dimensional setting, where we consider strategies investing in $50$ underlying assets from the S\&P $500$ universe.

As a choice of the ambiguity set $\mathcal{P}$, we propose a purely data-driven approach as follows. First, relying on a historical time-series of stock returns, a probability measure is constructed that considers each sequence of observed historical returns as equally likely to occur again in the future. Second, since this measure allows no deviation of future returns from historical returns, we consider a ball around this measure (where the ball is taken with respect to the Wasserstein-distance). Finally, we provide a methodology to explicitly generate measures from this ball.

In contrast to the determination of statistical arbitrage strategies with respect to a fixed underlying probability measure, taking into account the ambiguity set  allows, particularly, that the future developments of the underlying process differ to a certain degree (this degree can be specified by the applicant and is expressed in terms of the Wasserstein-distance) from the historical evolution. Indeed, by using this data-driven ambiguity set of measures $\mathcal{P}$ we provide empirical evidence for the profitability of our approach in overall bad market scenarios (Section~\ref{exa_bad_market}), in high-dimensional financial markets (Section~\ref{exa_large_number}), and in scenarios where classical pairs trading approaches fail (Section~\ref{exa_pairs_trading}).

The remainder of this paper is as follows. In Section~\ref{sec_main} we introduce the underlying setting and present our main theoretical results. Section~\ref{sec_application} presents an approach to first derive an ambiguity set of measures $\mathcal{P}$ from financial data and then use a numerical method to compute $\mathcal{P}$-robust statistical arbitrage strategies when considering the data-implied set $\mathcal{P}$ as the ambiguity set of physical measures. In Section~\ref{sec_real_world_examples} we then provide empirical evidence for the applicability of our approach by studying several relevant examples involving real-world data. Section~\ref{sec_conclusion} concludes and provides an outlook to future research.
The proofs of all mathematical statements are provided in Section~\ref{sec_proofs}.

\section{Setting and main results}\label{sec_main}

We consider a financial market on which we examine the evolution of $d \in \N$ underlying securities at $n\in \N$ future times $t_1 < \dots < t_n$. The future values of all securities are assumed to be bounded from below and from above by some (possibly very large) constants\footnote{Compare also Figure~\ref{fig_NN_vs_LP} where we illustrate that if the width of the interval $[\underline{K}^j,\overline{K}^j]$, $j \in \{1,\dots,d\}$, spanned by the bounds $\underline{K}^j,\overline{K}^j$ of the securities, exceeds  a certain threshold, then the performance of our approach does not further improve. Thus, assuming a setting considering large bounds for the underlying securities turns out to be no restriction in practice when applying our approach.}. More precisely, for each $j \in \{1,\dots,d\}$ there exist security-specific bounds $\underline{K}^j,\overline{K}^j\in \R$  satisfying $\underline{K}^j<\overline{K}^j$. We then define
$$
\Omega_i: = {[\underline{K}^1,\overline{K}^1]}^i\times \cdots \times {[\underline{K}^d,\overline{K}^d]}^i \subset \R^{id},\qquad i=1,\dots,n,
$$
which refers to the bounds for the evolution of the underlying assets until time $t_i$. Then, we model the evolution of the values of the underlying securities until time $t_n$ through the canonical process on $
\Omega:=\Omega_n$ and we denote this process by $S:=(S_{t_i}^j)_{i=1,\dots,n}^{j=1,\dots,d}$, which is for $i=1,\dots,n$; $j=1,
\dots,d$ defined by
\[
S_{t_i}^j:\Omega \to [\underline{K}^j,\overline{K}^j], ~(x_1^1,\dots,x_n^d)\mapsto x_i^j.
\]
Moreover, we introduce for the values of the $d$ securities at each time $t_i$ the notation $S_{t_i}:=(S_{t_i}^j)^{j=1,\dots,d}$ and denote for $(S_{t_0}^j)^{j=1,\dots,d} \in [\underline{K}^1,\overline{K}^1]\times \cdots \times {[\underline{K}^d,\overline{K}^d]}$ the observed, and therefore deterministic, spot prices at initial time $t_0$. Then we define the \emph{gross profit} of a  trading strategy $\Delta:=(\Delta_i^j)_{i=0,\dots,n-1}^{j=1,\dots,d}$, consisting of Borel measurable functions $\Delta_i^j: \R^{id}\to \R$, (where $\Delta_0^j\in \R$ is a constant) via
\[
(\Delta \cdot S)_n:=\sum_{j=1}^d\sum_{i=0}^{n-1} \Delta_i^j(S_{t_1},\dots,S_{t_i})(S_{t_{i+1}}^j-S_{t_i}^j).
\]
{Additionally, we assume that trading in a self-financing strategy $\Delta$ induces three types of trading costs, namely, transaction fees, borrowing costs when holding a short position, and liquidity costs arising from bid-ask spreads. If a trader adjusts at time $t_i$ her trading position $\Delta_{i-1}^j$ and changes it to a new position $\Delta_i^j$, then this adjustment
causes transactions costs $${c_{\operatorname{trans}}}_i^j\left( S_{t_i}^j ,~\Delta_i^j(S_{t_1},\dots,S_{t_i})-\Delta_{i-1}^j(S_{t_1},\dots,S_{t_{i-1}})\right)$$ for some measurable non-negative function ${c_{\operatorname{trans}}}_i^j:{ \R \times }\R\rightarrow \R_+$ as well as liquidity costs associated to the bid-ask spread of size
$$
{c_{\operatorname{spread}}}_i^j\left( S_{t_i}^j ,~\Delta_i^j(S_{t_1},\dots,S_{t_i})-\Delta_{i-1}^j(S_{t_1},\dots,S_{t_{i-1}})\right)
$$
for some measurable non-negative ${c_{\operatorname{spread}}}_i^j:{ \R \times }\R\rightarrow \R_+$.
Moreover, if the held position after trading is negative, then this induces additional borrowing costs of size
$$
{c_{\operatorname{short}}}_i^j\left( S_{t_i}^j ,~\Delta_i^j(S_{t_1},\dots,S_{t_i})\right)
$$
for some measurable non-negative function ${c_{\operatorname{short}}}_i^j:{ \R \times }\R\rightarrow \R_+$ being zero  if the agent holds a long position. Thus, given some trading strategy $\Delta$, the overall trading costs until time $t_n$ are measured by
\begin{align*}
	\C_n(\Delta)&:= \sum_{j=1}^d\sum_{i=0}^{n} \bigg[{c_{\operatorname{trans}}}_i^j\left(S_{t_i}^j ,\Delta_i^j(S_{t_1},\dots,S_{t_i})-\Delta_{i-1}^j(S_{t_1},\dots,S_{t_{i-1}})\right)\\
	&+{c_{\operatorname{{spread}}}}_i^j\left(S_{t_i}^j ,\Delta_i^j(S_{t_1},\dots,S_{t_i})-\Delta_{i-1}^j(S_{t_1},\dots,S_{t_{i-1}})\right)+{c_{\operatorname{{short}}}}_i^j\left(S_{t_i}^j ,\Delta_i^j(S_{t_1},\dots,S_{t_i})\right)\bigg]
\end{align*}

with the convention $\Delta_{-1}^j\equiv\Delta_n^j \equiv 0 $ for all $j=1,\dots,d$. Therefore, the \emph{net overall profit} of a self-financing trading strategy $\Delta$ is given by 
\[
(\Delta \cdot S)_n - \C_n(\Delta).
\]

\begin{asu}\label{asu_transaction_costs}
We assume for all $i=0,\dots,n$, $j=1,\dots,d$ and $S_{t_i}^j \in \R$ that the functions 
\begin{align*}
\R \ni x \mapsto &{c_{\operatorname{trans}}}_i^j\left( S_{t_i}^j,x\right) \in \R,\\
\R \ni x \mapsto &{c_{\operatorname{spread}}}_i^j\left( S_{t_i}^j,x\right) \in \R,\\
\R \ni x \mapsto &{c_{\operatorname{short}}}_i^j\left( S_{t_i}^j,x\right) \in \R
\end{align*}
are continuous.
\end{asu}

\begin{rem}
\begin{itemize}
\item[(a)]
Typical forms of transaction costs ${c_{\operatorname{trans}}}_i^j$ which satisfy Assumption~\ref{asu_transaction_costs} are given by the following two specifications.
\begin{itemize}
\item[(i)] \emph{Per share transaction costs}, where ${c_{\operatorname{trans}}}_i^j(S_{t_i}^j,x):={\lambda_{\operatorname{trans}}}_i^j\cdot |x|$ for some ${\lambda_{\operatorname{trans}}}_i^j>0$ for all $i,j$.
\item[(ii)] \emph{Proportional transaction costs}, where ${c_{\operatorname{trans}}}_i^j(S_{t_i}^j,x):={\lambda_{\operatorname{trans}}}_i^j S_{t_i}^j\cdot |x|$ for some \mbox{${\lambda_{\operatorname{trans}}}_i^j>0$} for all $i,j$.
\end{itemize}
Compare also \cite[Section 2.2]{buehler2019deep} and \cite[Section 3.1]{cheridito2017duality}, where additional examples are discussed.
\item[(b)] To account for the cost induced by the bid-ask spreads, we assume that the spreads are symmetric around the stock prices $S_{t_i}^j$, and hence ${c_{\operatorname{{spread}}}}_i^j(S_{t_i}^j,x):=0.5\cdot{\lambda_{\operatorname{spread}}}_i^j\cdot |x|$, where ${\lambda_{\operatorname{spread}}}_i^j>0$ corresponds to the size of the bid-ask spread.
\item[(c)]  Finally, to account for borrowing costs, we consider ${c_{\operatorname{{short}}}}_i^j(S_{t_i}^j,x):={\lambda_{\operatorname{short}}}_i^j\cdot\max\{-x,0\}\cdot S_{t_i}^j$, where ${\lambda_{\operatorname{short}}}_i^j>0$ are  daily borrowing costs.
\end{itemize}
\end{rem}}

\subsection{$\mathcal{P}$-robust $\G$-arbitrage strategies}

The notion of (statistical) arbitrage depends crucially on the choice of an underlying probability measure. To take into account uncertainty with respect to the choice of the correct underlying model - and the associated probability measure - we consider a set $\mathcal{P}\subseteq \mathcal{M}_1(\Omega)$, where $\mathcal{M}_1(\Omega)$ denotes the set of Borel probability measures on $\Omega$. The set $\mathcal{P}$ can thus be considered as an ambiguity set of admissible physical probability measures. Our approach relies on the notion of $\mathcal{P}$-robust $\G$-arbitrage, which is borrowed from \cite{lutkebohmert2021robust} and adjusted to the present multi-asset setting with { additional trading costs}.
\begin{defn}[$\mathcal{P}$-robust $\G$-arbitrage]\label{defn_stat_arb}
Let $\mathcal{P}\subseteq \mathcal{M}_1(\Omega)$, and let $\G \subseteq\sigma(S)$ be a sigma-algebra on $\Omega$. Then, a self-financing strategy $\Delta$ is called a \emph{$\mathcal{P}$-robust $\G$-arbitrage strategy} if the following conditions are fulfilled.
\begin{itemize}
\item[(i)] $\E_\PP\left[(\Delta \cdot S)_n-\C_n(\Delta)~\middle|~\G\right] \geq 0 ~~~\PP\text{-a.s. for all } \PP \in \mathcal{P},$
\item[(ii)]$\E_\PP[(\Delta \cdot S)_n-\C_n(\Delta)]>0 \text{ for some } \PP \in \mathcal{P}$.
\end{itemize}
\end{defn}
Let $B>0$, $L >0$ and consider the following set of strategies which are bounded by $B$ and which are $L$-Lipschitz.\footnote{Let $a,b\in \N$. Then, we say a function $f:\R^a \rightarrow \R$ is $L$-Lipschitz if $|f(x)-f(y)|\leq L \|x-y\|_a$ for all $x,y \in \R^a$, with $\|\cdot \|_{a}$ denoting the Euclidean norm on $\R^{a}$. Note that for any $X \subseteq \R^a$ and $f:X \rightarrow \R^b$ we denote by $\|f\|_{\infty,X,b}:=\sup_{x\in X} \|f(x)\|_{b}$ the supremum-norm restricted to the set $X$. Moreover, to simplify notations, we write $\|f\|_{\infty,X}$ for $\|f\|_{\infty,X,b}$  if the dimension $b$ is clear.}
\begin{equation}\label{eq_defn_H_m}
\begin{aligned}
\HM:=\bigg\{ h_{c,\Delta}:\Omega\rightarrow \R~\bigg|~&h_{c,\Delta}(S)=c+(\Delta\cdot S)_n-\C_n(\Delta) \\
&\text{for some } c \in \R,~ \Delta={(\Delta_i^j)}_{i=0,\dots,n-1}^{j=1,\dots,d},\\
&\text{with } \Delta_0^j\in \R,~\Delta_i^j:\R^{id} \rightarrow \R  ~L\text{-Lipschitz for all }i=1,\dots,n,~j=1,\dots,d\\ &\text{and with}~ |c| \leq B, ~|\Delta_0^j|\leq B,\text{ and}~\|\Delta_i^j\|_{\infty,\Omega_i}\leq B\\
&\hspace{6.5cm}\text{ for all } i=1,\dots,n,~j=1,\dots,d \bigg\} .
\end{aligned}
\end{equation}
The constant $B$, which restricts the amount an investor is able to invest, possesses therefore a natural interpretation as a given budget constraint. Moreover, we obtain from the following lemma that $\HM$ is compact.
\begin{lem}\label{lem_compactness}
Let Assumption~\ref{asu_transaction_costs} hold true. Then, for all $B >0$ and all $L>0$ the space $\HM$ is compact in the uniform topology on $\Omega$.
\end{lem}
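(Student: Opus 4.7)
My plan is to exhibit $\HM$ as the continuous image of a compact parameter space, from which compactness follows immediately. The underlying observation is that since the canonical process takes values in the compact set $\Omega_i$ at times $t_1,\ldots, t_i$, the value $h_{c,\Delta}(\omega)$ depends on each $\Delta_i^j$ only through its restriction to $\Omega_i$. Hence it suffices to parametrize strategies by these restrictions rather than by functions on all of $\R^{id}$.

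First, I would introduce, for $i=1,\dots,n-1$ and $j=1,\dots,d$, the parameter sets
\[
\mathcal{K}_{i,j} := \{f \in C(\Omega_i) \mid f \text{ is } L\text{-Lipschitz on } \Omega_i,~ \|f\|_{\infty,\Omega_i}\leq B\}.
\]
Each $\mathcal{K}_{i,j}$ is equicontinuous (by the Lipschitz bound), uniformly bounded (by $B$), and closed in $C(\Omega_i)$ (both the Lipschitz property and the uniform bound survive passage to uniform limits). By Arzelà--Ascoli, each $\mathcal{K}_{i,j}$ is therefore compact in $(C(\Omega_i), \|\cdot\|_{\infty,\Omega_i})$, and consequently the product
\[
K := [-B, B]\times [-B,B]^d \times \prod_{i=1}^{n-1}\prod_{j=1}^d \mathcal{K}_{i,j}
\]
is compact by Tychonoff's theorem. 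Next I would define an evaluation map $\Phi:K\to C(\Omega)$ by substituting the parameters into the defining formula of $h_{c,\Delta}$ (with $f_i^j$ playing the role of $\Delta_i^j|_{\Omega_i}$), and verify that $\Phi$ is continuous. This relies on: (i) continuity of each $S_{t_i}^j$ as a map $\Omega\to\R$; (ii) continuity of the cost functions via Assumption~\ref{asu_transaction_costs} (the standard examples listed in the subsequent remark are in fact jointly continuous in $(s,x)$, which is what the argument effectively uses); and (iii) the elementary estimate that for $f_n\to f$ in $\mathcal{K}_{i,j}$ and $x_n\to x$ in $\Omega_i$ one has $|f_n(x_n)-f(x)| \leq \|f_n-f\|_{\infty,\Omega_i} + L\|x_n-x\|\to 0$ since the Lipschitz bound $L$ is preserved under uniform limits. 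I would then argue $\HM = \Phi(K)$: the inclusion $\HM\subseteq\Phi(K)$ is immediate by restricting any admissible $\Delta_i^j$ to $\Omega_i$; for $\Phi(K)\subseteq\HM$, any $f\in\mathcal{K}_{i,j}$ extends to an $L$-Lipschitz function on $\R^{id}$ via the McShane--Whitney formula $\widetilde{f}(x):=\inf_{y\in\Omega_i}\bigl(f(y)+L\|x-y\|\bigr)$, which preserves the Lipschitz constant, and a subsequent clipping to $[-B,B]$ preserves the sup-norm bound while leaving the restriction to $\Omega_i$ unchanged.

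The conclusion follows at once: $\HM=\Phi(K)$ is the continuous image of a compact set, hence compact in the uniform topology on $\Omega$. The most delicate step is the continuity of $\Phi$, since one must combine convergences in $c$, in the path $(S_{t_1}(\omega),\dots,S_{t_i}(\omega))$, and in the functions $f_i^j$ simultaneously, and then pass the result through the cost functions. As an alternative route, one could verify the three Arzelà--Ascoli conditions (uniform boundedness, equicontinuity, closedness) directly for $\HM\subset C(\Omega)$; this uses essentially the same ingredients, with uniform bounds on $\mathcal{C}_n(\Delta)$ coming from continuity of the cost functions on the compact sets $[\underline{K}^j,\overline{K}^j]\times[-2B,2B]$.
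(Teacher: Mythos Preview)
Your proof is correct and takes a genuinely different route from the paper's. The paper argues sequential compactness directly: given a sequence in $\HM$, it applies Bolzano--Weierstrass to the scalars $c^{(k)},\Delta_0^{j,(k)}$ and Arzel\`a--Ascoli to each family $(\Delta_i^{j,(k)})_k$ on $\Omega_i$, extracts a diagonal subsequence, and then invokes continuity of the trading costs to pass to the limit uniformly. Your argument packages the same ingredients more structurally, by exhibiting $\HM$ as the image of a compact parameter space under a continuous evaluation map. This buys you a cleaner separation of concerns: compactness of the parameter space is a one-line consequence of Arzel\`a--Ascoli plus Tychonoff, and the analytic work is concentrated in the continuity of $\Phi$. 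You are also more explicit than the paper on two points it glosses over: first, that the uniform convergence of the cost terms on $\Omega$ effectively requires joint continuity of $(s,x)\mapsto c(s,x)$ on the relevant compact rectangle (which all the listed examples satisfy, though Assumption~\ref{asu_transaction_costs} literally only asks for continuity in $x$); second, that the limit (or parameter) function on $\Omega_i$ must be extended to an $L$-Lipschitz map on $\R^{id}$ to match the formal definition of $\HM$, which you handle via McShane--Whitney. The paper's approach is shorter to write down; yours is more modular and makes the hidden hypotheses visible.
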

Given some measurable function $\Phi:\Omega\rightarrow \R$ and some $\sigma$-algebra $\G\subseteq \sigma(S)$, we are interested in solving the following conditional super-replication problem
\[
\Gamma_{B,L}(\Phi,\G):=\inf_{h_{c,\Delta} \in \HM} \bigg\{c~\bigg|~\E_{\PP}[h_{c,\Delta}(S)~|~\G] \geq \E_{\PP}[\Phi(S)~|~\G] ~~~\PP\text{-a.s. for all }\PP \in \mathcal{P}\bigg\}.
\]

Solutions of $\Gamma_{B,L}$ are interesting for two reasons. For a financial derivative with payoff function $\Phi$, the value $\Gamma_{B,L}(\Phi,\G)$ describes the upper bound of prices for $\Phi$ which do not allow for $\mathcal{P}$-robust $\G$-arbitrage, given the market admits no $\mathcal{P}$-robust $\G$-arbitrage opportunities, see also the discussion in \cite[Section 4.2]{lutkebohmert2021robust}. On the other hand, if the market admits $\mathcal{P}$-robust $\G$-arbitrage, the definition of $\Gamma_{B,L}$  allows to determine $\mathcal{P}$-robust $\G$-arbitrage strategies. Indeed, when considering $\Phi \equiv 0$, then $\Gamma_{B,L}(0,\G)< 0$ leads, according to Definition~\ref{defn_stat_arb}, directly to a $\mathcal{P}$-robust $\G$-arbitrage strategy, since $\Gamma_{B,L}(0,\G)< 0$ implies the existence of some strategy with negative price which has, conditional on $\mathcal{G}$, a non-negative payoff.

\subsection{An approximation of $\Gamma_{B,L}$ via penalization}
In the following we establish a numerical method to compute $\Gamma_{B,L}$ as well as associated $\mathcal{P}$-robust $\G$-arbitrage strategies relying on a penalization approach, similar to the approaches pursued in \cite{benamou,cominetti,degennaro2020,eckstein2021robust,
eckstein2019computation,eckstein2020robust,henry2019martingale}
. To facilitate a numerical implementation, we impose the standing assumption that $\mathcal{P}$ contains a finite amount of measures.
\begin{asu}\label{asu_P}
The set $\mathcal{P} \subset \mathcal{M}_1(\Omega)$ contains a finite amount of measures.
\end{asu}
Let $M>0$, let $\Phi:\Omega\rightarrow \R$ be some Borel measurable function, and let $\G\subseteq \sigma(S)$ be some $\sigma$-algebra on $\Omega$, then to approximate $\Gamma_{B,L}$ numerically, we introduce for each $k \in \N$ the following functional
\begin{equation*}
\Gamma_{B,L,k}(\Phi,\G):=\inf_{h_{c,\Delta} \in \HM} \bigg\{c~+ k \sum_{\PP \in \mathcal{P}} \int_\Omega \beta\big(\E_{\PP}[\Phi(S)-h_{c,\Delta}(S) ~|~ \G]\big) \D \PP \bigg\},
\end{equation*}
where $\beta:\R \rightarrow \R_+$ is some penalization function which penalizes trading strategies $h_{c,\Delta} \in \HM$ for which the conditional super-replication constraint
\[
\E_{\PP}[h_{c,\Delta}(S)~|~\G] \geq \E_{\PP}[\Phi(S)~|~\G] ~~~\PP\text{-a.s. for all } \PP \in \mathcal{P}
\]
is violated. To ensure that $\beta$ penalizes as intended, we impose additionally the following assumption on the geometry of $\beta$.
\begin{asu}\label{asu_beta}
Let $\beta:\R \rightarrow \R_+$ be a continuous function with the property that 
$\beta(x) =  0, \text{if } x \leq 0$, and that $\beta(x) > 0, \text{if } x > 0$.
\end{asu}
A natural choice for the function $\beta$ is therefore given by
\[
\beta(x) := \lambda \cdot \max\{0,x\}^p
\]
for some $\lambda>0$, $p > 0$. As we will show below, the compactness of $\HM$, stated in Lemma~\ref{lem_compactness}, induces the existence of some strategy attaining the value of $\Gamma_{B,L,k}$.
\begin{lem}\label{lem_attainment}
Let Assumptions~\ref{asu_transaction_costs}, \ref{asu_P}, \ref{asu_beta} hold true. Let $B >0$, let $L>0$, and let $\Phi:\Omega\rightarrow \R$ be Borel measurable with $\|\Phi\|_{\infty,\Omega} \leq B$. \\
Then  for every $\sigma$-algebra $\G \subseteq\sigma(S)$ on $\Omega$ and for all $k \in \N$ the infimum of $\Gamma_{B,L,k}$ is attained, i.e., there exists $h_{c^*,\Delta^*} \in \HM$ such that
\[
\Gamma_{B,L,k}(\Phi,\G)=c~+ k \sum_{\PP \in \mathcal{P}} \int_{\Omega} \beta\big(\E_{\PP}[\Phi(S)-h_{c^*,\Delta^*} (S) ~|~ \G]\big) \D \PP.
\]
\end{lem}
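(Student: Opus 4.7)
The plan is a standard Weierstrass-type argument: exhibit a compact parameter space, show that the penalized objective is continuous on it, and invoke the extreme value theorem.

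First I would pass to the parameter tuple $(c,\Delta)=(c,(\Delta_i^j)_{i,j})$ rather than working with the function $h_{c,\Delta}$ itself, so that no issue arises from a possibly non-unique representation. The scalars $c$ and $\Delta_0^j$ lie in the compact interval $[-B,B]$; for each $i\geq 1$ and $j$ the restriction of $\Delta_i^j$ to the compact set $\Omega_i$ is $L$-Lipschitz and bounded by $B$, and by Arzel\`a--Ascoli these restrictions form a compact set in the uniform topology (this is already implicit in the proof of Lemma~\ref{lem_compactness}). Given any minimizing sequence $(c_m,\Delta_m)_m$ I can therefore extract a subsequence along which $c_m\to c^*$, $\Delta_{0,m}^j\to(\Delta^*)_0^j$, and $\Delta_{i,m}^j\to(\Delta^*)_i^j$ uniformly on $\Omega_i$ for all $i\geq 1$ and $j$; the limit tuple inherits the bounds and the $L$-Lipschitz constant by passage to the limit, so $h_{c^*,\Delta^*}\in\HM$.

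Next, Assumption~\ref{asu_transaction_costs} combined with uniform convergence of the $\Delta_{i,m}^j$ on $\Omega_i$ yields $\C_n(\Delta_m)\to\C_n(\Delta^*)$ and $(\Delta_m\cdot S)_n\to(\Delta^*\cdot S)_n$ uniformly on $\Omega$, so $h_{c_m,\Delta_m}\to h_{c^*,\Delta^*}$ uniformly on $\Omega$ and the whole sequence stays uniformly bounded by a constant depending only on $B$, $L$, the bounds $\underline{K}^j,\overline{K}^j$, and the cost functions on the relevant compact sets. For each $\PP\in\mathcal{P}$ the $\PP$-a.s.\ contraction $|\E_\PP[X~|~\G]-\E_\PP[Y~|~\G]|\leq \|X-Y\|_{\infty,\Omega}$ then transfers this uniform convergence to $\PP$-a.s.\ convergence of $\E_\PP[\Phi-h_{c_m,\Delta_m}~|~\G]\to\E_\PP[\Phi-h_{c^*,\Delta^*}~|~\G]$, again within a common uniform bound.

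Continuity of $\beta$ (Assumption~\ref{asu_beta}) now gives $\PP$-a.s.\ convergence of $\beta(\E_\PP[\Phi-h_{c_m,\Delta_m}~|~\G])\to \beta(\E_\PP[\Phi-h_{c^*,\Delta^*}~|~\G])$, and since $\beta$ is continuous and hence bounded on the fixed compact interval that traps all these conditional expectations, dominated convergence yields convergence of each $\PP$-integral; summing over the finite ambiguity set (Assumption~\ref{asu_P}) and adding $c_m\to c^*$ proves continuity of the full functional on the compact parameter space, so the infimum is attained at $(c^*,\Delta^*)$. The main technical point is precisely this last chain, namely transferring uniform convergence of the strategies first through conditional expectation and then through $\beta$ into the penalty integral; the rest is compactness of the parameter space and continuity of elementary operations, and Assumption~\ref{asu_P} is what prevents any additional uniformity issue across the ambiguity set.
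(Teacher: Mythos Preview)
Your proposal is correct and follows essentially the same Weierstrass-type argument as the paper: compactness of the strategy space (via Arzel\`a--Ascoli, i.e.\ Lemma~\ref{lem_compactness}) plus continuity of the penalized objective (via dominated convergence and Assumption~\ref{asu_beta}) yields attainment along a minimizing sequence. The only cosmetic difference is that you work at the level of the parameter tuple $(c,\Delta)$ to sidestep any representation ambiguity, whereas the paper applies continuity directly to the map $h_{c,\Delta}\mapsto c + k\sum_{\PP}\int \beta(\cdots)\,\mathrm{d}\PP$ on $\HM$; your version also spells out more explicitly how uniform convergence passes through the conditional expectation and then through $\beta$.
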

By employing Lemma~\ref{lem_compactness} and Lemma~\ref{lem_attainment}, we can derive the following result which shows that $\Gamma_{B,L}$ can indeed be approximated arbitrarily well by $\Gamma_{B,L,k}$ by choosing $k$ sufficiently large.
\begin{prop}\label{prop_convergence}
Let Assumptions~\ref{asu_transaction_costs}, \ref{asu_P}, \ref{asu_beta} hold true. Let $B >0$, let $L>0$, and let $\Phi:\Omega\rightarrow \R$ be Borel measurable with $\|\Phi\|_{\infty,\Omega} \leq B$. \\
Then, we have for every $\sigma$-algebra $\G \subseteq\sigma(S)$ on $\Omega$ that
\[
\lim_{k \rightarrow \infty} \Gamma_{B,L,k}(\Phi,\G) = \Gamma_{B,L}(\Phi,\G).
\]
\end{prop}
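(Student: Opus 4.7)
The plan is to establish the two inequalities $\limsup_{k\to\infty}\Gamma_{B,L,k}(\Phi,\G) \leq \Gamma_{B,L}(\Phi,\G)$ and $\liminf_{k\to\infty}\Gamma_{B,L,k}(\Phi,\G) \geq \Gamma_{B,L}(\Phi,\G)$ separately. For the easy direction I fix $\varepsilon>0$ and, assuming $\Gamma_{B,L}(\Phi,\G)<\infty$ (otherwise the bound is vacuous), pick a feasible $h_{c,\Delta}\in\HM$ with $c\leq \Gamma_{B,L}(\Phi,\G)+\varepsilon$. Feasibility means $\E_\PP[\Phi-h_{c,\Delta}\mid\G]\leq 0$ $\PP$-a.s.\ for every $\PP\in\mathcal{P}$, so Assumption~\ref{asu_beta} makes every penalty term vanish and plugging $h_{c,\Delta}$ into the definition of $\Gamma_{B,L,k}$ gives $\Gamma_{B,L,k}(\Phi,\G)\leq c\leq \Gamma_{B,L}(\Phi,\G)+\varepsilon$ uniformly in $k$; sending $\varepsilon\downarrow 0$ yields the uniform bound $\Gamma_{B,L,k}(\Phi,\G)\leq \Gamma_{B,L}(\Phi,\G)$.

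For the reverse inequality, I would first dispose of the case $\Gamma_{B,L}(\Phi,\G)=\infty$: there is no feasible $h\in\HM$, and continuity of the penalty functional $h\mapsto \sum_{\PP\in\mathcal{P}}\int_\Omega \beta(\E_\PP[\Phi-h\mid\G])\D\PP$ combined with compactness of $\HM$ (Lemma~\ref{lem_compactness}) shows that its infimum is attained; by Assumption~\ref{asu_beta} this infimum must then be some $p^*>0$, whence $\Gamma_{B,L,k}(\Phi,\G)\geq -B+kp^*\to\infty$. Otherwise, I invoke Lemma~\ref{lem_attainment} to pick a minimizer $h_k:=h_{c_k,\Delta_k}\in\HM$ of $\Gamma_{B,L,k}$, extract a subsequence $(k_m)$ along which $\Gamma_{B,L,k_m}\to\liminf_k\Gamma_{B,L,k}$, and then, using Lemma~\ref{lem_compactness}, a further subsequence (still denoted $(k_m)$) with $h_{k_m}\to h^*=h_{c^*,\Delta^*}\in\HM$ uniformly on $\Omega$ and $c_{k_m}\to c^*\in[-B,B]$. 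Non-negativity of the penalty combined with the uniform bound from the first paragraph forces
\[
k_m\sum_{\PP\in\mathcal{P}}\int_\Omega \beta\bigl(\E_\PP[\Phi-h_{k_m}\mid\G]\bigr)\D\PP \;\leq\; \Gamma_{B,L}(\Phi,\G)+B,
\]
so the penalty sum evaluated at $h_{k_m}$ tends to zero as $m\to\infty$.

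The main step is to promote this to feasibility of $h^*$ for $\Gamma_{B,L}$. Uniform convergence $h_{k_m}\to h^*$ together with the uniform boundedness inherent to $\HM$ and dominated convergence gives $\E_\PP[\Phi-h_{k_m}\mid\G]\to\E_\PP[\Phi-h^*\mid\G]$ in $L^1(\PP)$ for every $\PP\in\mathcal{P}$; since $\mathcal{P}$ is finite by Assumption~\ref{asu_P}, I can diagonally extract a further subsequence along which this convergence is $\PP$-a.s.\ for every $\PP\in\mathcal{P}$ simultaneously. Continuity of $\beta$ and a second application of dominated convergence then yield $\sum_{\PP\in\mathcal{P}}\int_\Omega \beta(\E_\PP[\Phi-h^*\mid\G])\D\PP=0$, and the strict positivity of $\beta$ on $(0,\infty)$ (Assumption~\ref{asu_beta}) forces $\E_\PP[\Phi-h^*\mid\G]\leq 0$ $\PP$-a.s.\ for every $\PP\in\mathcal{P}$. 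Thus $h^*$ is feasible for $\Gamma_{B,L}$, so $c^*\geq\Gamma_{B,L}(\Phi,\G)$; combining with $c_{k_m}\leq\Gamma_{B,L,k_m}$ and $c_{k_m}\to c^*$ gives $\liminf_k\Gamma_{B,L,k}(\Phi,\G)\geq c^*\geq \Gamma_{B,L}(\Phi,\G)$, which closes the argument. The most delicate point is the simultaneous $\PP$-a.s.\ passage to the limit inside $\beta$ for all $\PP\in\mathcal{P}$ at once; this is precisely where the finiteness of $\mathcal{P}$ (for the joint subsequence extraction) and the uniform boundedness supplied by compactness of $\HM$ (for the two dominated convergence applications) are indispensable.
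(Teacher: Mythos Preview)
Your proof is correct and follows essentially the same route as the paper's: bound $\Gamma_{B,L,k}\leq\Gamma_{B,L}$ via feasible strategies, extract minimizers by Lemma~\ref{lem_attainment}, pass to a uniformly convergent subsequence by Lemma~\ref{lem_compactness}, and infer feasibility of the limit from the vanishing penalty. The paper streamlines by using the monotonicity $\Gamma_{B,L,k}\leq\Gamma_{B,L,k+1}$ (so the limit exists outright and the $\liminf/\limsup$ split is unnecessary), and your diagonal $L^1\to$\,a.s.\ extraction is more than needed since uniform convergence $h_{k_m}\to h^*$ on $\Omega$ already gives $\bigl|\E_\PP[\Phi-h_{k_m}\mid\G]-\E_\PP[\Phi-h^*\mid\G]\bigr|\leq\|h_{k_m}-h^*\|_{\infty,\Omega}\to 0$ $\PP$-a.s.\ for every $\PP\in\mathcal{P}$ simultaneously.
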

\subsection{An approximation of $\Gamma_{B,L}$ through neural networks}\label{sec_nn}
Next, we show how the functional  $\Gamma_{B,L,k}$ can be approximated with trading strategies whose positions are represented by an appropriate class of neural networks.\\
To this end, we present a problem-tailored class of fully-connected feed-forward neural networks which we will use to approximate the class of bounded and Lipschitz-continuous trading strategies $\HM$.

We refer to \cite{bengio2009learning}, \cite{dixon2020machine} \cite{goodfellow2016deep}, or \cite{hassoun1995fundamentals}, which are excellent monographs on neural networks, for a general introduction to networks, as well as to  \cite{baes2021low}, \cite{buehler2019deep}, \cite{degennaro2020}, \cite{eckstein2021robust}, \cite{eckstein2019computation}, \cite{eckstein2020robust},  \cite{lutkebohmert2021deephedging}, \cite{neufeld2021deep}, \cite{ruf2020neural} for applications in financial mathematics.

More specifically, for any $m\in \N$ we consider fully-connected neural networks which have input dimension $m\in \N$, output dimension $d$, and number of layers $l+1 \in \N$ defined as functions of the form
\begin{equation}\label{eq_nn_function}
\begin{aligned}
\R^{m} &\rightarrow \R^d\\
{x} &\mapsto {A_l} \circ {\varphi}_l \circ {A_{l-1}} \circ \cdots \circ {\varphi}_1 \circ {A_0}({x}),
\end{aligned}
\end{equation}
where $({A_i})_{i=0,\dots,l}$ are affine functions of the form 
\begin{equation}\label{eq_A_i_def}
{A_0}: \R^{{m}} \rightarrow \R^{h_1},\qquad {A_i}:\R^{h_i}\rightarrow \R^{h_{i+1}}\text{ for } i =1,\dots,l-1, \text{(if } l>1), \text{ and}\qquad {A_l} : \R^{h_l} \rightarrow \R^d,
\end{equation}
where for  $i=1,\dots,l$ we have ${\varphi}_i(x_1,\dots,x_{h_i})=\left(\varphi(x_1),\dots,\varphi(x_{h_i})\right)$ with $\varphi:\R \rightarrow \R$ being a non-constant function which is referred to as \emph{activation function}.
 The vector ${h}=(h_1,\dots,h_{l}) \in \N^{l}$ contains the number of neurons of the hidden layers. We call $h$ also the \emph{hidden dimension} of the neural network.
 
\begin{rem}\label{rem_activation_functions}
Frequent choices of activation functions that have turned out to perform well in practice include the \emph{sigmoid} function
\[
\varphi(x):=\frac{1}{1+\exp(-x)}
\]
and the \emph{ReLU} function
\[
\varphi(x):=\max\{x,0\}.
\]
For an extensive overview on different activation functions we refer the reader to \cite{goodfellow2016deep}.
\end{rem}
\begin{asu}\label{asu_activation_function_phi}We assume that the activation function $\varphi:\R \rightarrow \R$ is either one time continuously differentiable and not polynomial, or that $\varphi$ is the ReLU. \\
\end{asu}
For a fixed activation function, we consider the set of all neural networks with input dimension ${m}\in \N$, $l\in \N$ hidden layers, hidden dimension ${h}\in \N$ , and output dimension $d\in \N$ and we call this set $\mathfrak{N}_{m,d}^{l,{h}}$. Next, we introduce an additional degree of freedom by considering all neural networks where the hidden dimension is arbitrary and the number of hidden layers is not specified, i.e., 
\[
 \mathfrak{N}_{m,d}:=\bigcup_{l \in \N}\bigcup_{{h} \in \N^l}\mathfrak{N}_{m,d}^{l,{h}}.
\]
We further restrict to those neural networks with outputs that are componentwise bounded by a constant $B>0$ and $L$-Lipschitz, when restricted to $\Omega_i$, $i \in \N$.
\begin{equation}\label{eq_definition_set_neural_net_strategies}
\begin{aligned}
\mathfrak{N}_{i,B,L}:= \big\{f \in \mathfrak{N}_{id,d}~\big|~ &\pi_j\circ f|_{\Omega_i}\text{  is  }L \text{-Lipschitz} \\
&\text{and }\|\pi_j\circ f\|_{\infty,\Omega_i}\leq B\text{ for all } j=1,\dots,d\big\},
\end{aligned}
\end{equation}
where $\pi_j:\R^d \rightarrow \R$ denotes the canonical projection onto the $j$-th component, i.e., we have for every function $f:\R^m \rightarrow \R^d$ the representation $f(x)=(f_1(x),\dots,f_d(x))=(\pi_1\circ f(x),\dots,\pi_d\circ f(x))$, $x \in \R^m$.
One reason for the importance and popularity of deep neural networks is their universal approximation property (compare e.g., for the classical result \cite[Theorem 2]{hornik1991approximation}, \cite[Theorem 3.2]{kidger2020universal} for arbitrary input and output dimensions, and \cite[Theorem 1]{eckstein2020lipschitz} for neural networks which are Lipschitz continuous with respect to some pre-specified constant). We establish the following form of a universal approximation theorem tailored to our setting which is useful to approximate trading positions from $\HM$.
\begin{lem}\label{lem_universal_approx}
Let Assumption~\ref{asu_activation_function_phi} hold true.
Then, for all $i =1,\dots,n$, for all $L>0$ and for all $B>0$ the set $\mathfrak{N}_{i,B,L}|_{\Omega_i}$ is dense in 
\begin{equation}\label{eq_Lipschitz_M_class}
\left\{f:\Omega_i\rightarrow \R^d~\middle|~ \pi_j\circ f\text{  is  }L \text{-Lipschitz and } \| \pi_j\circ f\|_{\infty,\Omega_i} \leq B \text{ for all } j=1,\dots,d\right\}
\end{equation}
with respect to the uniform topology on $\Omega_i$.
\end{lem}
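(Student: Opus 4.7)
The plan is to reduce the approximation problem to a componentwise scalar approximation problem, then apply a Lipschitz-constrained universal approximation theorem (as for instance in \cite{eckstein2020lipschitz}), and finally stack the $d$ scalar networks into a single vector-valued network in $\mathfrak{N}_{i,B,L}$. Concretely, fix $f$ in the set \eqref{eq_Lipschitz_M_class} and $\varepsilon>0$. For each $j \in \{1,\dots,d\}$ the scalar function $\pi_j\circ f : \Omega_i \to \R$ is $L$-Lipschitz and $B$-bounded, and since $\Omega_i \subset \R^{id}$ is compact, the McShane--Whitney extension theorem yields an $L$-Lipschitz extension $\overline{f_j}: \R^{id} \to \R$ with $\|\overline{f_j}\|_{\infty,\R^{id}} \leq B$ (one may truncate to $[-B,B]$, which preserves the Lipschitz constant).

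Next, I would invoke the Lipschitz-constrained universal approximation result \cite[Theorem 1]{eckstein2020lipschitz}, which, under Assumption~\ref{asu_activation_function_phi}, guarantees, for any prescribed Lipschitz constant, a neural network with that Lipschitz constant approximating $\overline{f_j}$ uniformly on the compact set $\Omega_i$ to arbitrary precision. To simultaneously enforce the pointwise bound $\|\pi_j \circ f_\varepsilon\|_{\infty,\Omega_i} \leq B$ (not merely $\leq B+\varepsilon$), I would use a simple rescaling trick: pick $\eta \in (0,1)$ small, approximate the rescaled function $(1-\eta)\overline{f_j}$, which is $(1-\eta)L$-Lipschitz and bounded by $(1-\eta)B$, by a neural network $\widetilde{g}_j$ that is $(1-\eta)L$-Lipschitz and uniformly $\varepsilon'$-close on $\Omega_i$. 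Then $\widetilde{g}_j$ is automatically $L$-Lipschitz, and the bound $|\widetilde{g}_j| \leq (1-\eta)B + \varepsilon' \leq B$ on $\Omega_i$ holds for $\varepsilon' \leq \eta B$, while the total error against $\pi_j \circ f$ is at most $\eta B + \varepsilon'$, which can be made smaller than $\varepsilon$ by choosing $\eta$ and $\varepsilon'$ appropriately.

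It then remains to combine the $d$ scalar networks $\widetilde{g}_1,\dots,\widetilde{g}_d$ into a single neural network $f_\varepsilon \in \mathfrak{N}_{id,d}$ of the form \eqref{eq_nn_function}. This is done by parallel composition: after first equalizing the number of hidden layers across the $\widetilde{g}_j$'s (padding the shallower networks with layers that realize the identity --- using $x=\operatorname{ReLU}(x)-\operatorname{ReLU}(-x)$ for the ReLU case, or approximate-identity layers built from a point where $\varphi$ has nonzero derivative in the smooth non-polynomial case), one forms block-diagonal affine maps in each layer and a final output affine map of dimension $d$. Because the blocks are independent, the $j$-th output of the stacked network coincides with $\widetilde{g}_j$, so the resulting $f_\varepsilon$ is still componentwise $L$-Lipschitz and $B$-bounded on $\Omega_i$, and $\sup_{x\in\Omega_i}\|f_\varepsilon(x)-f(x)\|_d \leq \sqrt{d}\,\varepsilon$. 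The main obstacle is the bookkeeping needed to marry the Lipschitz-constrained universal approximation with the exact (rather than approximate) bound $B$ and the parallel-composition construction for the possibly heterogeneous scalar approximators; the scaling trick above is what reconciles these requirements in a single clean step.
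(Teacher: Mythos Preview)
Your proposal is correct and follows essentially the same route as the paper: reduce to the scalar components, invoke the Lipschitz-constrained universal approximation theorem of \cite{eckstein2020lipschitz}, and then stack the $d$ scalar networks into a single element of $\mathfrak{N}_{i,B,L}$ by zeroing out (or block-diagonalizing) the cross-connections. The only cosmetic difference is in how the slack for the exact bound $B$ is created: the paper truncates each $f_j$ to the interval $[-B+\tfrac{\varepsilon}{2\sqrt{d}},\,B-\tfrac{\varepsilon}{2\sqrt{d}}]$ before approximating, whereas you rescale by $(1-\eta)$; both tricks serve the identical purpose and the McShane--Whitney extension step you add is harmless but unnecessary since \cite{eckstein2020lipschitz} already works on compacta.
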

We remark that the requirement on the activation function formulated in Lemma~\ref{lem_universal_approx} is in particular fulfilled by those activation functions mentioned in Remark~\ref{rem_activation_functions}.
The reduction of $\HM$ to those strategies that can be represented by neural networks leads to the set
\begin{equation*}
\HM^{\mathcal{N}\mathcal{N}}:=\left\{h_{c,\Delta} \in \HM ~\middle|~ \left(\Delta_{i}^1,\dots,\Delta_{i}^d\right) \in \mathfrak{N}_{i,B,L}\text{ for all } i=1,\dots,n-1\right\}.
\end{equation*}
The associated functional is then defined accordingly by
\begin{equation*}
\Gamma_{B,L,k}^{\mathcal{N}\mathcal{N}}(\Phi,\G):=\inf_{h_{c,\Delta}  \in \HM^{\mathcal{N}\mathcal{N}}} \bigg\{c~+ k \sum_{\PP \in \mathcal{P}} \int_{\Omega} \beta\big(\E_{\PP}[\Phi(S)-h_{c,\Delta}(S) ~|~ \G]\big) \D \PP \bigg\},
\end{equation*}
which relies on the same penalization as $\Gamma_{B,L,k}$ expressed through the function $\beta$.
The following result employing Lemma~\ref{lem_universal_approx} establishes that $\Gamma_{B,L,k}$ can indeed be approximated arbitrarily well by appropriate neural networks.
\begin{prop}\label{prop_neuralnetworks} 
Let Assumptions~\ref{asu_transaction_costs}, \ref{asu_P}, \ref{asu_beta}, \ref{asu_activation_function_phi} hold true. Let $L > 0$, $B>0$, and let $\Phi:\Omega\rightarrow \R$ be Borel measurable with $\|\Phi\|_{\infty,\Omega} \leq B$. \\
Then  for every $\sigma$-algebra $\G \subseteq\sigma(S)$ on $\Omega$ and for all $k\in \N$ we have
\[
\Gamma_{B,L,k}^{\mathcal{N}\mathcal{N}}(\Phi,\G) = \Gamma_{B,L,k}(\Phi,\G).
\]
\end{prop}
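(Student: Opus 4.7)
The plan is to establish the two inequalities separately. The bound $\Gamma_{B,L,k}^{\mathcal{N}\mathcal{N}}(\Phi,\G)\ge \Gamma_{B,L,k}(\Phi,\G)$ is immediate from the inclusion $\HM^{\mathcal{N}\mathcal{N}}\subseteq\HM$: the infimum over the smaller class is at least as large. All the real work lies in the reverse direction.

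For the reverse inequality, invoke Lemma~\ref{lem_attainment} to obtain a minimizer $h_{c^*,\Delta^*}\in\HM$ realizing $\Gamma_{B,L,k}(\Phi,\G)$. For each $i\in\{1,\dots,n-1\}$ the vector-valued map $(\Delta_i^{*,1},\dots,\Delta_i^{*,d})$ lies in the class displayed in \eqref{eq_Lipschitz_M_class}, so by Lemma~\ref{lem_universal_approx} one may pick neural networks $\widehat\Delta_i^{(m)}\in\mathfrak{N}_{i,B,L}$ with $\bigl\|\widehat\Delta_i^{(m)}-(\Delta_i^{*,1},\dots,\Delta_i^{*,d})\bigr\|_{\infty,\Omega_i}\to 0$ as $m\to\infty$. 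Keeping $c^{(m)}:=c^*$ and $\widehat\Delta_0^{(m)}:=\Delta_0^*$ (which is already a constant), the candidates $h^{(m)}:=h_{c^*,\widehat\Delta^{(m)}}$ belong to $\HM^{\mathcal{N}\mathcal{N}}$.

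It then remains to check that the objective at $h^{(m)}$ converges to $\Gamma_{B,L,k}(\Phi,\G)$. The key step is to show uniform convergence $h^{(m)}\to h_{c^*,\Delta^*}$ on $\Omega$. The gross-profit term $(\widehat\Delta^{(m)}\cdot S)_n$ converges uniformly as a finite sum of products of uniformly convergent $\widehat\Delta^{(m)}$-components with bounded asset values. The cost term $\C_n(\widehat\Delta^{(m)})$ converges uniformly because, by Assumption~\ref{asu_transaction_costs}, each of the maps ${c_{\operatorname{trans}}}_i^j$, ${c_{\operatorname{spread}}}_i^j$, ${c_{\operatorname{short}}}_i^j$ is continuous, and on the compact set $[\underline{K}^j,\overline{K}^j]\times[-2B,2B]$ (which captures all positions and increments arising) continuity upgrades to uniform continuity, so uniform convergence of the position processes propagates through the cost functional.

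Given $\|h^{(m)}-h_{c^*,\Delta^*}\|_{\infty,\Omega}\to 0$, the standard contraction estimate
\begin{equation*}
\bigl|\E_{\PP}[\Phi-h^{(m)}\mid\G]-\E_{\PP}[\Phi-h_{c^*,\Delta^*}\mid\G]\bigr|\le \bigl\|h^{(m)}-h_{c^*,\Delta^*}\bigr\|_{\infty,\Omega}
\end{equation*}
gives uniform $\PP$-a.s.\ convergence of the conditional expectations. All involved quantities are uniformly bounded on $\Omega$, so continuity of $\beta$ from Assumption~\ref{asu_beta} yields convergence of $\beta\bigl(\E_{\PP}[\Phi-h^{(m)}\mid\G]\bigr)$ uniformly (by uniform continuity of $\beta$ on the relevant bounded range). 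Bounded convergence transfers this to the $\PP$-integrals, and since $\mathcal{P}$ is finite by Assumption~\ref{asu_P}, summation over $\PP\in\mathcal{P}$ preserves the limit. Combined with the trivial inequality, this proves the equality.

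The main technical obstacle is precisely the propagation of uniform convergence of strategies through the cost functional $\C_n$; once Assumption~\ref{asu_transaction_costs} is exploited via the compactness of price- and position-domains to secure uniform continuity of the cost maps, the rest is a routine combination of the universal-approximation lemma with bounded convergence.
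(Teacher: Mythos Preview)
Your proof is correct and follows essentially the same approach as the paper: establish the trivial inequality from $\HM^{\mathcal{N}\mathcal{N}}\subseteq\HM$, then for the reverse direction take a minimizer via Lemma~\ref{lem_attainment}, approximate its components by neural networks through Lemma~\ref{lem_universal_approx}, and pass to the limit using continuity of $\beta$ and of the trading costs together with dominated (bounded) convergence. Your write-up is somewhat more explicit than the paper's about why uniform convergence of the strategies propagates through $\C_n$ (via uniform continuity on the compact range $[\underline{K}^j,\overline{K}^j]\times[-2B,2B]$) and about the contraction of conditional expectations, but these are elaborations of the same argument rather than a different route.
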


\subsection{Computation of $\mathcal{P}$-robust statistical arbitrage strategies with neural networks}
We now focus on the computation of \emph{$\mathcal{P}$-robust statistical arbitrage strategies}. These are $\mathcal{P}$-robust $\G$-arbitrage strategies with the specific choice $\G:=\sigma(S_{t_n})$. These strategies have a particular importance as they can be interpreted as \emph{profitable trading strategies on average given any terminal value}, compare also \cite{bondarenko2003statistical},~\cite{kassberger2017additive},~\cite{lutkebohmert2021robust}, and \cite{rein2021generalized}. As $\sigma(S_{t_n})$ contains an infinite number of sets, it is a priori unclear how to compute numerically a conditional expectation with respect to $\sigma(S_{t_n})$. However, recall that conditional expectations with respect to any $\sigma$-algebra $\mathcal{F}$ generated by a finite partition can be computed efficiently.\footnote{Recall that if $\G=\sigma(\mathcal{E})$ is generated by a finite partition $\mathcal{E}=(\mathcal{F}_i)_{i=1,\dots,N}$, for some $N \in \N$ with $\PP\left(\mathcal{F}_i \right)>0 $ for all $i=1,\dots,N$, then $\E_\PP[X~|~\G]=\sum_{i=1}^N \tfrac{\E_\PP[X \one_{\mathcal{F}_i}]}{\PP(\mathcal{F}_i)} \one_{\mathcal{F}_i}$ for all $\PP$-integrable random variables $X$.}

Thus, with Pseudo-Algorithm~\ref{algo_generation_sigma} we establish a routine enabling the generation of \emph{random} sets from the $\sigma$-algebra $\sigma(S_{t_n})$.
To this end, we consider for every $i \in \N$ the finite partition $\mathcal{E}_i$ of $[\underline{K}^1,\overline{K}^1]\times \cdots \times [\underline{K}^d,\overline{K}^d]$ generated according to \eqref{defn_E_i} in Pseudo-Algorithm~\ref{algo_generation_sigma}. 

\begin{algorithm}[h!]
\SetAlgorithmName{Pseudo-Algorithm}{Pseudo-Algorithm}{Pseudo-Algorithm}
\SetAlgoLined
\SetKwInOut{Input}{Input}
\SetKwInOut{Output}{Output}

\Input{The bounds $\underline{K}^j<\overline{K}^j$, $j =1,\dots,d$ of possible stock prices;}
\Output{For each $i\in \N$ a randomly generated partition $\mathcal{E}_i$ of $[\underline{K}^1,\overline{K}^1]\times \cdots \times [\underline{K}^d,\overline{K}^d]$ containing $2^i$ sets;}
We set $\mathcal{E}_0=\{\emptyset,\Omega\}$;\\
\For {$i\in \N$}{
 Generate a random set
 \begin{equation}\label{eq_pseudo_algo_A_i_generation}
 A_i:=(a_1^{(i)},b_1^{(i)}]\times \cdots \times (a_d^{(i)}, b_d^{(i)}] \subset [\underline{K}^1,\overline{K}^1]\times \cdots \times [\underline{K}^d,\overline{K}^d];
 \end{equation}
 }
  \For {$i\in \N$}{
 Define 
 \begin{equation}\label{defn_E_i}
 \mathcal{E}_i:=\left\{E \cap A_i, E \cap A_i^{C} \text{ for all } E \in \mathcal{E}_{i-1}\right\},
\end{equation}
where $A_i^C:=[\underline{K}^1,\overline{K}^1]\times \cdots \times [\underline{K}^d,\overline{K}^d] \backslash A_i$;}

We obtain, by construction, that for each $i\in \N$ the set $\mathcal{E}_i$ contains (at most) $2^i$ sets and fulfils
\[
\bigcup_{E \in \mathcal{E}_i} \{E\}=[\underline{K}^1,\overline{K}^1]\times \cdots \times [\underline{K}^d,\overline{K}^d], \text{ and } E \cap F = \emptyset\text{ for all } E,F \in \mathcal{E}_i\text{ with } E \neq F.
\]
 \caption{Generation of a partition of $[\underline{K}^1,\overline{K}^1]\times \cdots \times [\underline{K}^d,\overline{K}^d]$.}\label{algo_generation_sigma}
\end{algorithm}
For any realization of the Pseudo-Algorithm~\ref{algo_generation_sigma}, we set
\begin{equation}\label{eq_defn_F_i}
\mathcal{F}_i:= \left\{S_{t_n}^{-1}(A)~\middle|~ A \in \sigma(\mathcal{E}_i)\right\},~i \in \N.
\end{equation}
One possible choice of distribution according to which the random sets 
$ A_i:=(a_1^{(i)},b_1^{(i)}]\times \cdots \times (a_d^{(i)}, b_d^{(i)}] \subset [\underline{K}^1,\overline{K}^1]\times \cdots \times [\underline{K}^d,\overline{K}^d]$ from \eqref{eq_pseudo_algo_A_i_generation} are generated is the following: For each $i \in \N$ and for all $j=1,\dots,d$, let 
\begin{equation}\label{defn_A_i}
 a_j^{(i)} \sim \mathcal{U}\left([\underline{K}^j,\overline{K}^j]\right), \qquad  b_j^{(i)}\equiv \overline{K}^j,
\end{equation}
with $ a_j^{(i)}$ being independent of  $a_k^{(l)}$ for $(j,i) \neq (k,l)$. Moreover, we denote by $\PP^U$ the distribution of $U:=\left(\left(a_j^{(i)},b_j^{(i)}\right)_{j=1,\dots,d}\right)_{i \in \N}$.

The following proposition then shows that for $\PP^U$- almost all realizations of Pseudo-Algorithm~\ref{algo_generation_sigma}, conditional expectations with respect to the sigma-algebra $\sigma(S_{t_n})$ can be approximated arbitrarily well by conditional expectations with respect to $\mathcal{F}_i$, defined in \eqref{eq_defn_F_i}, for large enough $i \in \N_0$. Since each $\mathcal{F}_i$ is generated by a finite partition, this allows to compute the corresponding conditional expectation efficiently.
\begin{prop}\label{prop_filtration}
Let $L>0$, $B>0$, let $\Phi:\Omega\rightarrow \R$ be Borel measurable with $\|\Phi\|_{\infty,\Omega} \leq B$, and let $h_{c,\Delta} \in \HM$. Moreover, consider Pseudo-Algorithm~\ref{algo_generation_sigma},  where $a_j^{(i)},b_j^{(i)}$ are distributed for $i\in \N$, $j\in \{1,\dots,d\}$ according to $\PP^U$.
Then we have

\begin{itemize}
\item[(i)]$\sigma(S_{t_n})=\sigma \left(\bigcup_{i=1}^{\infty}\mathcal{F}_{i}\right)~~\PP^U$-a.s.\footnote{More precisely, that $\sigma(S_{t_n})=\sigma \left(\bigcup_{i=1}^{\infty}\mathcal{F}_{i}\right)$ holds for $\PP^U$-almost all realizations of $(\mathcal{E}_i)_{i \in \N}$ following \eqref{defn_E_i} and \eqref{defn_A_i}.}
\item[(ii)] Let $\sigma(S_{t_n})=\sigma \left(\bigcup_{i=1}^{\infty}\mathcal{F}_{i}\right)$ hold, then for all $\PP\in\mathcal{P}$ we have that
\[
\lim_{i \rightarrow \infty}\E_{\PP}[\Phi(S)-h_{c,\Delta}(S)~|~\mathcal{F}_i] = \E_{\PP}[\Phi(S)-h_{c,\Delta}(S)~|~\sigma(S_{t_n})],
\]
where the convergence holds both $\PP$-almost surely and in $L^1(\PP)$.
\end{itemize}
\end{prop}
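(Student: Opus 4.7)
For part (i), my plan is a two-step approach: first establish almost-sure density of the random lower corners $a^{(i)}$ in the box $\prod_{j=1}^d [\underline{K}^j, \overline{K}^j]$, then leverage density to generate the full Borel $\sigma$-algebra from the $A_i$'s. For the density step, fix a countable base $(V_n)_{n \in \N}$ of non-empty open product sets of $\prod_j [\underline{K}^j, \overline{K}^j]$ (e.g., products of open intervals with rational endpoints). For each $n$ the events $\{a^{(i)} \in V_n\}_{i \in \N}$ are independent under $\PP^U$ with a common strictly positive probability, so the second Borel--Cantelli lemma yields $\PP^U$-a.s.\ infinitely many indices $i$ with $a^{(i)} \in V_n$. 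Intersecting over the countable base gives $\PP^U$-a.s.\ that $\{a^{(i)} : i \in \N\}$ is dense in $\prod_j [\underline{K}^j, \overline{K}^j]$.

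On this full-measure event, I would then show $\sigma(\{A_i : i \in \N\}) = \mathcal{B}\bigl(\prod_j [\underline{K}^j, \overline{K}^j]\bigr)$; the inclusion "$\subseteq$" is automatic and only "$\supseteq$" needs work. For any $c = (c_1, \ldots, c_d) \in \prod_j (\underline{K}^j, \overline{K}^j)$, density supplies a subsequence $i_n$ with $a^{(i_n)}$ componentwise strictly greater than $c$ and converging to $c$; the pointwise limit of the indicators of $A_{i_n}$ then equals the indicator of $\prod_j (c_j, \overline{K}^j]$, placing this "upper-right corner" rectangle in $\sigma(\{A_i\})$. Since such rectangles, with $c$ ranging over a countable dense subset of the interior, form a $\pi$-system and their countable set-theoretic combinations produce all half-open rectangles (boundary cases being captured by monotone countable limits), they generate $\mathcal{B}(\prod_j [\underline{K}^j, \overline{K}^j])$. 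The desired claim then follows from the identity $\sigma\bigl(\bigcup_i \mathcal{F}_i\bigr) = S_{t_n}^{-1}\bigl(\sigma(\bigcup_i \sigma(\mathcal{E}_i))\bigr) = S_{t_n}^{-1}\bigl(\mathcal{B}(\prod_j [\underline{K}^j, \overline{K}^j])\bigr) = \sigma(S_{t_n})$.

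Part (ii) is routine once (i) is in hand. By construction each $\mathcal{E}_i$ refines $\mathcal{E}_{i-1}$, so $(\mathcal{F}_i)_{i \in \N}$ is an increasing filtration whose generated $\sigma$-algebra equals $\sigma(S_{t_n})$ under the standing assumption of (ii). The integrand $\Phi(S) - h_{c,\Delta}(S)$ is bounded (using $\|\Phi\|_{\infty,\Omega} \leq B$, $|c| \leq B$, and the uniform bound on elements of $\HM$ from Lemma~\ref{lem_compactness}), hence lies in $L^1(\PP)$ for every $\PP \in \mathcal{P}$. L\'evy's upward martingale convergence theorem then delivers both $\PP$-a.s.\ and $L^1(\PP)$ convergence of $\E_\PP[\Phi(S) - h_{c,\Delta}(S) \mid \mathcal{F}_i]$ to $\E_\PP[\Phi(S) - h_{c,\Delta}(S) \mid \sigma(S_{t_n})]$.

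The main obstacle is the second half of (i): turning almost-sure density of the random points into genuine $\sigma$-algebra equality. The sets $A_i$ have a "common-index" product structure across coordinates, so a "mixed" rectangle whose corner coordinates effectively come from different indices must be constructed indirectly via a carefully chosen subsequence along which $a^{(i_n)} \to c$ from above; one also has to verify that these pointwise-limit approximations extend by monotone countable limits to rectangles that touch the boundary $\bigcup_k \{x_k = \underline{K}^k\}$, since every $A_i$ misses this boundary. Everything else, in particular part (ii), reduces to standard measure-theoretic machinery.
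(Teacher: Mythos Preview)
Your approach is correct and closely parallels the paper's, but with a cleaner modular structure. The paper proceeds by fixing a single target corner $a=(a_1,\dots,a_d)$, defining the shrinking events $E_i=\{a_j<a_j^{(i)}<a_j+i^{-1/d}\text{ for all }j\}$, computing $\PP^U(E_i)=C/i$, and invoking the second Borel--Cantelli lemma to obtain a subsequence $a^{(i_k)}\downarrow a$, whence $\prod_j(a_j,\overline{K}^j]=\bigcup_k A_{i_k}\in\sigma(\bigcup_i\mathcal{E}_i)$. In principle this gives a $\PP^U$-null exceptional set depending on $a$, so one must restrict to a countable (e.g.\ rational) family of corners to conclude; the paper leaves this implicit. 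Your route---first proving $\PP^U$-a.s.\ density of $\{a^{(i)}\}$ via Borel--Cantelli on a countable base, then working on the resulting single full-measure event---handles all corners at once and makes the quantifier exchange transparent. Both arguments reduce to the same Borel--Cantelli step and the same generating family of upper-right rectangles; yours just packages the randomness into one clean almost-sure statement up front. Part~(ii) is identical in both: boundedness of $\Phi-h_{c,\Delta}$ plus L\'evy's upward theorem.
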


We build on Proposition~\ref{prop_convergence}, Proposition~\ref{prop_neuralnetworks}, and Proposition~\ref{prop_filtration} to establish the following main theorem.

\begin{thm}\label{thm_summary}
Let Assumptions~\ref{asu_transaction_costs}, \ref{asu_P}, \ref{asu_beta}, \ref{asu_activation_function_phi} hold true. Further, assume that $\beta$ is convex. \\
Then  for all $B>0$, for all $L>0$, and for all $\Phi:\Omega\rightarrow \R$ Borel measurable with $\|\Phi\|_{\infty,\Omega} \leq B$, the following equality holds $\PP^U$-almost surely:
\begin{equation}\label{eq_approx}
\Gamma_{B,L}(\Phi,\sigma(S_{t_n}))= \lim_{k\rightarrow\infty}\lim_{i \rightarrow \infty}\Gamma_{B,L,k}^{\mathcal{N}\mathcal{N}}(\Phi,\mathcal{F}_i).
\end{equation}
\end{thm}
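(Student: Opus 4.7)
The plan is to chain the three earlier convergence statements together. By Proposition~\ref{prop_neuralnetworks}, $\Gamma_{B,L,k}^{\mathcal{N}\mathcal{N}}(\Phi,\mathcal{F}_i)=\Gamma_{B,L,k}(\Phi,\mathcal{F}_i)$ for every $k,i\in\N$, so the inner limit in \eqref{eq_approx} may equivalently be computed for $\Gamma_{B,L,k}$; by Proposition~\ref{prop_convergence}, $\lim_{k\to\infty}\Gamma_{B,L,k}(\Phi,\sigma(S_{t_n}))=\Gamma_{B,L}(\Phi,\sigma(S_{t_n}))$. It therefore suffices to show, on the event of $\PP^U$-probability one on which $\sigma(S_{t_n})=\sigma(\bigcup_{i\ge 1}\mathcal{F}_i)$ (cf.\ Proposition~\ref{prop_filtration}(i)), the relation $\lim_{i\to\infty}\Gamma_{B,L,k}(\Phi,\mathcal{F}_i)=\Gamma_{B,L,k}(\Phi,\sigma(S_{t_n}))$ for every fixed $k\in\N$. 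Note that $(\mathcal{F}_i)_{i\in\N}$ is actually an increasing filtration, since by \eqref{defn_E_i} each partition $\mathcal{E}_i$ refines $\mathcal{E}_{i-1}$.

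Denote the penalized functional by $F_{\mathcal{G}}(h_{c,\Delta}):=c+k\sum_{\PP\in\mathcal{P}}\int_\Omega\beta\bigl(\E_\PP[\Phi(S)-h_{c,\Delta}(S)\mid\mathcal{G}]\bigr)\D\PP$. For the \textbf{upper bound} I pick, by Lemma~\ref{lem_attainment}, some $h^\star\in\HM$ with $F_{\sigma(S_{t_n})}(h^\star)=\Gamma_{B,L,k}(\Phi,\sigma(S_{t_n}))$. Convexity of $\beta$, Jensen's inequality and the tower property applied to $\mathcal{F}_i\subseteq\sigma(S_{t_n})$ yield $F_{\mathcal{F}_i}(h^\star)\le F_{\sigma(S_{t_n})}(h^\star)$, while Proposition~\ref{prop_filtration}(ii), continuity of $\beta$, the uniform boundedness of all elements of $\HM$ (immediate from the definition \eqref{eq_defn_H_m} together with continuity of the cost functions on the compact set $\Omega$), and dominated convergence give $F_{\mathcal{F}_i}(h^\star)\to F_{\sigma(S_{t_n})}(h^\star)$. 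Hence $\limsup_{i\to\infty}\Gamma_{B,L,k}(\Phi,\mathcal{F}_i)\le \Gamma_{B,L,k}(\Phi,\sigma(S_{t_n}))$.

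For the matching \textbf{lower bound} I pick, again via Lemma~\ref{lem_attainment}, a minimizer $h^{(i)}=h_{c^{(i)},\Delta^{(i)}}$ of $F_{\mathcal{F}_i}$ over $\HM$, and use the compactness in Lemma~\ref{lem_compactness} to extract along a subsequence $i_n\to\infty$ a uniform limit $h^{(i_n)}\to h^\star\in\HM$ together with $c^{(i_n)}\to c^\star$. Writing $X^{(n)}:=\Phi-h^{(i_n)}$ and $X^\star:=\Phi-h^\star$, the decomposition
\[
\E_\PP[X^{(n)}\mid\mathcal{F}_{i_n}]-\E_\PP[X^\star\mid\sigma(S_{t_n})]=\E_\PP[X^{(n)}-X^\star\mid\mathcal{F}_{i_n}]+\bigl(\E_\PP[X^\star\mid\mathcal{F}_{i_n}]-\E_\PP[X^\star\mid\sigma(S_{t_n})]\bigr)
\]
yields $\PP$-a.s.\ convergence to $0$: the first summand is bounded in sup norm by $\|X^{(n)}-X^\star\|_{\infty,\Omega}\to 0$, while the second vanishes by Proposition~\ref{prop_filtration}(ii). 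Continuity of $\beta$ and the uniform bound on the involved conditional expectations then enable dominated convergence, so that $F_{\mathcal{F}_{i_n}}(h^{(i_n)})\to F_{\sigma(S_{t_n})}(h^\star)\ge\Gamma_{B,L,k}(\Phi,\sigma(S_{t_n}))$; hence $\liminf_{i\to\infty}\Gamma_{B,L,k}(\Phi,\mathcal{F}_i)\ge\Gamma_{B,L,k}(\Phi,\sigma(S_{t_n}))$ via a standard subsequence-of-subsequence argument.

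The main obstacle is precisely this last step, where the minimizer $h^{(i)}$ moves with $i$ simultaneously with the refinement of the conditioning $\sigma$-algebra $\mathcal{F}_i$. The decomposition above disentangles these two limits: compactness of $\HM$ furnished by Lemma~\ref{lem_compactness} controls the first summand via uniform convergence of the strategies, Proposition~\ref{prop_filtration}(ii) controls the second via martingale-type convergence, and the uniform boundedness on $\HM$ supplies the integrable majorant required for dominated convergence. The convexity assumption on $\beta$ enters only to produce the clean monotone inequality $F_{\mathcal{F}_i}(h^\star)\le F_{\sigma(S_{t_n})}(h^\star)$ underlying the upper bound.
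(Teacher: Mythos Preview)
Your proof is correct and follows the same overall blueprint as the paper's: reduce via Propositions~\ref{prop_neuralnetworks} and~\ref{prop_convergence} to showing $\lim_{i}\Gamma_{B,L,k}(\Phi,\mathcal{F}_i)=\Gamma_{B,L,k}(\Phi,\sigma(S_{t_n}))$ for each fixed $k$, and handle this via compactness (Lemma~\ref{lem_compactness}), attainment (Lemma~\ref{lem_attainment}), and the martingale convergence of Proposition~\ref{prop_filtration}.

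The organization of the lower bound is where you and the paper diverge. The paper first writes the iterated limit $\lim_i\lim_j F_{\mathcal{F}_i}(h^{(j)})$ (with $h^{(j)}$ the minimizer at level $j$ and $h^{(j)}\to h_{c,\Delta}$ along a subsequence) and then invokes Jensen for $\mathcal{F}_i\subseteq\mathcal{F}_j$ to collapse this to $\lim_j\Gamma_{B,L,k}(\Phi,\mathcal{F}_j)$; this is the step where convexity of $\beta$ really enters the paper's argument. You instead run a direct diagonal argument, splitting $\E_\PP[X^{(n)}\mid\mathcal{F}_{i_n}]-\E_\PP[X^\star\mid\sigma(S_{t_n})]$ into a uniformly small piece (coming from $h^{(i_n)}\to h^\star$) and a martingale-convergence piece (coming from Proposition~\ref{prop_filtration}(ii) applied to the fixed $h^\star$). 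This is a bit cleaner and, notably, uses no convexity at all in the lower bound. Since your upper bound already follows from the convergence $F_{\mathcal{F}_i}(h^\star)\to F_{\sigma(S_{t_n})}(h^\star)$ alone---the Jensen inequality you quote there is redundant once you have that convergence---your argument in fact establishes the conclusion of the theorem without ever invoking the convexity of $\beta$, a mild strengthening relative to the paper's route.
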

In particular, in view of Equation~\eqref{eq_approx} we have for large enough $k$ and $i$ that 
$
\Gamma_{B,L}(\Phi,\sigma(S_{t_n})) \approx \Gamma_{B,L,k}^{\mathcal{N}\mathcal{N}}(\Phi,\mathcal{F}_i),
$
where $\Gamma_{B,L,k}^{\mathcal{N}\mathcal{N}}(\Phi,\mathcal{F}_i)$ is an optimization problem involving neural networks which can be solved efficiently. We will apply this approximation on real-world data in the next section.

\begin{rem}\label{rem_partition}
\begin{itemize}
\item[(i)]
Proposition~\ref{prop_filtration} and Theorem~\ref{thm_summary} rely on the construction from Pseudo-Algorithm~\ref{algo_generation_sigma} that allows to generate random partitions $(\mathcal{E}_i)_{i \in \N}$. Note that while \eqref{eq_pseudo_algo_A_i_generation} and \eqref{defn_A_i} are numerically tractable, the creation of sets involving complements as in \eqref{defn_E_i} grows exponentially with the dimension and is therefore not tractable in higher dimensions. For this reason, we call this methodology a pseudo-algorithm and provide with Algorithm~\ref{algo_generation_indicator_matrix} a numerical feasible methodology to generate indicator functions of sets of the corresponding partition $(\mathcal{E}_i)_{i \in \N}$ avoiding the explicit creation of complements. As it turns out in Algorithm~\ref{algo_training_nn}, considering indicator functions instead of computing explicit sets  is sufficient to compute expressions of the form 
\[
\E_{\PP}[\Phi(S)-h_{c,\Delta}(S)~|~\mathcal{F}_i].
\]
Hence, we consider Pseudo-Algorithm~\ref{algo_generation_sigma} as a theoretical routine, while Algorithm~\ref{algo_generation_indicator_matrix} and Algorithm~\ref{algo_training_nn} can be used in practice to compute $\mathcal{P}$-robust statistical arbitrage strategies.
\item[(ii)]
Proposition~\ref{prop_filtration} and Theorem~\ref{thm_summary} remain still valid  if we replace $(\mathcal{E}_i)_{i \in \N}$ by any family $(\widetilde{\mathcal{E}}_i)_{i \in \N}$, for which each $\widetilde{\mathcal{E}}_i$ forms a partition of $[\underline{K}^1,\overline{K}^1]\times \cdots \times [\underline{K}^d,\overline{K}^d]$ such that $\widetilde{\mathcal{E}}_i \subseteq \widetilde{\mathcal{E}}_{i+1}$ for each $i \in \N$ and $\sigma\left(S_{t_n}\right)= \sigma \left(\cup_{i\in \N} \widetilde{\mathcal{E}}_i\right)$. This is satisfied, e.g., by the (deterministic) dyadic partition of $[\underline{K}^1,\overline{K}^1]\times \cdots \times [\underline{K}^d,\overline{K}^d]$.
For our numerical experiments we decided however to use the partition $(\mathcal{E}_i)_{i \in \N}$ from Pseudo-Algorithm~\ref{algo_generation_sigma}, mainly because Algorithm~\ref{algo_generation_indicator_matrix} admits a computationally efficient algorithm to construct indicators of sets in $\mathcal{E}_i$.
\item[(iii)] If the sigma-algebra $\mathcal{G}\subseteq \sigma(S)$ is countably generated, i.e., $\mathcal{G} = \sigma\left(\cup_{i\in \N} A_i\right)$, then a similar procedure as in Proposition~\ref{prop_filtration} and Theorem~\ref{thm_summary}, where the case $\mathcal{G}=\sigma(S_{t_n})$ is considered, can be obtained, by generating for each $i \in \N$ a finite partition $\mathcal{F}_1^i,\dots,\mathcal{F}_{N_i}^i$ such that $\sigma\left(\mathcal{F}_j^i,j \leq N_i\right) = \sigma\left(A_j, j \leq i\right) $ for all $i \in \N$. However, note that for $\mathcal{G} \neq \sigma\left(S_{t_n}\right)$ it is not guaranteed that one can efficiently generate elements $A_{i_1},\dots,A_{i_\mathfrak{B}} \subseteq (A_i)_{i\in \N}$ as in \eqref{eq_pseudo_algo_A_i_generation} of Pseudo-Algorithm~\ref{algo_generation_sigma}. However, if it is possible to do so, one can apply the same Algorithm~\ref{algo_generation_indicator_matrix} to efficiently compute indicators of the partition obtained via \eqref{defn_E_i}.

\end{itemize}
\end{rem}
\section{Determination of the ambiguity set and the numerical algorithm}\label{sec_application}
We next use the approximation $
\Gamma_{B,L}(\Phi,\sigma(S_{t_n})) \approx \Gamma_{B,L,k}^{\mathcal{N}\mathcal{N}}(\Phi,\mathcal{F}_i),
$ for large $k,i \in \N$ justified by Theorem~\ref{thm_summary} and introduce an approach enabling to detect $\mathcal{P}$-robust statistical arbitrage strategies on financial markets. First, in Section~\ref{sec_ambiguity_set_p}, we propose an approach to determine an entirely data-driven ambiguity set $\mathcal{P}$, then, in Section~\ref{sec_numerical_algorithm}, we provide a neural network-based numerical algorithm.

\subsection{Determination of an ambiguity set of physical measures}\label{sec_ambiguity_set_p}
We assume that we are able to observe a time series 
\[
Y:=(Y_{s_i})_{i=1,\dots,N} \text{ with } Y_{s_i}=\left(Y_{s_i}^j\right)_{j=1,\dots,d} \in (0,\infty)^d\text{ for all } i=1,\dots,N
\]
consisting of $N\in \N$ observed past prices of each of the underlying $d$ securities on which we rely our trading strategy on. The past observation dates are denoted by $s_1,\dots,s_N$ with $s_N \leq t_0$, compare also Figure~\ref{fig_dates}.
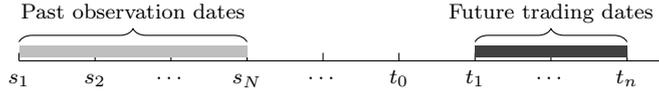
\begin{figure}[h!]
\begin{center}
\begin{tikzpicture}[%
    every node/.style={
        font=\scriptsize,
        text height=1ex,
        text depth=.25ex,
    },
]
\draw[->] (0,0) -- (8.5,0);

\foreach \x in {0,1,...,8}{
    \draw (\x cm,3pt) -- (\x cm,0pt);
}

\node[anchor=north] at (0,0) {$s_1$};
\node[anchor=north] at (1,0) {$s_2$};
\node[anchor=north] at (2,0) {$\cdots$};
\node[anchor=north] at (3,0) {$s_N$};
\node[anchor=north] at (4,0) {$\cdots$};
\node[anchor=north] at (5,0) {$t_0$};
\node[anchor=north] at (6,0) {$t_1$};
\node[anchor=north] at (7,0) {$\cdots$};
\node[anchor=north] at (8,0) {$t_n$};

\fill[lightgray] (0,0.05) rectangle (3,0.2);
\fill[darkgray] (6,0.05) rectangle (8,0.2);

\draw[decorate,decoration={brace,amplitude=5pt}] (0,0.25) -- (3,0.25)
    node[anchor=south,midway,above=4pt] {Past observation dates};
\draw[decorate,decoration={brace,amplitude=5pt}] (6,0.25) -- (8,0.25)
    node[anchor=south,midway,above=4pt] {Future trading dates};
\end{tikzpicture}
\end{center}
\caption{Illustration of the observation dates $(s_i)_{i=1,\dots,N}$ and the future dates $(t_i)_{i=0,\dots,n-1}$.}\label{fig_dates}
\end{figure}

Based on the time series $Y$, our aim is to construct an entirely data-driven ambiguity set $\mathcal{P}$ such that we are able to detect $\mathcal{P}$-robust statistical arbitrage strategies which turn out to be profitable. Recall that we aim to determine a trading strategy which trades at future dates $(t_i)_{i=0,\dots,n-1}$. Assume for sake of simplicity that 
\[
t_{i+1}-t_{i} = s_{j+1}-s_j\text{ for all } i=0,\dots,n-1,~~~ j = 1,\dots,N-1.
\]
and that $N > n$. Further, assume that at time $t_0$ we observe the spot values
\[
S_{t_0}:=(S_{t_0}^1,\dots,S_{t_0}^d)\in (0,\infty)^d.
\]
This allows us to determine the empirical measure $\widehat{\PP} \in \mathcal{M}_1(\R^{nd})$, relying on the observations from $Y$ and on the current spot values $S_{t_0}$, through
\begin{equation}\label{eq_definition_dirac_measure}
\widehat{\PP}:= \frac{1}{N-n}\sum_{\ell =1}^{N-n} \delta_{\left(S_{t_0}\cdot \frac{Y_{s_{\ell+1}}}{Y_{s_\ell}},\cdots,S_{t_0}\cdot \frac{Y_{s_{\ell+n}}}{Y_{s_\ell}}\right)},
\end{equation}
where $\delta_x$ denotes the Dirac measure with unit mass at $x\in \R^{nd}$, and where the multiplication with $S_{t_0}$ as well as division with respect to $Y_{s_i}$ is meant component-wise. This means, according to \eqref{eq_definition_dirac_measure}, the measure $\widehat{\PP}$ assigns equal probability to all $N-n$ scaled paths of length $n$ that were observed at times $(s_\ell)_{\ell=1,\dots,N-n}$ to occur in the future, compare also the left and middle panel of Figure~\ref{fig_illustration_empirical_measure}. Note that the construction in \eqref{eq_definition_dirac_measure} ensures particularly that the future paths evolve according to observed paths while the scaling ensures that the spot value of all considered future paths coincides with $S_{t_0}$.
\begin{figure}
\begin{center}
    {
    \begin{tikzpicture}[xscale=.6,yscale=.6]
        \draw[black,thick] (0,1) -- (1,0.8);
        \draw[orange,thick](1,0.8)--(2,3);
        \draw[purple,thick](2,3)--(3,3.5);
        \draw[olive,thick](3,3.5)--(4,3.6);
        \draw[pink,thick](4,3.6)--(5,2);
        \draw[teal,thick](5,2)--(6,5); 
        \filldraw(0,1)circle(1pt);
        \filldraw(1,0.8)circle(1pt);
        \filldraw(2,3)circle(1pt);
        \filldraw(3,3.5)circle(1pt);
        \filldraw(4,3.6)circle(1pt);
        \filldraw(5,2)circle(1pt);
        \filldraw(6,5)circle(1pt);              
        \draw [->] (0,0) -- (6,0);
        \draw [->] (0,0) -- (0,6)node[anchor=east]{$Y_{s_i}$};
        \draw(0,1pt)--(0,-1pt)node[anchor=north] {$s_{1}$};
        \draw(1,1pt)--(1,-1pt)node[anchor=north] {$s_{2}$};
        \draw(2,1pt)--(2,-1pt)node[anchor=north] {$s_{3}$};
        \draw(3,1pt)--(3,-1pt)node[anchor=north] {$s_{4}$};
        \draw(4,1pt)--(4,-1pt)node[anchor=north] {$s_{5}$};
        \draw(5,1pt)--(5,-1pt)node[anchor=north] {$s_{6}$};
        \draw(6,1pt)--(6,-1pt)node[anchor=north] {$s_{7}$};
        
		\fill[lightgray] (0,0.05) rectangle (6,0.2);    
    \end{tikzpicture}
    }    
    {
    \begin{tikzpicture}[xscale=.6,yscale=.6]
    \draw [->] (0,0) -- (3,0);
    \draw [->] (0,0) -- (0,6)node[anchor=east]{$S_{t_i}$};
    \draw(0,1pt)--(0,-1pt)node[anchor=north] {$t_{0}$};
	\draw(1,1pt)--(1,-1pt)node[anchor=north] {$t_{1}$};
	\draw(2,1pt)--(2,-1pt)node[anchor=north] {$t_{2}$};
        \draw[black,thick] (0,3) -- (1,2.8);
        \draw[orange,thick](1,2.8)--(2,5);

        \draw[orange,thick](0,3)--(1,5.2);
        \draw[purple,thick](1,5.2)--(2,5.7);
	
        \draw[purple,thick](0,3)--(1,3.5);
        \draw[olive,thick](1,3.5)--(2,3.6);	
        
        \draw[olive,thick](0,3)--(1,3.1);
        \draw[pink,thick](1,3.1)--(2,1.5);	

        \draw[pink,thick](0,3)--(1,1.4);
        \draw[teal,thick](1,1.4)--(2,4.4); 
                
        \filldraw(0,3)circle(1pt);
        \node[anchor=east] at (0,3){$S_{t_0}$};
        \filldraw(1,2.8)circle(1pt);
        \filldraw(2,5)circle(1pt);
        \filldraw(1,5.2)circle(1pt);
        \filldraw(2,5.7)circle(1pt);
        \filldraw(1,3.5)circle(1pt);
        \filldraw(2,3.6)circle(1pt);     
        \filldraw(1,3.1)circle(1pt);
        \filldraw(2,1.5)circle(1pt);
        \filldraw(1,1.4)circle(1pt);
        \filldraw(2,4.4)circle(1pt); 
        
        \fill[darkgray] (1,0.05) rectangle (2,0.2); 
    \end{tikzpicture}
    }
 {
    \begin{tikzpicture}[xscale=.6,yscale=.6]
    \draw [->] (0,0) -- (3,0);
    \draw [->] (0,0) -- (0,6)node[anchor=east]{$S_{t_i}$};
    \draw(0,1pt)--(0,-1pt)node[anchor=north] {$t_{0}$};
	\draw(1,1pt)--(1,-1pt)node[anchor=north] {$t_{1}$};
	\draw(2,1pt)--(2,-1pt)node[anchor=north] {$t_{2}$};

        \draw[black,very thick,dotted] (0,3) -- (1,3.2);
        \draw[orange,very thick,dotted](1,3.2)--(2,5.1);
        
        \draw[orange,very thick,dotted](0,3)--(1,5.8);
        \draw[purple,very thick,dotted](1,5.8)--(2,5.6);        
		
        \draw[purple,very thick,dotted](0,3)--(1,3.7);
        \draw[olive,very thick,dotted](1,3.7)--(2,3.2);	
        
        \draw[olive,very thick,dotted](0,3)--(1,3.4);
        \draw[pink,very thick,dotted](1,3.4)--(2,1.2);	

        \draw[pink,very thick,dotted](0,3)--(1,0.8);
        \draw[teal,very thick,dotted](1,0.8)--(2,4.7); 
                
        \filldraw(0,3)circle(1pt);
        \node[anchor=east] at (0,3){$S_{t_0}$};
        \filldraw(1,3.2)circle(1pt);
        \filldraw(2,5.1)circle(1pt);
        \filldraw(1,5.8)circle(1pt);
        \filldraw(2,5.6)circle(1pt);
        \filldraw(1,3.7)circle(1pt);
        \filldraw(2,3.2)circle(1pt);     
        \filldraw(1,3.4)circle(1pt);
        \filldraw(2,1.2)circle(1pt);
        \filldraw(1,0.8)circle(1pt);
        \filldraw(2,4.7)circle(1pt); 
        
         \fill[darkgray] (1,0.05) rectangle (2,0.2); 
    \end{tikzpicture}
    }
\end{center}
\caption{The figure illustrates how the measure $\widehat{\PP}$ and $\widehat{\PP}_{\tau}$, as defined in \eqref{eq_definition_dirac_measure} and \eqref{eq_defn_p_tau} respectively, are constructed. \\
\underline{Left:} An exemplary observed path $(Y_{s_i})_{i=1,\dots,7}$ is displayed. \underline{Centre:} In the case $n=2$ the future paths with positive probability under $\widehat{\PP}$ are shown. To each of the paths equal probability is assigned. \underline{Right:} A possible realization of $\widehat{\PP}_{\tau}$ is depicted, the paths from $\widehat{\PP}$ are deviated according to a normally distributed realization of $\tau$.}\label{fig_illustration_empirical_measure}
\end{figure} 

The empirical measure $\widehat{\PP}$ relies entirely on past data and on the spot value $S_{t_0}$, whereas for the future development of the securities we want to take into account uncertainty with respect to the choice of the underlying probability measure, but still consider only such measures which are in a certain sense close to the observed data. To measure the distance between two probability measures $\PP, \Q \in \mathcal{M}_1(\R^{nd})$, we  therefore introduce the $1$-Wasserstein-distance\footnote{Alternatively, one could have used, e.g., the Kullback-Leibler divergence (\cite{kullback1951information}) to measure the distance between two measures. However, the Kullback-Leibler divergence requires absolute continuity of the considered measures, which means in our case that considered measures would have been restricted to the support determined by the observed prices $(Y_{s_i})_{i\in \N}$. The $1$-Wasserstein distance does not suffer from this restriction.} which is defined as
\[
\mathcal{W}(\PP,\Q):=\inf_{\pi \in \Pi(\PP,\Q)}\int_{\R^{nd} \times \R^{nd}} \| u-v \|_{nd} \,\D\pi(u,v),
\]
with $\Pi(\PP,\Q)\subset \mathcal{M}_1\left(\R^{nd} \times \R^{nd}\right)$ denoting the set of all joint distributions of $\PP$ and $\Q$, compare e.g.\,\cite{ambrosio2003lecture,ruschen2007monge,villani2008optimal} for more details on the Wasserstein distance and the related topic of optimal transport. 
Let $\varepsilon>0$, then we consider some ambiguity set $\mathcal{P}\subset \mathcal{M}_1(\R^{nd} )$ such that the therein contained measures lie within the Wasserstein ball around  $\widehat{\PP}$ with radius $\varepsilon>0$, i.e.,
\[
\mathcal{P} \subset \mathcal{B}_\varepsilon(\widehat{\PP}):=\left\{\PP\in \mathcal{M}_1(\R^{nd})~\middle|~ \mathcal{W}(\widehat{\PP},\PP)<\varepsilon\right\}.
\]
In that way each $\PP \in \mathcal{P}$ can be considered to be close to the empirical measure $\widehat{\PP}$. Next, we introduce for some $\tau = \left(\tau_1,\dots,\tau_{N-n}\right) \in \R^{nd(N-n)}$ the perturbed measure
\begin{equation}\label{eq_defn_p_tau}
\widehat{\PP}_\tau:=\frac{1}{N-n}\sum_{\ell=1}^{N-n} \delta_{\left(\left(S_{t_0}\cdot \frac{Y_{s_{\ell+1}}}{Y_{s_\ell}},\cdots,S_{t_0}\cdot \frac{Y_{s_{\ell+n}}}{Y_{s_\ell}}\right)+\tau_\ell\right)},
\end{equation}
compare also the right panel of Figure~\ref{fig_illustration_empirical_measure}.
The following lemma asserts that choosing $\tau$ appropriately ensures that the measure $\widehat{\PP}_\tau$ is indeed contained in $\mathcal{B}_\varepsilon(\widehat{\PP})$.

\begin{lem}\label{lem_wasserstein}
Let $\varepsilon>0$ and let $\tau = \left(\tau_1,\dots,\tau_{N-n}\right) \in \R^{nd(N-n)}$ such that\footnote{Here, and in the rest of the paper $\|\cdot \|_{a}$ denotes the Euclidean norm on $\R^{a}$ for any $a \in \mathbb{N}$.}
\begin{equation}\label{eq_ineq_tau_epsilon}
\max _{\ell =1, \dots, N-n} \|\tau_\ell\|_{nd} < \varepsilon.
\end{equation}
Moreover, let $\widehat{\PP}$ be defined by \eqref{eq_definition_dirac_measure}, and $\widehat{\PP}_\tau$ by \eqref{eq_defn_p_tau}.
Then we have that
\[
\widehat{\PP}_\tau\in \mathcal{B}_\varepsilon(\widehat{\PP}).
\]
\end{lem}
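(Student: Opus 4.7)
The plan is to exhibit an explicit coupling $\pi \in \Pi(\widehat{\PP}, \widehat{\PP}_\tau)$ whose transport cost is strictly less than $\varepsilon$, and then invoke the definition of $\mathcal{W}$ as an infimum over such couplings.

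Write $x_\ell := \left(S_{t_0}\cdot \tfrac{Y_{s_{\ell+1}}}{Y_{s_\ell}},\ldots, S_{t_0}\cdot \tfrac{Y_{s_{\ell+n}}}{Y_{s_\ell}}\right) \in \R^{nd}$ for $\ell = 1,\dots, N-n$, so that by \eqref{eq_definition_dirac_measure} and \eqref{eq_defn_p_tau} we have $\widehat{\PP} = \tfrac{1}{N-n}\sum_{\ell=1}^{N-n}\delta_{x_\ell}$ and $\widehat{\PP}_\tau = \tfrac{1}{N-n}\sum_{\ell=1}^{N-n}\delta_{x_\ell + \tau_\ell}$. The natural candidate coupling is the \emph{synchronous} one
\[
\pi := \frac{1}{N-n}\sum_{\ell=1}^{N-n}\delta_{(x_\ell,\, x_\ell + \tau_\ell)} \in \mathcal{M}_1\left(\R^{nd}\times \R^{nd}\right),
\]
which pairs each sample path with its perturbed version. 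I would first verify routinely that $\pi$ has the correct marginals, i.e., $\pi(A\times \R^{nd}) = \widehat{\PP}(A)$ and $\pi(\R^{nd}\times B) = \widehat{\PP}_\tau(B)$ for every Borel set $A,B \subseteq \R^{nd}$; this follows directly from the definition of $\pi$ as a uniform sum of Dirac masses. Hence $\pi \in \Pi(\widehat{\PP}, \widehat{\PP}_\tau)$.

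Next, I would compute the transport cost of $\pi$ explicitly: by the definition of integration with respect to a Dirac mass,
\[
\int_{\R^{nd}\times \R^{nd}} \|u-v\|_{nd}\,\D\pi(u,v) = \frac{1}{N-n}\sum_{\ell=1}^{N-n} \|x_\ell - (x_\ell + \tau_\ell)\|_{nd} = \frac{1}{N-n}\sum_{\ell=1}^{N-n} \|\tau_\ell\|_{nd}.
\]
Using hypothesis \eqref{eq_ineq_tau_epsilon}, each summand is strictly bounded by $\varepsilon$, so the average is as well, giving
\[
\int_{\R^{nd}\times \R^{nd}} \|u-v\|_{nd}\,\D\pi(u,v) \leq \max_{\ell=1,\dots,N-n}\|\tau_\ell\|_{nd} < \varepsilon.
\]
Taking the infimum over all couplings only decreases the value, so $\mathcal{W}(\widehat{\PP},\widehat{\PP}_\tau) < \varepsilon$, which gives $\widehat{\PP}_\tau \in \mathcal{B}_\varepsilon(\widehat{\PP})$.

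There is no real obstacle here: the lemma is essentially a one-line consequence of exhibiting the obvious diagonal coupling. The only thing worth being careful about is the marginal check, which is a formal manipulation with Dirac measures, and correctly tracking that the strict inequality in \eqref{eq_ineq_tau_epsilon} transfers to the strict inequality defining $\mathcal{B}_\varepsilon(\widehat{\PP})$.
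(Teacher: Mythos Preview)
Your proposal is correct and follows exactly the same approach as the paper: both define the synchronous coupling $\tfrac{1}{N-n}\sum_{\ell}\delta_{(x_\ell,\,x_\ell+\tau_\ell)}$, compute its transport cost as $\tfrac{1}{N-n}\sum_\ell \|\tau_\ell\|_{nd}$, and bound this by $\max_\ell \|\tau_\ell\|_{nd} < \varepsilon$. Your version is slightly more explicit in checking the marginals, but otherwise the arguments coincide.
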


Then, based on Lemma~\ref{lem_wasserstein}, we are able to determine a set $\mathcal{P}\subset \mathcal{B}_\varepsilon(\widehat{\PP})$ which contains a finite number of $N_{\operatorname{measures}}\in \N$. We summarize this procedure in Algorithm~\ref{algo_generation_P}.

\begin{rem}\label{rem_choice_Omega}
Given a time series $Y$ with $\left(Y_{s_i}^j\right)_{j=1,\dots,d} \in (0,\infty)^d$ for all $i=1,\dots,N$ and spot prices $S_{t_0}\in (0,\infty)^d$, we can construct $\Omega={[\underline{K}^1,\overline{K}^1]}^n\times \cdots \times {[\underline{K}^d,\overline{K}^d]}^n$ the following way. Consider for $j=1,\dots,d$
\begin{equation}\label{eq_s_underline_overline}
\underline{S}^j:=\min_{\substack{\ell =1,\dots,N-n\\ i=0,\dots,n}} S_{t_0}^j \cdot \frac{Y_{s_{\ell+i}}^j}{Y_{s_\ell}^j} ,\qquad \overline{S}^j:=\max_{\substack{\ell =1,\dots,N-n\\ i=0,\dots,n}} S_{t_0}^j \cdot \frac{Y_{s_{\ell+i}}^j}{Y_{s_\ell}^j}.
\end{equation}
Based on the above bounds $(\underline{S}^j)_{j=1,\dots,d}$ and $(\overline{S}^j)_{j=1,\dots,d}$, we define, for some $\delta>0$, the underlying space   by setting
\begin{equation}
\label{eq_definition_omega_empirical}
\underline{K}^j:= \underline{S}^j-\delta,~~\overline{K}^j:= \overline{S}^j+\delta,~~\text{ for }j=1,\dots,d.
\end{equation}

Then, we see that $\widehat{\PP}\in \mathcal{M}_1(\Omega)$ and for each $0<\varepsilon\leq  \delta$ and $\tau \in \R^{nd(N-n)}$ satisfying \eqref{eq_ineq_tau_epsilon} we have $\widehat{\PP}_{\tau} \in \mathcal{M}_1(\Omega)$.
\end{rem}

\begin{algorithm}[h!]
\SetAlgoLined
\SetKwInOut{Input}{Input}
\SetKwInOut{Output}{Output}

\Input{Time series $Y \in (0,\infty)^{dN}$;  Spot prices $S_{t_0} \in(0,\infty)^d$; Radius of the Wasserstein ball $\varepsilon>0$;  Amount of measures $N_{\operatorname{measures}} \in \N$ to generate;}
\Output{Set $\mathcal{P}\subset \mathcal{B}_\varepsilon(\widehat{\PP})$ of probability measures;}
\For {$m =1,\dots, N_{\operatorname{measures}}$}{ 
\begin{equation}\label{eq_generate_p}
\begin{aligned}
\tau_\ell^{(m)}:&=({\tau_{\ell}^{1,1}}^{(m)},\dots,{\tau_{\ell}^{n,d}}^{(m)}), \\
&\text{ with } {\tau_\ell^{k,j}}^{(m)} \sim \mathcal{N}(0,1) \text{ i.i.d. for all } j =1,\dots,d,~k=1,\dots,n,~\ell =1,\dots,N-n;\\
U_\varepsilon^{(m)}:&\sim \mathcal{U}((0,\varepsilon));\\
\widetilde{\tau}^{(m)}:&={ U_\varepsilon^{(m)}} \cdot \left(\frac{\tau_1^{(m)}}{\|\tau_1^{(m)}\|_{nd}}, \cdots, \frac{\tau_{N-n}^{(m)}}{\|\tau_{N-n}^{(m)}\|_{nd}} \right)\in \R^{nd(N-n)};\\
\widehat{\PP}_{\widetilde{\tau}^{(m)}}:&=\frac{1}{N-n}\sum_{\ell=1}^{N-n} \delta_{\left(\left(S_{t_0}\cdot \frac{Y_{s_{\ell+1}}}{Y_{s_\ell}},\cdots,S_{t_0}\cdot \frac{Y_{s_{\ell+n}}}{Y_{s_\ell}}\right)+\widetilde{\tau}_\ell^{(m)}\right)};
\end{aligned}
\end{equation}
}
Then, we have by construction $\max_{\ell =1,\dots,N-n} \|\widetilde{\tau}^{(m)}_\ell\|_{nd} < \varepsilon$ for all $m$ and thus, according to Lemma~\ref{lem_wasserstein}, it holds
\begin{equation}\label{eq_definition_set_p}
\mathcal{P}:= \left\{\widehat{\PP}_{\widetilde{\tau}^{(m)}}, m=1,\dots,N_{\operatorname{measures}} \right\}\subset \mathcal{B}_{\varepsilon}(\widehat{\PP}).
\end{equation}
 \caption{Generation of a set $\mathcal{P}\subset \mathcal{B}_{\varepsilon}(\widehat{\PP})$.}\label{algo_generation_P}
\end{algorithm}

\begin{algorithm}[h!]
\SetAlgoLined
\SetKwInOut{Input}{Input}
\SetKwInOut{Output}{Output}

\Input{
 Number of iterations $\mathfrak{B} \in \N$; Bounds $\underline{K}^j<\overline{K}^j$, $j=1,\dots,d$;
 Terminal prices $S_{t_n}=(S_{t_n}^1,\dots,S_{t_n}^d)\in  [\underline{K}^1,\overline{K}^1]\times \cdots \times [\underline{K}^d,\overline{K}^d]$;}
\Output{The indicators $(\one_{\left\{S_{t_n}\in \mathfrak{F}_b\right\}})_{b=1,...,2^{\mathfrak{B}}}$ corresponding to the finite partition $\mathcal{E}_\mathfrak{B}=\{\mathfrak{F}_1,\dots,\mathfrak{F}_{2^{\mathfrak{B}}}\}$ of $\Omega$ as defined in Pseudo-Algorithm~\ref{algo_generation_sigma};}
Initialize a vector $s=(s_1,\dots,s_\mathfrak{B})\in \R^{\mathfrak{B}}$;.\\
Initialize a vector $v\in \R^{2^\mathfrak{B}}$, with all entries equal to $0$;\\
\For {$i=1,\dots,\mathfrak{B}$}{
 Generate
 \[
 A_i=(a_1^{(i)},b_1^{(i)}]\times \cdots \times (a_d^{(i)}\times b_d^{(i)}] \subset [\underline{K}^1,\overline{K}^1]\times \cdots \times [\underline{K}^d,\overline{K}^d];
 \]
 }
\For {$i=1,\dots,\mathfrak{B}$}{
Set the $i$-th element of $s$ to 1 if $S_{t_n} \in A_i$ and 0 otherwise;
}
 Treat $s$ as a binary number and convert it into a decimal number $\widetilde{s}:= \sum_{i=1}^{\mathfrak{B}} s_i2^{i-1}$, with $0\leq \widetilde{s}\leq2^{\mathfrak{B}}-1$;\\
Set the $(\widetilde{s}+1)$-th entry of $v$ to 1;\\
 
Then, by construction, the indicator vector $(\one_{\left\{S_{t_n}\in \mathfrak{F}_b\right\}})_{b=1,...,2^{\mathfrak{B}}}$ coincides with $v$. 

 \caption{Generation of indicators corresponding to given partition and terminal prices.}\label{algo_generation_indicator_matrix}
\end{algorithm}

\subsection{The numerical algorithm to compute statistical arbitrage strategies}\label{sec_numerical_algorithm}
We rely on Theorem~\ref{thm_summary}, which ensures that $\Gamma_{B,L}(0,\sigma(S_{t_n})) \approx \Gamma_{B,L,k}^{\mathcal{N}\mathcal{N}}(0,\mathcal{F}_i)$ for sufficiently large values $i,k$, i.e., we choose $\Phi\equiv 0$ as discussed below Lemma~\ref{lem_compactness}. We remark that Pseudo-Algorithm~\ref{algo_generation_sigma} allows to generate finite partitions $(\mathcal{E}_i)_{i \in \N}$ of $[\underline{K}^1,\overline{K}^1]\times \cdots \times [\underline{K}^d,\overline{K}^d]$ contained in the $\sigma$-algebra $\sigma(S_{t_n})$. Based on this methodology, we propose a method specified in Algorithm~\ref{algo_generation_indicator_matrix} that allows to compute efficiently indicator functions of the form $\one_{\{S_{t_n} \in E\}}$ for $E \in \mathcal{E}_i$, $i \in \N$.
Note that this method provides an extremely efficient method that reduces the exponential complexity of the brute force approach to indicator computation to a linear time complexity.
Moreover, according to the considerations from Section~\ref{sec_ambiguity_set_p}, we propose the following Algorithm~\ref{algo_training_nn} to detect $\mathcal{P}$-robust statistical arbitrage strategies, where $\mathcal{P}$ is an ambiguity set defined as in \eqref{eq_definition_set_p}; see also Algorithm~\ref{algo_generation_P}.

\begin{algorithm}[h!]
\SetAlgoLined
\KwData{Time series $Y \in (0,\infty)^{dN}$; Spot values $S_{t_0}\in (0,\infty)^d$;}
\SetKwInOut{Input}{Input}
\SetKwInOut{Output}{Output}

\Input{Budget constraint $B>0$; Bounds $\underline{K}^j<\overline{K}^j, j=1,\dots,d$ to define $\Omega$; Penalization parameter $k\in \N$; Penalization function $\beta$; Hyper-parameters of the neural networks; Activation function $\varphi$; Number of iterations $N_{\operatorname{iter}}$; Number of measures $N_{\operatorname{measures}}$ contained in $\mathcal{P}$; The radius $\varepsilon>0$ of the Wasserstein-ball $\mathcal{B}_\varepsilon(\widehat{\PP})$;  Size of partition of $\Omega$: $2^\mathfrak{B}$ for some $\mathfrak{B}\in \N$; { Functions $\left({c_{\operatorname{trans}}}_i^j\right)_{i=0,\dots,n,\atop j=1,\dots,d}, \left({c_{\operatorname{spread}}}_i^j\right)_{i=0,\dots,n,\atop j=1,\dots,d}, \left({c_{\operatorname{short}}}_i^j\right)_{i=0,\dots,n,\atop j=1,\dots,d}$ to measure trading costs};}
\Output{Cash position $c\in \R$ of the strategy; \\
Trading positions $\left(\Delta_0^1,\dots,\Delta_0^d\right)\in [-B,B]^d$;\\  Trading strategies $\left(\Delta_i^1,\dots,\Delta_i^d\right)\in \mathfrak{N}_{id,B,L}$, $i=1,\dots,n-1$;}
Initialize parameter $c =0$;\\
Initialize parameter $\left(\Delta_0^1,\dots,\Delta_0^d\right)=(0,\dots,0)$;\\
\For{$i=1,\dots,n-1$}{
Initalize the parameters of the neural networks $\left(\Delta_i^1,\dots,\Delta_i^d\right)\in \mathfrak{N}_{id,B,L}$;
}
\For{$\operatorname{iter} =1,\dots,N_{\operatorname{iter}}$}{
Generate $(\tau^{(m)}_\ell)_{m=1,\dots,N_{\operatorname{measures}} \atop \ell =1,\dots,N-n}$, $\left(U_\varepsilon^{(m)}\right)_{m=1,\dots,N_{\operatorname{measures}}}$, and $\mathcal{P}=\left\{\widehat{\PP}_{\widetilde{\tau}^{(m)}}, m=1,\dots,N_{\operatorname{measures}} \right\} \subset \mathcal{B}_\varepsilon(\widehat{\PP})$ according\footnotemark   to Algorithm~\ref{algo_generation_P};\\
Set  $\widetilde{\tau}_{\ell,n}^{(m)}:=  U_\varepsilon^{(m)} \cdot \left(\frac{{\tau_\ell^{n,1}}^{(m)}}{\|{\tau_\ell}^{(m)}\|_{nd}},\cdots,\frac{{\tau_\ell^{n,d}}^{(m)}}{\|\tau_\ell^{(m)}\|_{nd}}\right)$ for $m=1,\dots,N_{\operatorname{measures}}$, $\ell=1,\dots,N-n$;\\
Set $\widetilde{S}_\ell^{(m)}:=\left(S_{t_0}\cdot \frac{Y_{s_{\ell+1}}}{Y_{s_\ell}},\dots,S_{t_0}\cdot \frac{Y_{s_{\ell+n}}}{Y_{s_\ell}}\right)+\widetilde{\tau}^{(m)}_\ell$ for $m=1,\dots,N_{\operatorname{measures}}$, $\ell=1,\dots,N-n$;\\
Set $\widetilde{S}_{\ell,n}^{(m)}:={S_{t_0}}\cdot \frac{Y_{s_{\ell+n}}}{Y_{s_{\ell}}}+\widetilde{\tau}_{\ell,n}^{(m)}$ for $m=1,\dots,N_{\operatorname{measures}}$, $\ell=1,\dots,N-n$;\\
Calculate, according to Algorithm~\ref{algo_generation_indicator_matrix}, the indicators $(\one_{\left\{\widetilde{S}_{l,n}^{(m)}\in \mathfrak{F}_b\right\}})_{b=1,...,2^{\mathfrak{B}}}$ for $m=1,\dots,N_{\operatorname{measures}}$, $\ell=1,\dots,N-n$;

Apply stochastic gradient descent / back-propagation to minimize the following\footnotemark
\begin{equation}\label{eq_algo_nn_equation_minimization}
\begin{aligned}
&c~+ k \sum_{\PP \in \mathcal{P}} \int \beta\left(\sum_{b=1}^{2^\mathfrak{B}}  \frac{\E_{\PP}\left[-h_{c,\Delta}(S)\one_{\mathfrak{F}_b}(S_{t_n})\right]}{\PP ({\mathfrak{F}_b})} \one_{\mathfrak{F}_b}(S_{t_n})\right) \D \PP \\
=&c+k \sum_{m=1}^{N_{\operatorname{measures}}} \sum_{i=1}^{N-n} \frac{1}{N-n} \cdot \beta \left(\sum_{b=1}^{2^\mathfrak{B}}\frac{\frac{1}{N-n}\displaystyle{\sum\limits_{\ell=1}^{N-n}} -h_{c,\Delta}\left(\widetilde{S}_\ell^{(m)}\right)\one_{\left\{\widetilde{S}_{\ell,n}^{(m)} \in \mathfrak{F}_b\right\}}}{\frac{1}{N-n}\displaystyle{\sum\limits_{\ell=1}^{N-n}}\one_{\left\{\widetilde{S}_{\ell,n}^{(m)} \in \mathfrak{F}_b\right\}}} \one_{\left\{\widetilde{S}_{i,n}^{(m)}\in \mathfrak{F}_b\right\}}\right)
\end{aligned}
\end{equation}
with respect to the parameters of the neural networks $\left(\Delta_i^j\right)_{i=1,\dots,n-1}^{j=1,\dots,d}$ and $(\Delta_0^j)^{j=1,\dots,d}$, $c$;}

 \caption{Computation of  $\mathcal{P}$-robust statistical arbitrage strategies.}\label{algo_training_nn}
\end{algorithm}
\section{Real-world examples}\label{sec_real_world_examples}
\footnotetext[7]{While we use Algorithm~\ref{algo_generation_P}  to generate samples from the ambiguity set $\mathcal{P}\subseteq \mathcal{B}_\varepsilon(\widehat{\PP})$, any methodology creating samples from $\mathcal{P} \subseteq \mathcal{B}_\varepsilon(\widehat{\PP})$ could be applied here.}
\footnotetext{Note that we define $\tfrac{0}{0}: = 0$ to avoid numerical instabilities when minimizing the expression in \eqref{eq_algo_nn_equation_minimization}.}
In this section we show that the presented approach can be applied successfully to real-world-data. In particular, the experiments provide empirical evidence for the applicability of the determined set of physical measures in Section~\ref{sec_ambiguity_set_p} even in a multi-asset environment. Moreover, we highlight the extraordinary performance of our robust trading approach in crisis-periods. 

\subsection{Implementation}
To apply Algorithm~\ref{algo_training_nn}, we use the framework provided by \emph{PyTorch} (\cite{NEURIPS2019_9015}) and train neural networks with a Batch-normalization layer for the input and $3$ fully-connected layers with $32d$ neurons in the first layer, $64d$ neurons in the second layer, and $128d$ neurons in the third layer, where $d$ is the number of considered assets  (the coefficient $d$ varies across the provided examples). We consider $n=9$ future times\footnote{We chose $n=9$ since this number provides a numerically tractable quantity of trading days that can be taken into account while still allowing to consider a large number of stocks. This leads to a trading period of $10$ days (recall that there is no trading on the last day of the period) and indeed, empirical findings indicate that most of the autocorrelation of daily stock returns is contained in the most recent $10$ stock returns (compare \cite[Table 3.1]{ding1993long}). 
} $t_1,\dots, t_9$ and hence the network architecture consists of $n-1$ independent fully-connected subnetworks $\left(\Delta_{i}^1,\dots,\Delta_{i}^d\right) \in \mathfrak{N}_{id,B,L}$ for $i=1,
\dots,n-1$, whereas $\Delta_{0}^1,\dots,\Delta_{0}^d$ are constants which are trained, too. Each neural network $\left(\Delta_{i}^1,\dots,\Delta_{i}^d\right)$, $i=1,\dots,n-1$  takes the past realizations of the $d$ assets of the $i$ trading days after $t_0$ as inputs and outputs the trading positions of all assets at day $t_i$. The activation functions of all hidden layers are \emph{ReLU} functions. To only consider neural networks from the set $\mathfrak{N}_{m,B,L}$ for $m \in \N$ and, in particular, to bound the output of the neural network by the constant $B:=10$, we use in the notation of \eqref{eq_nn_function} for $\varphi_l$ a $\tanh$-activation function (which is bounded by the absolute value 1) and matrices $A_i$ with biases $b_i$ which are randomly initialized for $i=1,\dots,l-1$ (using standard initialization in \emph{PyTorch}), whereas $A_l: = \operatorname{diag}(B,\dots,B)$ and $b_l:= 0$ . A predetermined Lipschitz constant $L>0$ could be enforced through per-layer Batch-normalization (\cite{gouk2021regularisation}), weight restrictions (\cite{anil2019sorting}), or through regularization (\cite{petzka2017on}), however we decided to not use any of these methods specifically for this purpose and to allow, a priori, $L$ to be arbitrarily large.
 However, note that a posteriori $L<\infty$, since the parameters of the neural networks in practice remain bounded over the training period (see e.g., \cite[Fig.~4]{baes2021low}).
 Moreover, a large Lipschitz constant $L>0$ is preferred to not exclude (too many) tradings trategies, and in order to keep our approach as simple as possible it is sufficient to know about the existence of \emph{some} constant $L$ which can be arbitrarily large but finite.

 We apply Algorithm~\ref{algo_training_nn} with $N_{\operatorname{iter}}:= 100$ training iterations and choose a penalization parameter\footnote{Note that a small penalization parameter such as $k=1$ turns out to be sufficient in practice. In contrast, choosing $k$ too large may result in numerical instabilities as observed in \cite[Section 4.1]{henry2019martingale} regarding a similar penalization approach. Compare also the relatively small penalization parameters chosen in \cite[Section 4]{eckstein2019computation}.} of $k:=1$, a radius $\varepsilon:=d$ of the Wasserstein Ball around the empirical measure, $N_{\operatorname{measures}}:=5$, {either with zero transaction costs,} per share transaction costs\footnote{These numbers represent conservative upper estimates of transaction costs that apply in practice. Typically, transaction costs are even smaller. Compare, e.g., \cite{interactivebrokes} where the fees for trading in different markets are indicated. {Indeed, note that according to \cite{interactivebrokes}, trading  in stocks incurs transaction costs of not more than $0.0035 \$$ per share. Also note that the average stock price of the $50$ constituents of the S\&P~500 considered in Section~\ref{exa_large_number} is $71.54 \$$.
Choosing }{proportional transaction costs with $1~bp$ however leads with an average stock value of $71.54 \$ $ to average transactions costs of $71.54 \$\cdot 0.0001 = 0.007154\$ $ being significantly higher than $0.0035\$$. }} with {${\lambda_{\operatorname{trans}}}_i^j:=0.01$,
 or proportional transaction costs with ${\lambda_{\operatorname{trans}}}_i^j:=0.0001$. Moreover we consider bid-ask spreads with ${\lambda_{\operatorname{spread}}}_i^j:=0.0002$, daily borrowing costs with ${\lambda_{\operatorname{short}}}_i^j:=0.1/252$},  a penalization function $\beta(x):=\max\{x,0\}^2$ for all the examples below, and we  report a value of $\mathfrak{B}:=12$ leading to $2^{12}$ sets that are considered when applying Algorithm~\ref{algo_generation_indicator_matrix} in all of the following examples. 
Moreover, we use the bounds $\underline{K}^j:= \underline{S}^j-\delta,~~\overline{K}^j:= \overline{S}^j+\delta,~~\text{ for }j=1,\dots,d,$ as elaborated in Remark~\ref{rem_choice_Omega}, where we set $\delta: = \varepsilon = d$. Further, we adopt a learning rate of $1e^{-3}$ for low dimensional cases (1 or 2 assets) and a learning rate of $1e^{-4}$ for high dimensional cases (10 assets and more), as empirical tests have shown that choosing different learning rates for low and high dimensional cases, respectively, significantly improves the training speed. In the sequel (Table~\ref{tbl_profit_example_1},~\ref{tbl_profit_example_3},~\ref{tbl_profit_example_4},~\ref{tbl_profit_example_5}, ~\ref{tbl_profit_example_6},~\ref{tbl_profit_example_7},~\ref{tbl_profit_example_8}), we report the numerical results by taking the average of $50$ independent experiments to balance off the effect of random weight initialization. In each of these $50$ experiments we partition the testing period into consecutive sub-periods of length $n$ to which we apply our trained strategy.

 For the reader's convenience the applied \emph{Python}-codes are provided and can be found under \href{https://github.com/YINDAIYING/Deep-Robust-Statistical-Arbitrage}{https://github.com/YINDAIYING/Deep-Robust-Statistical-Arbitrage}.

\begin{rem}\label{rem_normalization}
As neural networks turn out to be extremely sensitive with respect to the scale of the input (compare e.g. \cite{ba2016layer, sola1997importance}), we decided in the implementation to normalize the inputs by scaling the stock prices of each asset separately such that the spot values of each asset equals to $100$, i.e. $S_{t_0}=(100,...,100)\in\mathbb{R}^d$ and therefore to consider a Wasserstein-ball around $\widehat{\PP}$ that takes these normalized spot values into account.
\end{rem}

\subsection{Comparison with the linear programming approach from \cite{lutkebohmert2021robust}} \label{comparison_lp}

In this example, we compare our approach, relying on Algorithm~\ref{algo_training_nn}, with the linear programming~(LP) approach  proposed in \cite[Section 5.2]{lutkebohmert2021robust}. Note that the LP-approach in \cite{lutkebohmert2021robust} is computationally tractable only in low dimensions, typically $d \leq 3$. To this end, we borrow the setting from \cite[Section 5.2]{lutkebohmert2021robust} and  also consider the price evolution of the \emph{EUROSTOXX 50} in a training period from January $1995$ until  August $2013$, and in a testing period  ranging from  September $2013$ until July $2018$, while assuming that $t_1-t_{0} = 11$, $t_2-t_1=10$, i.e., we face a trading period of approximately one month with one intermediate trade. We train strategies according to Algorithm~\ref{algo_training_nn} while\footnote{{Note that in this section only, to keep the results comparable with the LP-approach from \cite{lutkebohmert2021robust}, we decided to not include neither transaction costs, nor borrowing costs, nor bid-ask spreads.}} using different bounds $\underline{K}^1$ and $\overline{K}^1$ and then test the performance of the trained strategies for various values of $\underline{K}^1$ and $\overline{K}^1$. The performance measure is defined by the Sharpe ratio of the strategies applied to the testing period in dependence of $\overline{K}^1-\underline{K}^1$, averaged over all $50$ independent experiments. In Figure \ref{fig_NN_vs_LP}, the Sharpe ratio of the calculated strategy is plotted against the difference between the upper bound $\overline{K}^1$ and the lower bound $\underline{K}^1$ centered at the price of $100$. One can observe that the Sharpe ratio of the proposed approach is slightly higher than the Sharpe ratio of the LP-approach from \cite{lutkebohmert2021robust} once the bounds width exceeds a certain value (around $30$ in this example), reaches its highest value when the width equals a value of about $90$, and declines slightly when increasing the width of the bounds further.

As the best trading performance, i.e., the highest Sharpe ratio, can be observed when the asset values are assumed to be restricted by relatively small bounds, this provides strong evidence that the underlying setting, which assumes (large) bounds for stock prices, imposes no constraint for the detection of optimal trading strategies.


\begin{figure}[h!]
\begin{center}
\includegraphics[width=10cm]{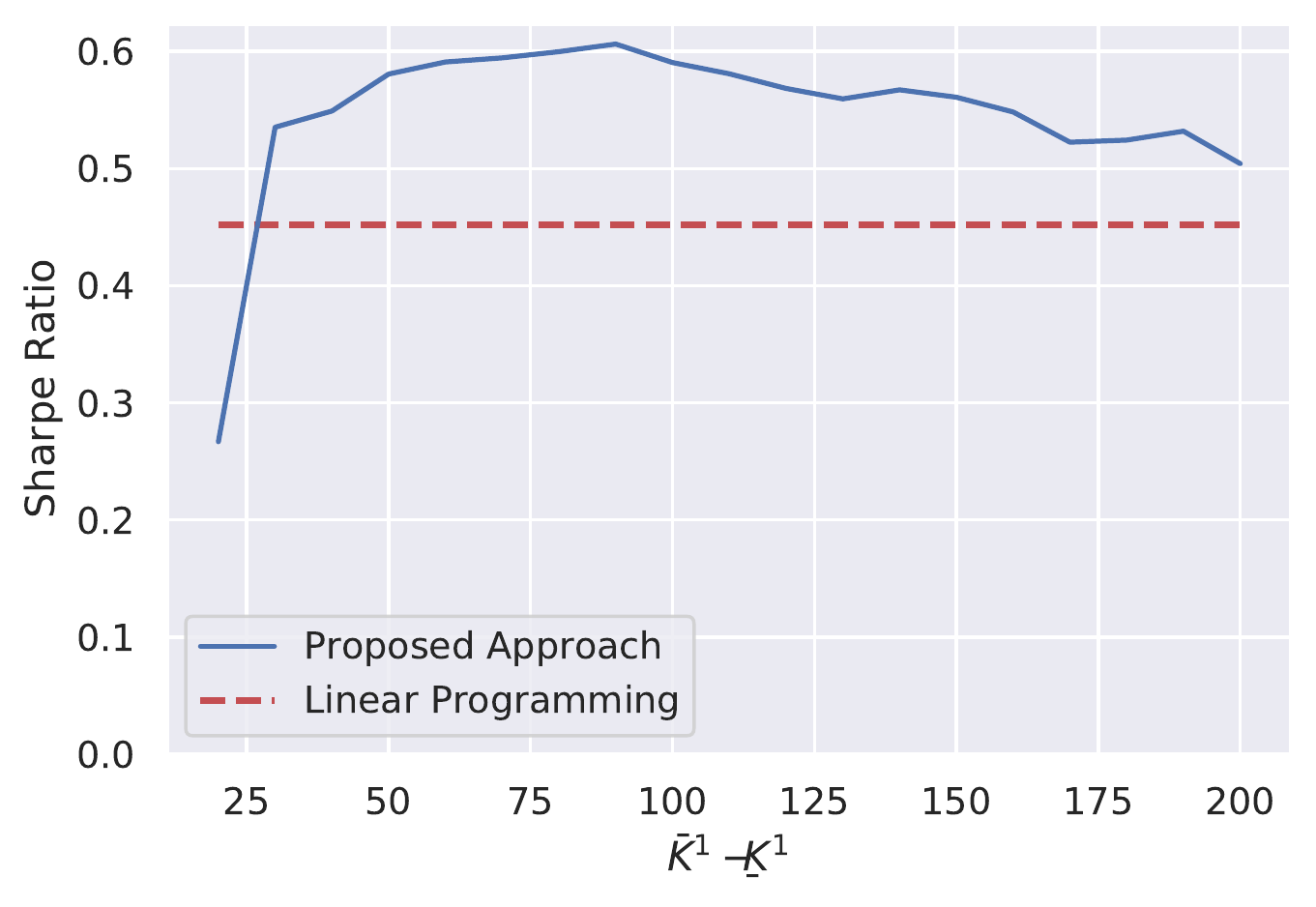}
\caption{We compare the Sharpe ratio of a trading strategy that was trained in line with Algorithm~\ref{algo_training_nn} with the Sharpe ratio of a strategy that was trained according to an LP-approach described in \cite[Section 5.2]{lutkebohmert2021robust}. We depict the Sharpe ratio in dependence of the width of the bounds $\overline{K}^1-\underline{K}^1$. The underlying security is the \emph{EUROSTOXX 50}, while the testing period ranges from September $2013$ until July $2018$.}\label{fig_NN_vs_LP}
\end{center}
\end{figure}

 In this example, the bounds \eqref{eq_s_underline_overline} mentioned in Remark~\ref{rem_choice_Omega} take the values $\underline{S}^1= 72.65$ and $\overline{S}^1=122.86$, respectively. We  set $\varepsilon=d=1$ and hence have  $(\overline{S}^1+\varepsilon)-(\underline{S}^1-\varepsilon)=52.21$. Figure~\ref{fig_NN_vs_LP} supports that the choice of these bounds is reasonable, since increasing the interval width further does not significantly improve the performance of our trading approach.

Note that Figure~\ref{fig_NN_vs_LP} also illustrates that both LP-approach and the neural network approach from Algorithm~\ref{algo_training_nn} perform remarkably well in the considered setting, while the neural network approach even marginally outperforms the LP-approach from \cite{lutkebohmert2021robust} which we here used as benchmark to compare. The drawback of the LP-approach however is that it is no more applicable in higher dimensions. We provide with the examples from Section~\ref{exa_large_number} empirical evidence that our approach (Algorithm~\ref{algo_training_nn}) can still be  applied successfully in higher dimensions in contrast to the LP-approach.

\subsection{Outperformance of the market}\label{exa_bad_market}
We consider $d=2$ underlying securities. These are the American stock market index $S\&P~500$ and the European market index \emph{EUROSTOXX}~$50$, respectively.
To estimate the ambiguity set $\mathcal{P}$ according to the methodology proposed in Section~\ref{sec_ambiguity_set_p} we consider historic daily data of both underlying indices from $1986/12/31$ to $2006/12/01$. This results in $N=5000$ observation dates. We set the number of future trading days to $n=9$ and consider daily trading. We train a trading strategy according to Algorithm~\ref{algo_training_nn} and test it on the period from $2006/12/04$ to $2013/01/25$, in which the considered indices perform remarkably bad\footnote{We refer to \cite{fischer2018deep,ghosh2021forecasting,krauss2017statistical} for detailed discussions of different trading periods over the last $30$ years.}, compare Figure~\ref{fig_test_period} which depicts both the evolution in the train period and the test period.  In practice, to account for recent changes in the underlying time series, it might be important to train the model with streamed data, that is, to fine-tune the model with the most recent incoming data, which is known as online-learning in the literature, see also \cite{de2018advances}. We adjust our framework to online-learning by performing another $5$ iterations of fine-tuning with the augmented data during the test period every time the trading window is shifted forward. Given that there are $150$ non-overlapping trading windows in the test period, the total number of iterations in the setting will be $100+5\times150=850$, compared to merely $100$ iterations without incorporating online-learning. Due to the increased computational costs, we only demonstrate online-learning in this  $S\&P~500$ and \emph{EUROSTOXX}~$50$ scenario, see also Table~\ref{tbl_profit_example_1}.
\begin{figure}[h!]
\begin{center}
\includegraphics[width=10cm]{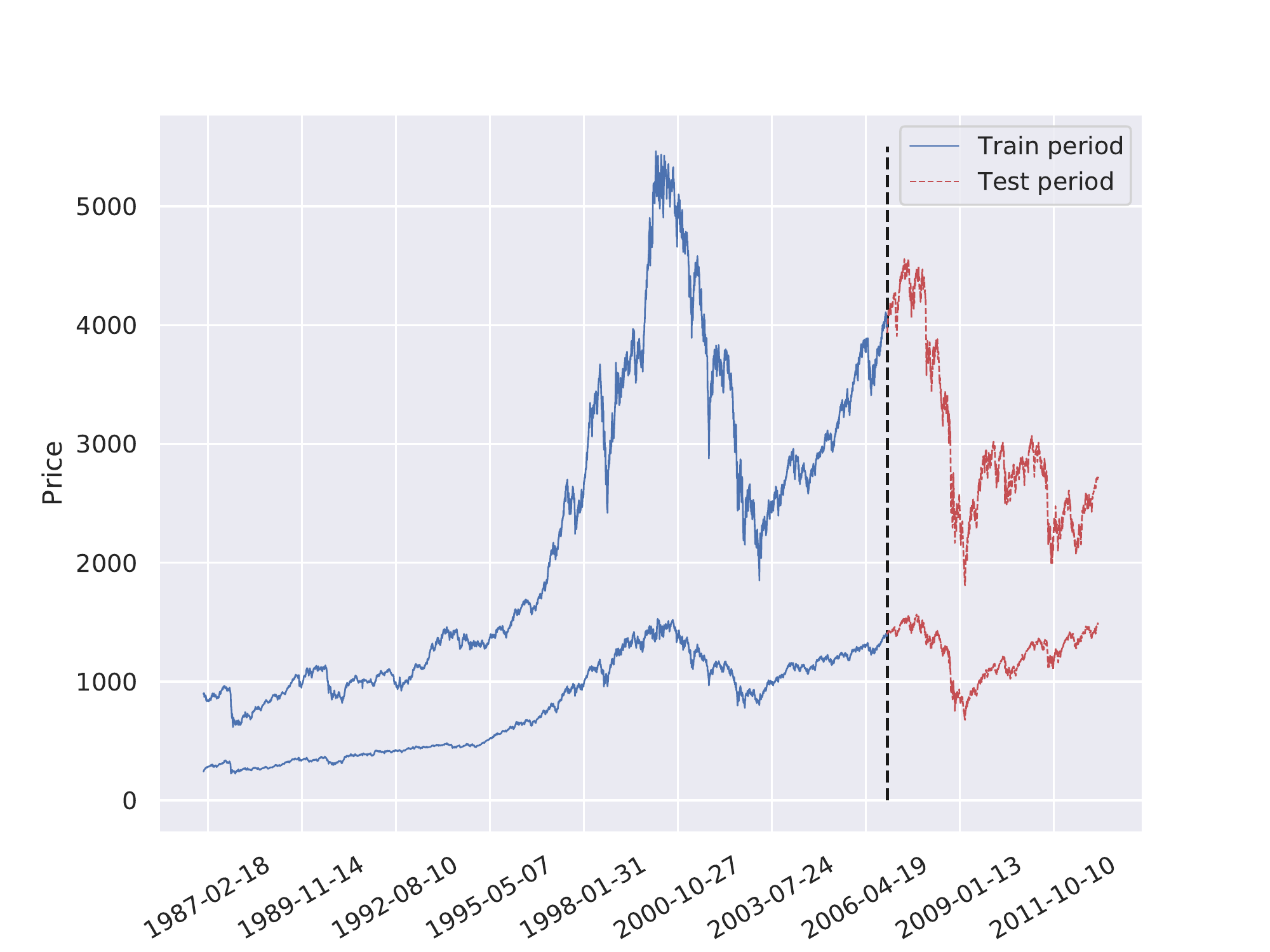}
\caption{The figure shows the evolution of the $S\&P~500$ (bottom) and the \emph{EUROSTOXX}~$50$ (top) in the training period from  $1986/12/31$ to $2006/11/30$ (blue) and in the testing period from $2006/11/31$ to $2013/01/25$ (red).}\label{fig_test_period}
\end{center}
\end{figure}

The results in Table~\ref{tbl_profit_example_1} showcase that our trained strategy clearly outperforms the market and therefore provide empirical evidence that the robust strategies determined by our approach are able to cut losses even in extremely difficult market scenarios. { Moreover, in Figure~\ref{equity_curve_stoxx} we depict the \emph{equity curve} of the trained strategies, i.e., the cumulated net profit of the trained strategy during the test period. As the profit depends on the random initialization we decided to depict the equity curve on a percentile basis after having evaluated $50$ experiments.}

\begin{table}[h!]
	\small
	\begin{center}
		\begin{tabular}{l>{\bfseries}c>{}c>{\bfseries}c>{}c>{\bfseries}c>{}c>{\bfseries}c} \toprule
			
			{ Transaction Costs}& 0 & $0$ \text{B}\&\text{H} & Prop.& Prop. $\text{B}\&\text{H}$ & P.S. & P.S. $\text{B}\&\text{H}$  &  {P.S. O.L.}\\
			
			\midrule
			Overall Profit& 8666.13 & -9932.61 &7974.68 & -11203.71&9293.87 & -9993.21 & 10875.71\\ 
			Average Profit\tablefootnote{Here, and in all other tables \emph{Average Profit} refers to overall profit divided by the number of periods with $n=9$ trading days on the testing data.}   &57.77 & -66.22 &53.16& -74.69& 61.96 & -66.62& 72.5\\
			\% of Profitable Trades &55.07 & 53.33 &53.65 & 52.67& 55.88 & 53.33& 57.21\\ 
			Max. Profit &2810.23 & 3375.2 & 2797.42 & 3367.86& 2956.79 & 3374.79& 2803.53\\ 
			Min. Profit &-845.17 & -5430.9 & -853.69 & -5438.6&-839.02&  -5431.31& -1393.25\\ 
			Sharpe Ratio & 0.9376 & -0.221 &0.864 &-0.249 & 0.9948 & -0.222& 0.9643\\ 
			Sortino Ratio& 2.361 & -0.285 &2.1681 &-0.321 & 2.5577 & -0.286& 1.8231\\  \bottomrule
		\end{tabular}
		\caption{The Table describes the success of the trained neural network statistical arbitrage strategies (bold) in the setting of Section~\ref{exa_bad_market}, where we consider trading in the  $S\&P~500$ and the \emph{EUROSTOXX} in a test period containing an overall declining market scenario which is depicted in Figure~\ref{fig_test_period}. The results are shown in dependence of { zero transaction costs ($0$), nonzero transaction costs with proportional~(Prop.), per share transaction costs~(P.S.), and online learning with per share transaction costs (P.S. O.L.), respectively}. Further, we compare the results with a buy-and-hold\tablefootnote{Note that we consider here the profits of buy-and-hold strategies which are held for $n=9$ days. Setting up and closing out the position both incurs transaction costs which are not negligible due to the relatively short length of the trading period. } strategy~(B$\&$H) with $\Delta_i^k = 10$ for all $i=0,\dots,9$ and all $k=1,2$. { We  also report the 10-day average profits of the one-time-buy-and-hold strategy of the above three scenarios respectively, i.e., without opening and closing the positions during each trading window\tablefootnote{Note that in the case without transaction costs and in the case with per share transaction costs, the one-time-buy-and-hold strategy leads to a worse outcome than the buy-and hold position which is closed every 9 days and than reopened the next day. This at first glance unintuitive result is the consequence of considering a \emph{bearish} testing period where closing out a position and avoiding exposure to a decreasing stock over night can be favourable even though transaction costs apply. }. (0 B\&H: $-74.07$, Prop. B\&H: $-74.13$, P.S. B\&H: $-74.07$)}.}
		\label{tbl_profit_example_1}
	\end{center}
\end{table}

\begin{figure}[h!]
	\begin{center}
		\includegraphics[width=10cm]{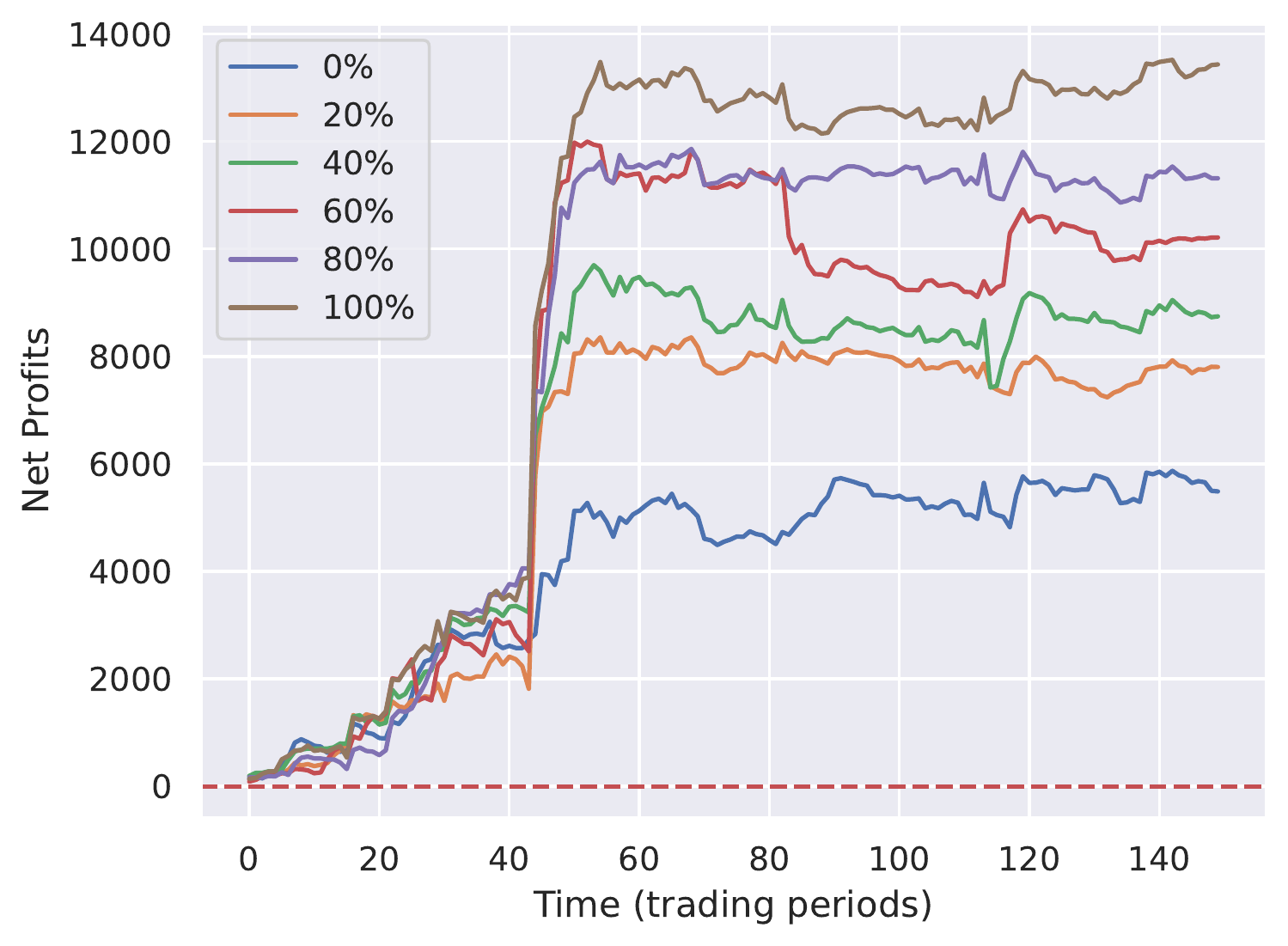}
		\caption{The figure shows the \emph{equity curves} of the trained strategies with per share transaction costs during the testing period depicted in Figure \ref{fig_test_period}. The y-axis represents the cumulated net profits as the trading window propagates. Since a total of 50 independent experiments are conducted, we plot the equity curves by quantiles, where the rank is based on the net profit of the experiment at the end of the testing period. }\label{equity_curve_stoxx}
	\end{center}
\end{figure}

\begin{figure}[!htb]
	\begin{minipage}{0.32\textwidth}
	\includegraphics[width=\linewidth]{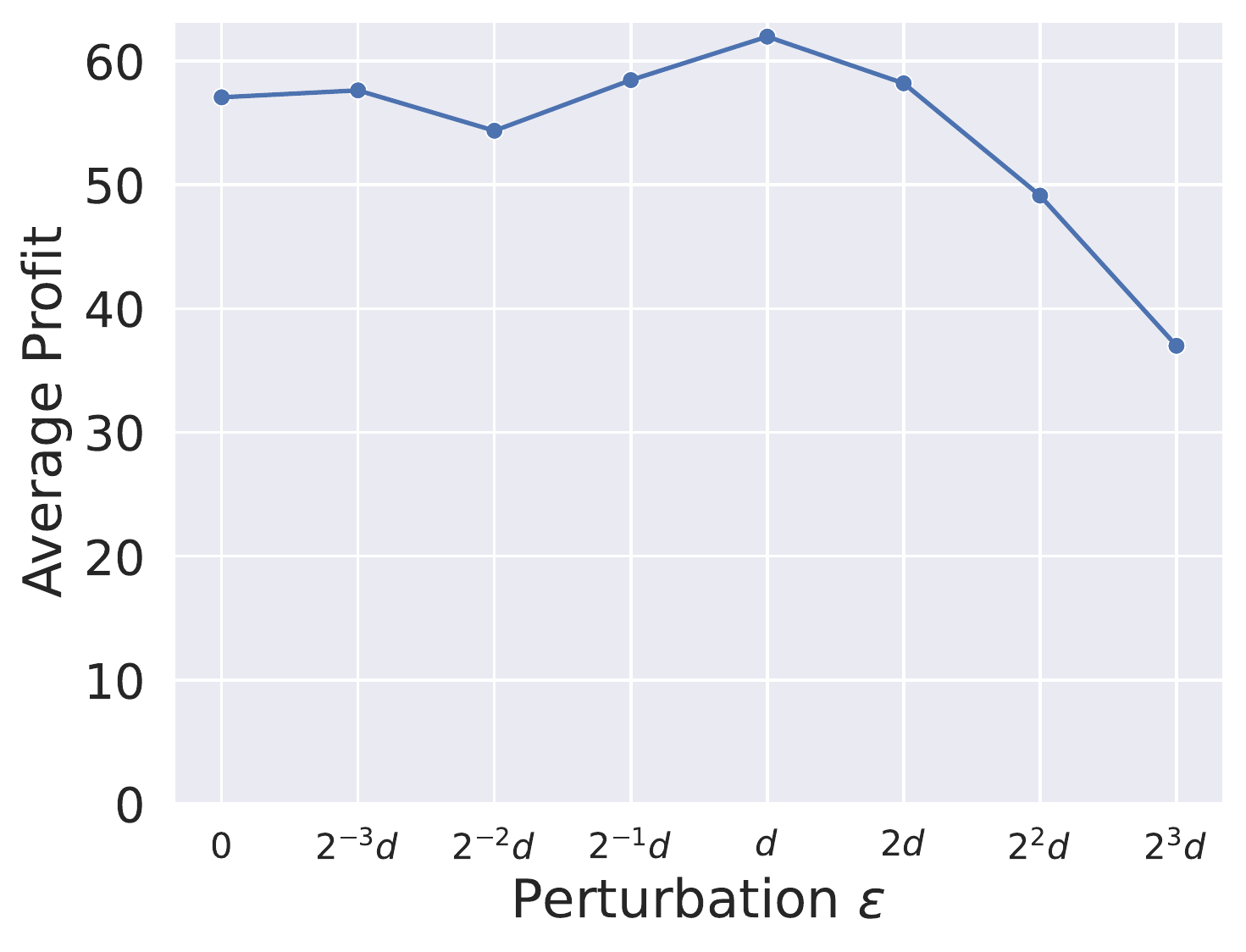}
	\end{minipage}\hfill
	\begin{minipage}{0.32\textwidth}
	\includegraphics[width=\linewidth]{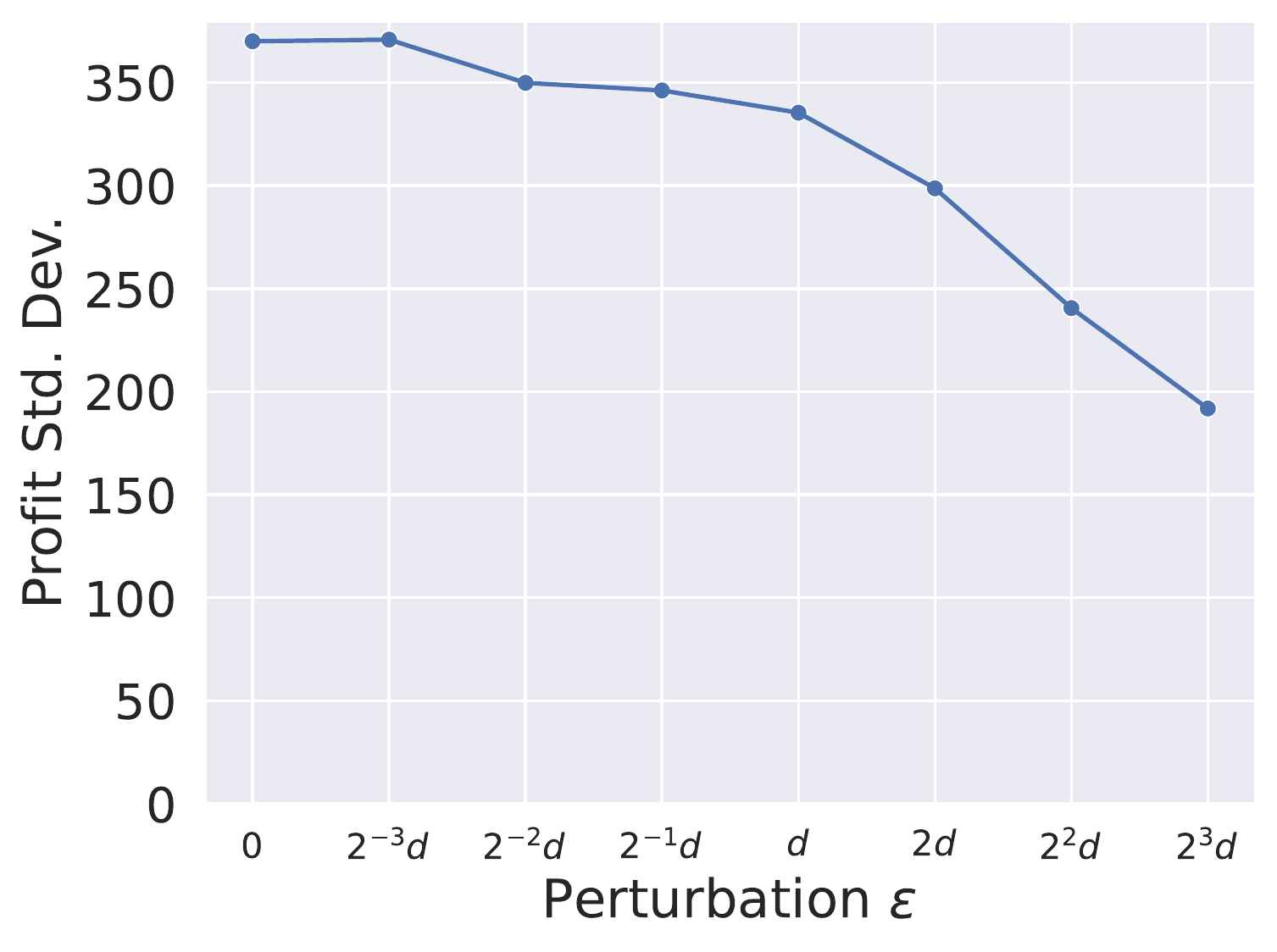}
	\end{minipage}\hfill
	\begin{minipage}{0.32\textwidth}%
	\includegraphics[width=\linewidth]{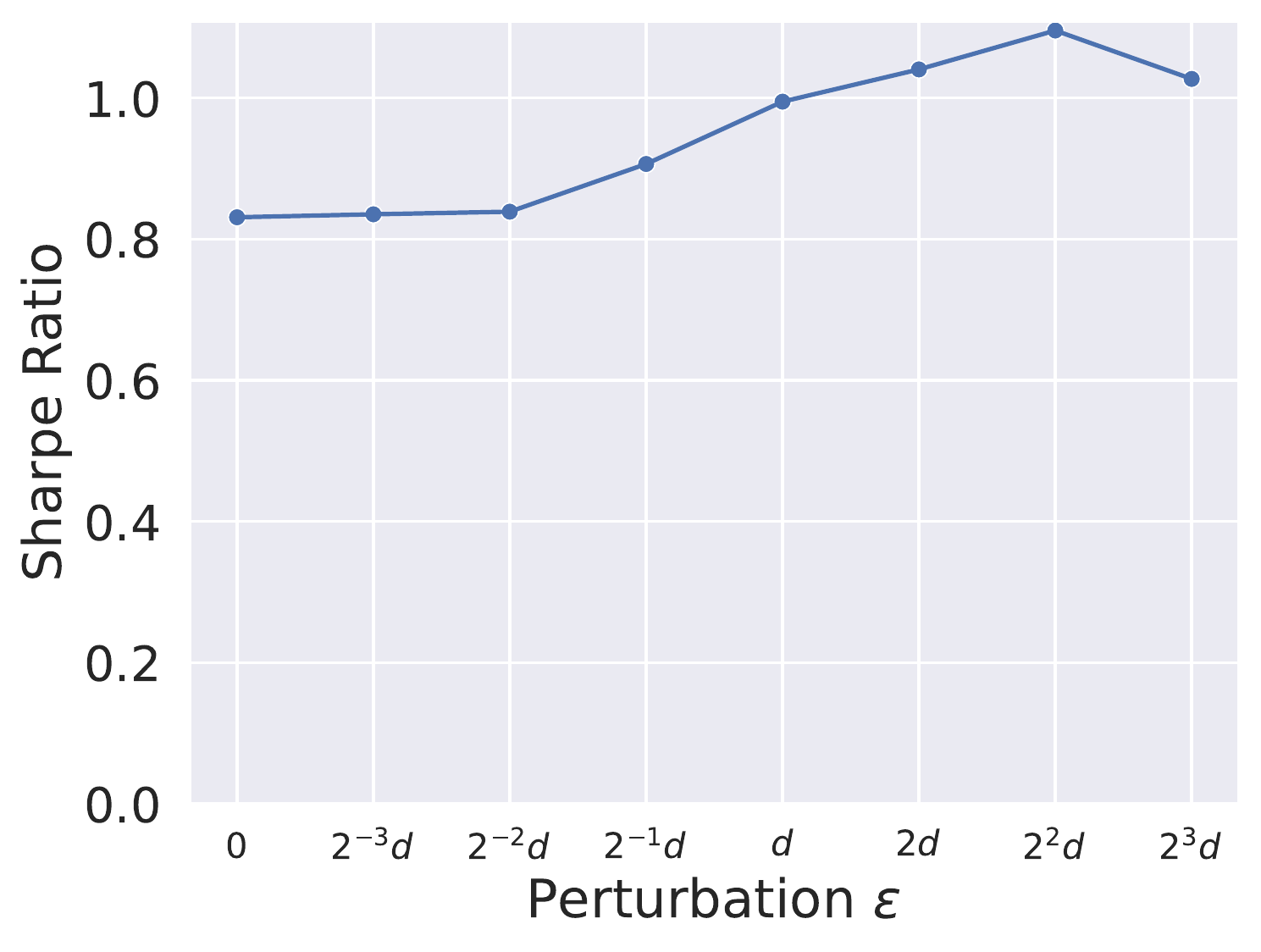}
	\end{minipage}
	\caption{These figures show the $\varepsilon$-sensitivity analysis in the scenario of trading the stocks $S\&P~500$ and \emph{EUROSTOXX} where per share transaction costs are considered. The results provide evidence that introducing a non-zero perturbation $\varepsilon$ is indeed beneficial for a better profit and sharpe ratio. Moreover, it can be seen that a larger $\varepsilon$, i.e. more uncertainty, makes the model more risk-averse, and therefore leads to a smaller standard deviation of the profits.}\label{2_assets_sensitivity_analysis}
\end{figure}

\subsection{Trading in a large number of stocks}\label{exa_large_number}
In this example we consider $d=10,20,30,40$, and $50$ securities, respectively. { The securities correspond to constituents of the  S\&{P}~500 index which were selected while respecting diversity across different industry sectors according to the \emph{Global Industry Classification Standard (GICS)}}\footnote{{The considered companies for all dimensions  can be found in the provided github-repository under} \href{https://github.com/YINDAIYING/Deep-Robust-Statistical-Arbitrage}{https://github.com/YINDAIYING/Deep-Robust-Statistical-Arbitrage}. In the case $d=10$ we consider the companies with ticker symbols: OKE, PG, RCL, SBUX, UNM, USB, VMC, WELL, WMB, XOM. }. { The diversity-based selection criterion was chosen to avoid the construction of a biased and unbalanced stock universe due to the predominance of companies from \emph{Information Technology} - sector in the US market.}
The training period ranges from $2000/01/03$ to $2015/11/25$ and the testing period from $2015/11/26$ to $2020/09/02$, compare Figure~\ref{fig_10_asset_test_period}, where we illustrate $d= 10$ securities. In Table~\ref{tbl_profit_example_3}, Table~\ref{tbl_profit_example_4}, Table~\ref{tbl_profit_example_5}, Table~\ref{tbl_profit_example_6}, and Table~\ref{tbl_profit_example_7}, we depict the results of the trained strategy evaluated on the testing period.
 { Moreover, in Figure~\ref{equity_curve_40_assets} we depict the \emph{equity curve} of the trained strategy for $d=40$ assets.}

\begin{figure}[h!]
\begin{center}
\includegraphics[width=10cm]{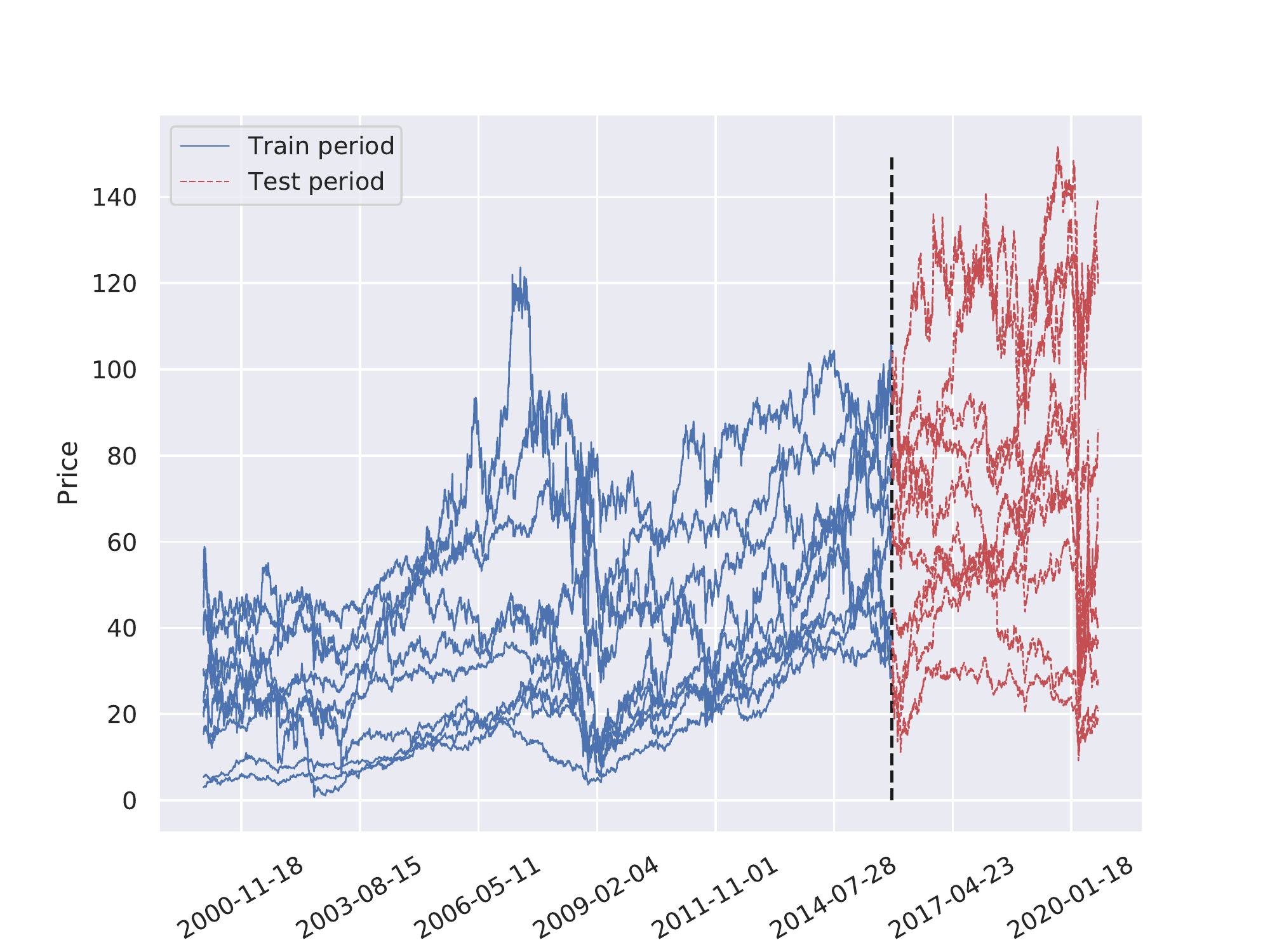}
\caption{The Figure shows the evolution of 10 securities in the training period from $2000/01/03$ to $2015/11/25$ (blue) and the testing period from $2015/11/26$ to $2020/09/02$ (red).}\label{fig_10_asset_test_period}
\end{center}
\end{figure}

\begin{table}[h!]
	\small
	\begin{center}
		\begin{tabular}{l>{\bfseries}c>{}c>{\bfseries}c>{}c>{\bfseries}c>{}c} \toprule
			
			{ Transaction Costs}& 0 & $0$ \text{B}\&\text{H} & Prop.& Prop. $\text{B}\&\text{H}$ & P.S. & P.S. $\text{B}\&\text{H}$ \\
			
			\midrule
			
			Overall Profit& 113.34 & -283.70& 66.38 & -449.06& 88.05 &  -523.70\\ 
			Average Profit  &0.94 & -2.36& 0.55 &-3.74 & 0.73 & -4.36\\
			\% of Profitable Trades &56.65&60.0 &55.15 & 60.0& 56.07 & 60.0\\ 
			Max. Profit  &221.97& 1006.58 &220.56 &1005.38 & 237.01 & 1004.58\\ 
			Min. Profit &-287.89 & -2027.22 &-288.94 & -2028.37& -316.92 & -2029.22\\
			Sharpe Ratio & 0.0979 & -0.043&0.0346 &-0.068 & 0.054 & -0.079\\ 
			Sortino Ratio& 0.1333 & -0.037&0.0646 & -0.058& 0.080 & -0.068\\  \bottomrule
		\end{tabular}
		\caption{We consider $d=10$ securities. The  table shows the outcome of trading strategies that were trained according to Algorithm~\ref{algo_training_nn} (bold) in the setting of Section~\ref{exa_large_number} in dependence of { zero transaction costs ($0$), nonzero transaction costs with proportional~(Prop.), per share transaction costs~(P.S.), and online learning with per share transaction costs (P.S. O.L.), respectively}, and in comparison with a buy-and-hold strategy (B$\&$H) with $\Delta_i^k=10$ for all $i=0,\dots,9$ and all $k=1,\dots,10$. { We  also report the 10-day average profits of the one-time-buy-and-hold strategy of the above three scenarios respectively, i.e., without opening and closing the positions during each trading window. (0 B\&H: -0.77, Prop. B\&H: -0.78, P.S. B\&H: -0.79)}}
		\label{tbl_profit_example_3}
	\end{center}
\end{table}

\begin{table}[h!]
	\small
	\begin{center}
		\begin{tabular}{l>{\bfseries}c>{}c>{\bfseries}c>{}c>{\bfseries}c>{}c} \toprule
			
			{ Transaction Costs}& 0 & $0$ \text{B}\&\text{H} & Prop.& Prop. $\text{B}\&\text{H}$ & P.S. & P.S. $\text{B}\&\text{H}$ \\
			
			\midrule
			
			Overall Profit& 372.38 & -386.45&165.60 & -815.91& 376.33 & -866.45\\ 
			Average Profit  &3.1 & -3.22&1.9 &-6.8 & 3.14 & -7.22\\
			\% of Profitable Trades &62.73 & 61.67&61.28 &61.67 & 64.28 & 61.67\\ 
			Max. Profit  &592.06 & 2313.86&588.82 &2310.66 & 657.57 & 2309.86\\ 
			Min. Profit &-1123.36 & -3968.04& -1126.8 & -3971.07& -1273.33 & -3972.04\\
			Sharpe Ratio & 0.0862 & -0.026&0.0311 & -0.054& 0.1077 & -0.058\\ 
			Sortino Ratio& 0.0821 & -0.024& 0.0296 & -0.051& 0.0858 & -0.054\\  \bottomrule
		\end{tabular}
		\caption{We consider $d=20$ securities. The  Table provides the outcome of trading strategies that were trained according to Algorithm~\ref{algo_training_nn} (bold) in the setting of Section~\ref{exa_large_number} in dependence of { zero transaction costs ($0$), nonzero transaction costs with proportional~(Prop.), per share transaction costs~(P.S.), and online learning with per share transaction costs (P.S. O.L.), respectively}. Moreover, the results are compared with a buy-and-hold strategy (B$\&$H) with $\Delta_i^k=10$ for all $i=0,\dots,9$ and all $k=1,\dots,20$. { We  also report the 10-day average profits of the one-time-buy-and-hold strategy of the above three scenarios respectively, i.e., without opening and closing the positions during each trading window. (0 B\&H: 5.41, Prop. B\&H: 5.39, P.S. B\&H: 5.38)}}
		\label{tbl_profit_example_4}
	\end{center}
\end{table}

\begin{table}[h!]
	\small
	\begin{center}
		\begin{tabular}{l>{\bfseries}c>{}c>{\bfseries}c>{}c>{\bfseries}c>{}c} \toprule
			
			{ Transaction Costs} & 0 & $0$ \text{B}\&\text{H} & Prop.& Prop. $\text{B}\&\text{H}$ & P.S. & P.S. $\text{B}\&\text{H}$ \\
			
			\midrule
			
			Overall Profit& 870.82 &-329.33 &603.89 & -902.33&809.36 & -1049.33\\ 
			Average Profit  &7.26 &-2.74 &5.03 &-7.52 &6.74 &  -8.74\\
			\% of Profitable Trades &61.85 &62.5 &59.93 & 61.67& 63.97 & 60.83\\ 
			Max. Profit  &942.39 & 3028.37&927.85 &3024.03&1135.97 & 3022.37\\ 
			Min. Profit &-1709.51 & -5077.4&-1712.0 & -5081.55& -2287.09 & -5083.4\\
			Sharpe Ratio & 0.1279 &-0.017 &0.0507 & -0.047&0.1276 & -0.054\\ 
			Sortino Ratio& 0.1253 &-0.016 &0.0386 & -0.044&0.1016 & -0.051\\  \bottomrule
		\end{tabular}
		\caption{We consider $d=30$ securities. The  Table provides the outcome of trading strategies that were trained according to Algorithm~\ref{algo_training_nn} (bold) in the setting of Section~\ref{exa_large_number} in dependence of { zero transaction costs ($0$), nonzero transaction costs with proportional~(Prop.), per share transaction costs~(P.S.), and online learning with per share transaction costs (P.S. O.L.), respectively}, and in comparison with a buy-and-hold strategy (B$\&$H) with $\Delta_i^k=10$ for all $i=0,\dots,9$ and all $k=1,\dots,30$. { We  also report the 10-day average profits of the one-time-buy-and-hold strategy of the above three scenarios respectively, i.e., without opening and closing the positions during each trading window. (0 B\&H: 11.82, Prop. B\&H: 11.78, P.S. B\&H: 11.77)}}
		\label{tbl_profit_example_5}
	\end{center}
\end{table}

\begin{table}[h!]
	\small
	\begin{center}
		\begin{tabular}{l>{\bfseries}c>{}c>{\bfseries}c>{}c>{\bfseries}c>{}c} \toprule
			
			{ Transaction Costs} & 0 & $0$ \text{B}\&\text{H} & Prop.& Prop. $\text{B}\&\text{H}$ & P.S. & P.S. $\text{B}\&\text{H}$ \\
			
			\midrule
			
			Overall Profit& 1430.99 &-571.02 &953.07& -1294.60&984.11& -1531.02\\ 
			Average Profit  &11.92 &-4.76 &7.94 &-10.79 &8.2 & -12.76\\
			\% of Profitable Trades &65.72 &62.5 &64.25 & 61.67& 65.07 & 61.67\\ 
			Max. Profit  &1449.09 & 3631.47&1446.8 &3626.01&1502.57 & 3623.47\\ 
			Min. Profit &-3022.75 & -6111.33&-3055.53 & -6116.53& -3322.79 & -6119.33\\
			Sharpe Ratio &0.1681 &-0.024 &0.1096 & -0.055&0.1064 & -0.065\\ 
			Sortino Ratio& 0.1329
			 &-0.022 &0.0879& -0.051& 0.0825 & -0.06\\  \bottomrule
		\end{tabular}
		\caption{We consider $d=40$ securities. The  Table provides the outcome of trading strategies that were trained according to Algorithm~\ref{algo_training_nn} (bold) in the setting of Section~\ref{exa_large_number} in dependence of { zero transaction costs ($0$), nonzero transaction costs with proportional~(Prop.), per share transaction costs~(P.S.), and online learning with per share transaction costs (P.S. O.L.), respectively}, and in comparison with a buy-and-hold strategy (B$\&$H) with $\Delta_i^k=10$ for all $i=0,\dots,9$ and all $k=1,\dots,40$. { We  also report the 10-day average profits of the one-time-buy-and-hold strategy of the above three scenarios respectively, i.e., without opening and closing the positions during each trading window. (0 B\&H: 11.70, Prop. B\&H: 11.66, P.S. B\&H: 11.63)}}
		\label{tbl_profit_example_6}
	\end{center}
\end{table}

\begin{figure}[h!]
	\begin{center}
		\includegraphics[width=10cm]{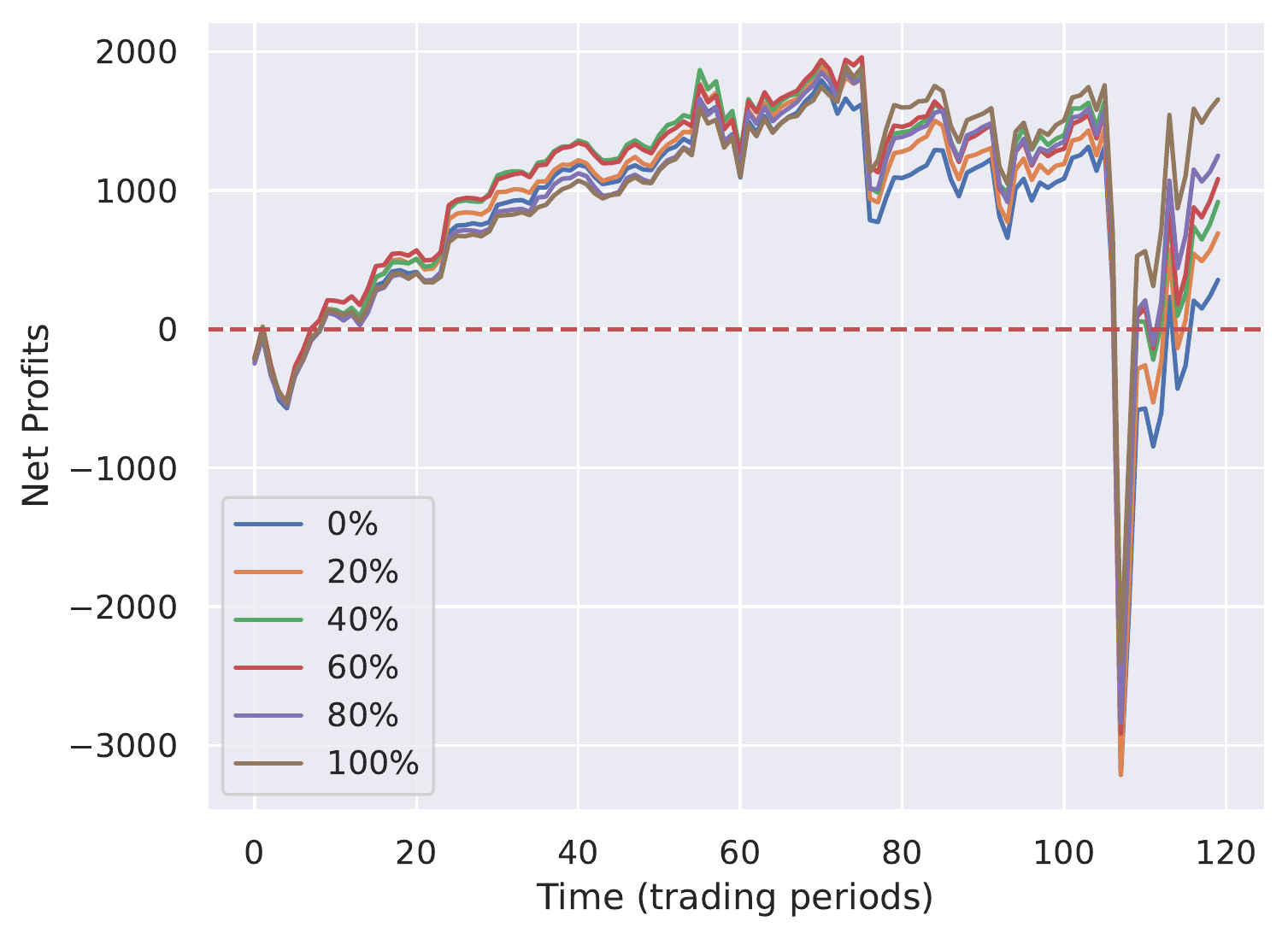}
		\caption{The figure shows the \emph{equity curves} of the trained strategies with per share transaction costs during the testing period of the $40$-assets-example (Table~\ref{tbl_profit_example_6}). The y-axis represents the cumulated net profits as the trading window propagates. Since a total of 50 independent experiments are conducted, we plot the equity curves by quantiles, where the rank is based on the net profit of the experiment at the end of the testing period.}\label{equity_curve_40_assets}
	\end{center}
\end{figure}

\begin{table}[h!]
	\small
	\begin{center}
		\begin{tabular}{l>{\bfseries}c>{}c>{\bfseries}c>{}c>{\bfseries}c>{}c} \toprule
			
			{ Transaction Costs}& 0 & $0$ \text{B}\&\text{H} & Prop.& Prop. $\text{B}\&\text{H}$ & P.S. & P.S. $\text{B}\&\text{H}$ \\
			
			\midrule 
			
			Overall Profit&816.32 &-687.36 &95.11 & -1546.64&60.72 & -1887.36\\ 
			Average Profit  &6.8 &-5.73 &0.79 &-12.89 &0.51 & -15.73\\
			\% of Profitable Trades &63.75 &62.5 &61.85 & 62.5& 63.15 & 62.5\\ 
			Max. Profit  &1732.6& 3970.81&1713.37 &3964.21&1849.47 & 3960.81\\ 
			Min. Profit &-3983.44 & -7107.61&-4011.3 & -7113.79& -4633.36 & -7117.61\\
			Sharpe Ratio & 0.0656 &-0.025 &-0.0045 & -0.057&0.0052 & -0.069\\ 
			Sortino Ratio& 0.0533 &-0.023 &-0.0027 & -0.052&0.0042 & -0.063\\  \bottomrule
		\end{tabular}
		\caption{We consider $d=50$ securities. The  Table provides the outcome of trading strategies that were trained according to Algorithm~\ref{algo_training_nn} (bold) in the setting of Section~\ref{exa_large_number} in dependence of { zero transaction costs ($0$), nonzero transaction costs with proportional~(Prop.), per share transaction costs~(P.S.), and online learning with per share transaction costs (P.S. O.L.), respectively}, and in comparison with a buy-and-hold strategy (B$\&$H) with $\Delta_i^k=10$ for all $i=0,\dots,9$ and all $k=1,\dots,50$. { We  also report the 10-day average profits of the one-time-buy-and-hold strategy of the above three scenarios respectively, i.e., without opening and closing the positions during each trading window. (0 B\&H: 18.71, Prop. B\&H: 18.65, P.S. B\&H: 18.62)}}
		\label{tbl_profit_example_7}
	\end{center}
\end{table}

\begin{figure}[!htb]
	\begin{minipage}{0.32\textwidth}
		\includegraphics[width=\linewidth]{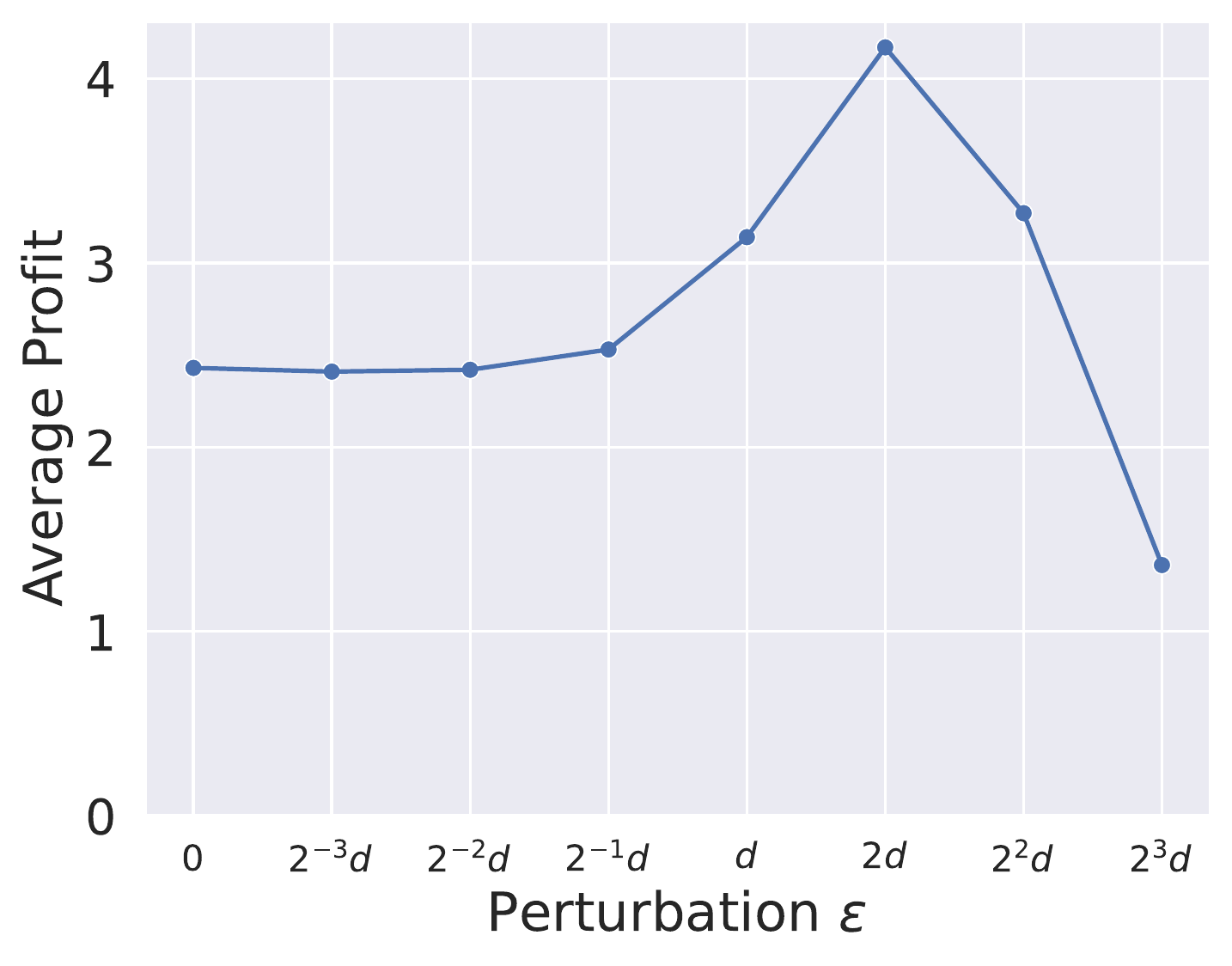}
	\end{minipage}\hfill
	\begin{minipage}{0.32\textwidth}
		\includegraphics[width=\linewidth]{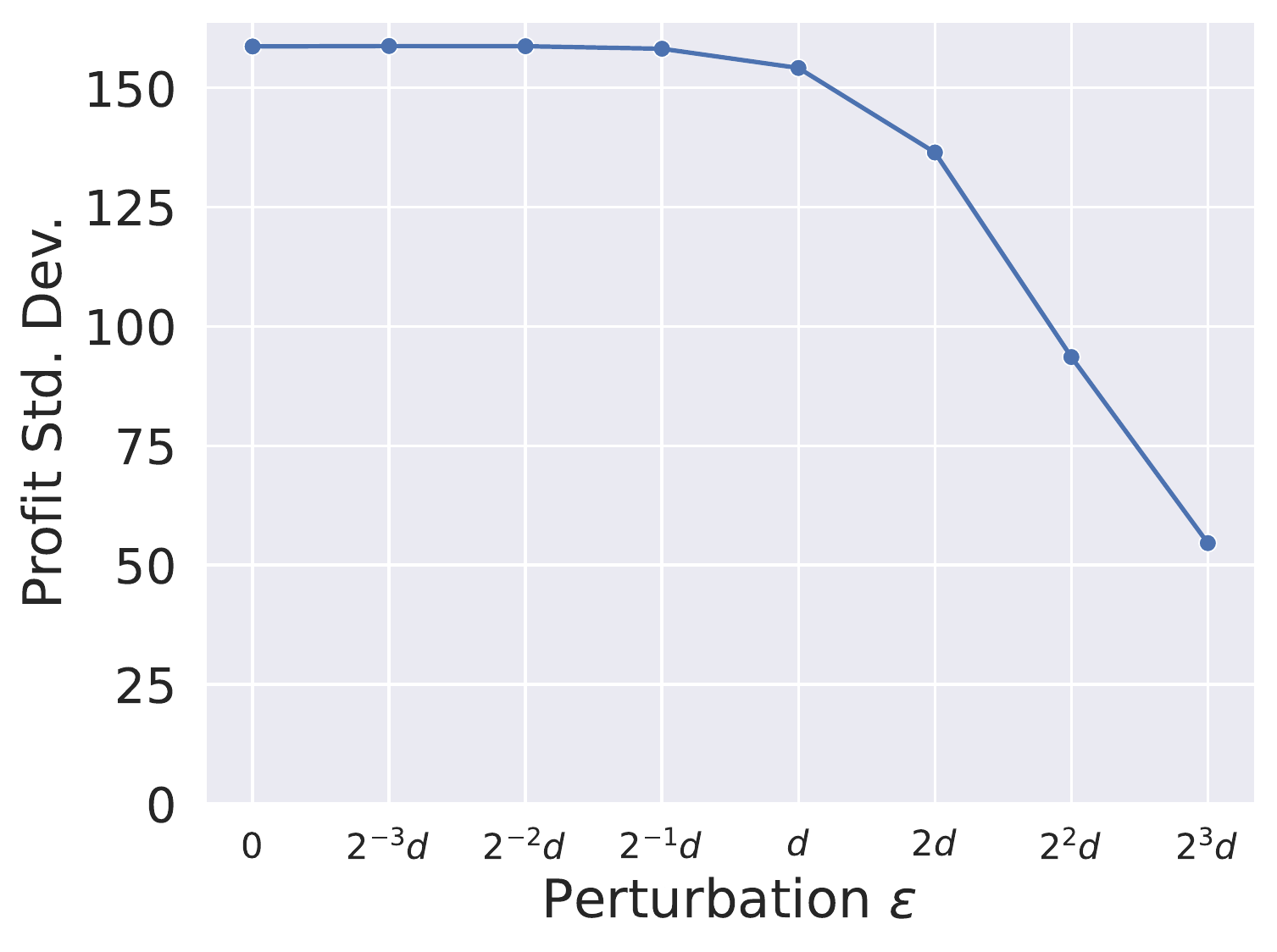}
	\end{minipage}\hfill
	\begin{minipage}{0.32\textwidth}%
		\includegraphics[width=\linewidth]{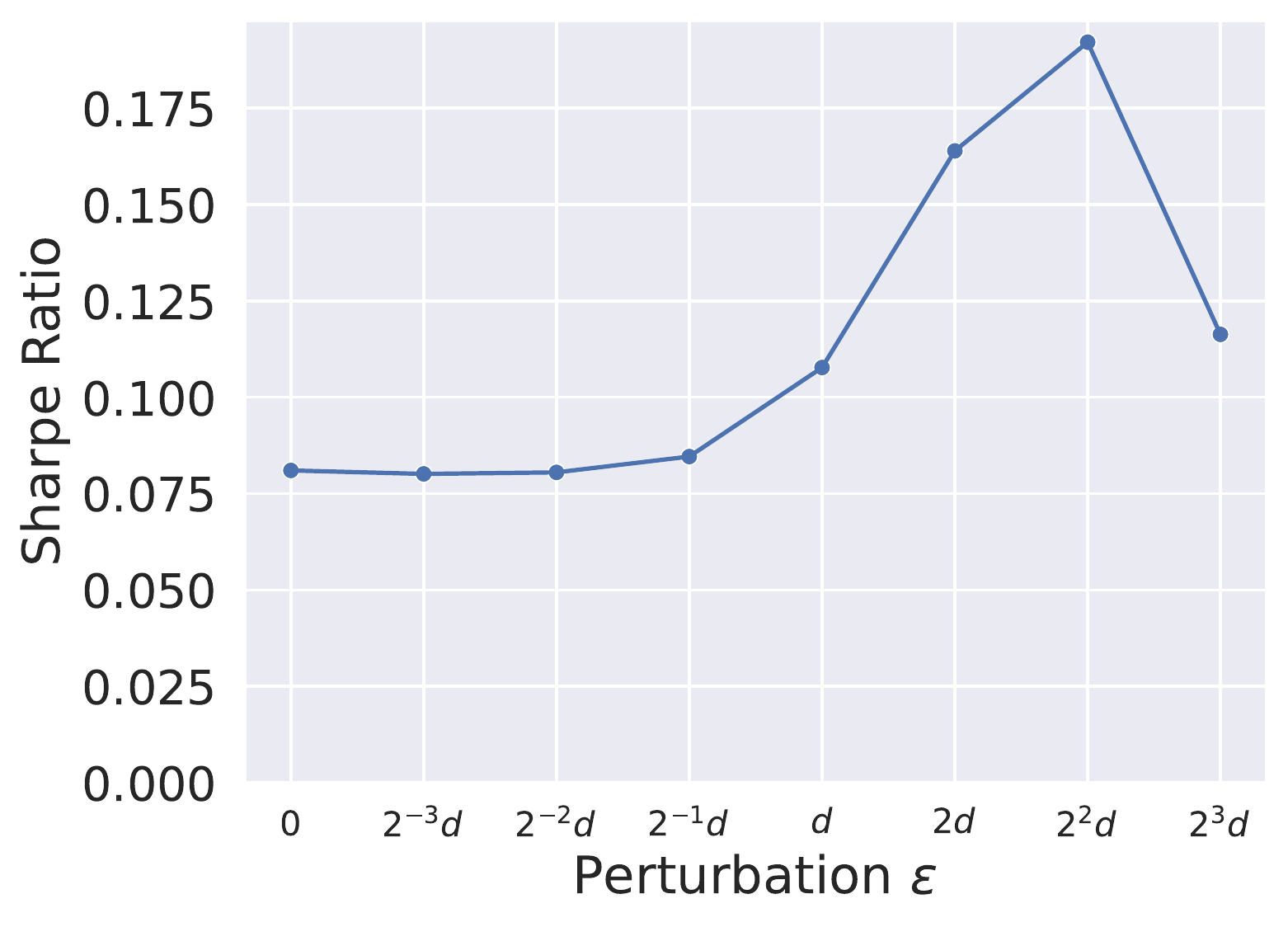}
	\end{minipage}
\begin{minipage}{0.32\textwidth}
	\includegraphics[width=\linewidth]{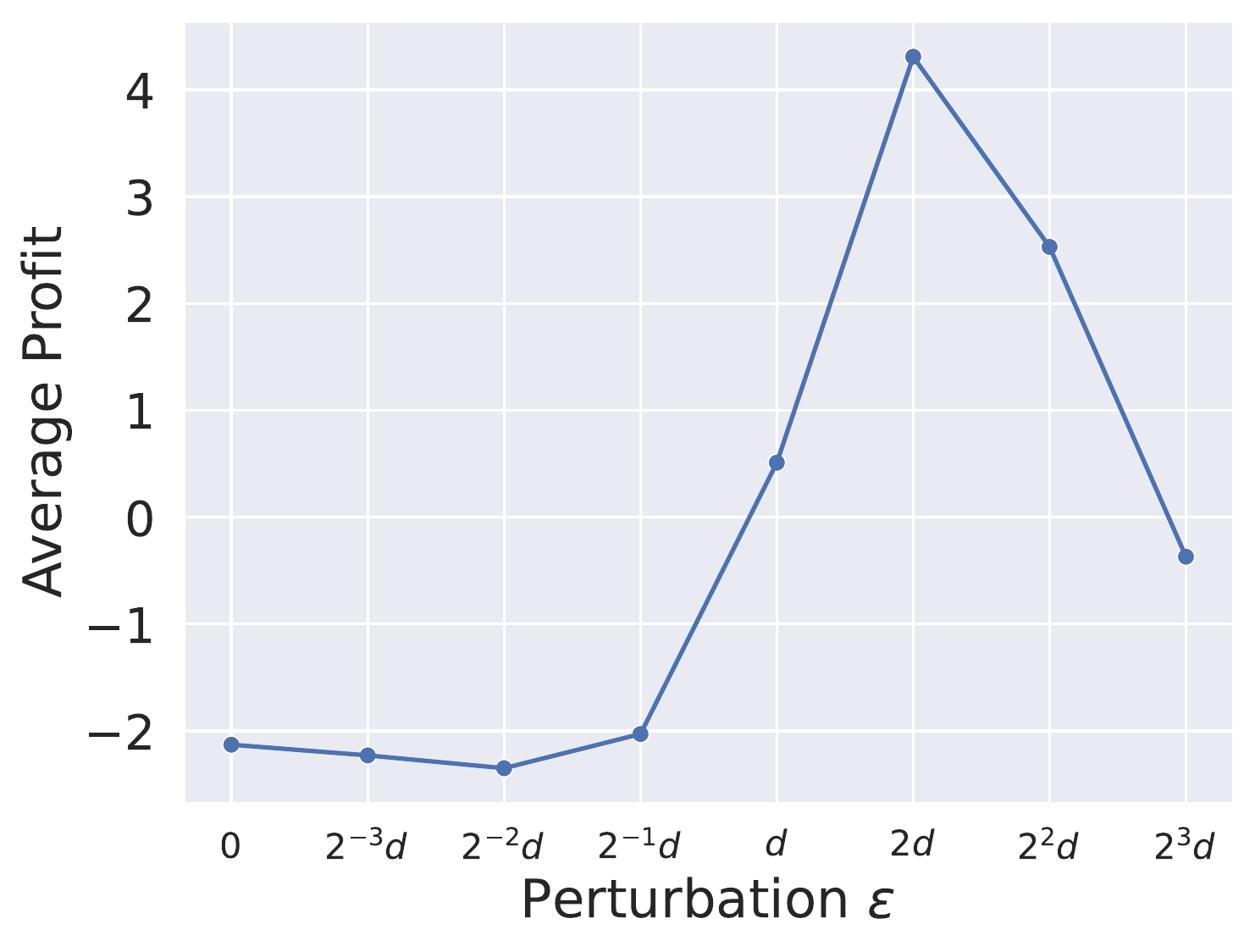}
\end{minipage}\hfill
\begin{minipage}{0.32\textwidth}
	\includegraphics[width=\linewidth]{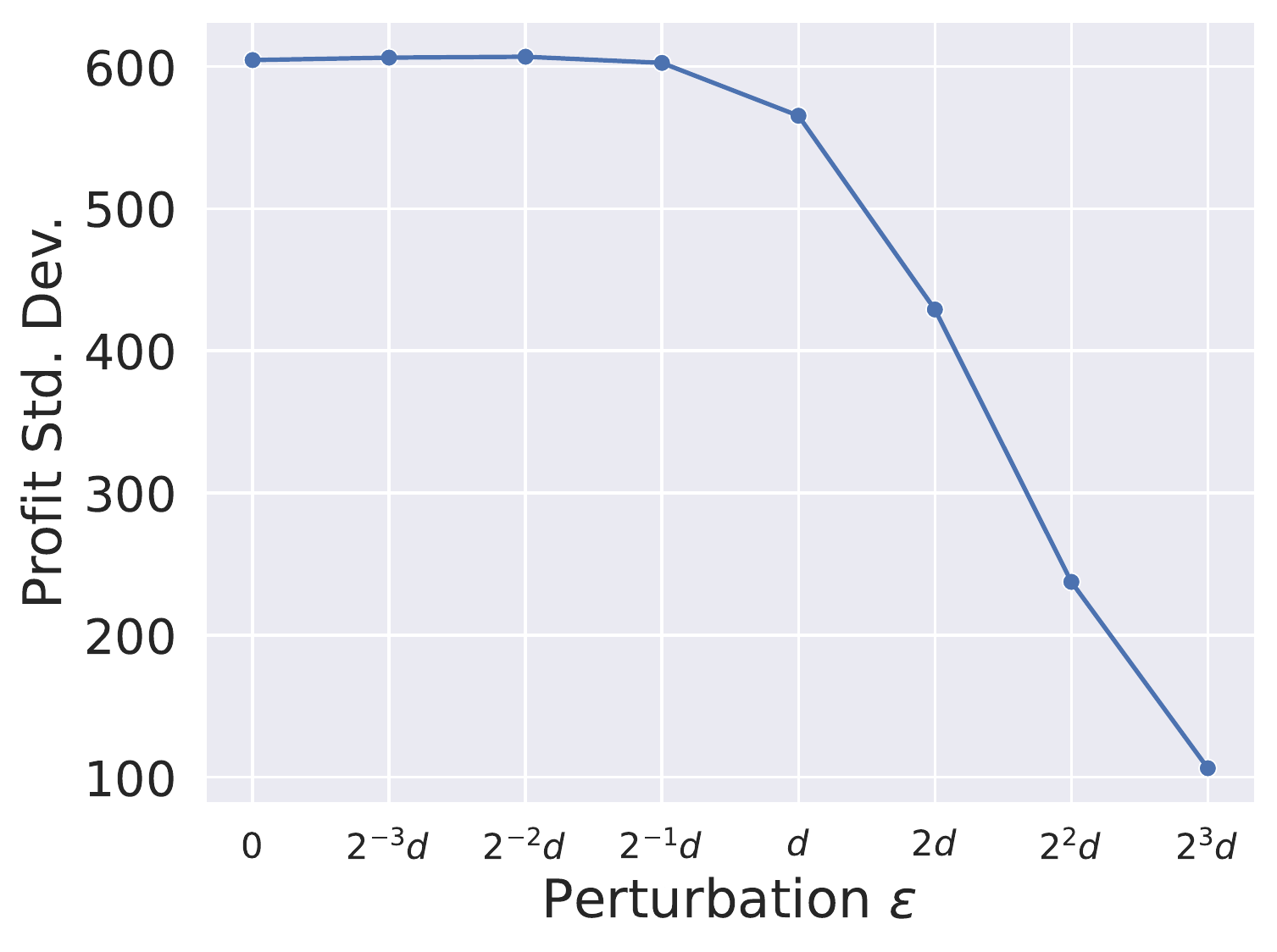}
\end{minipage}\hfill
\begin{minipage}{0.32\textwidth}%
	\includegraphics[width=\linewidth]{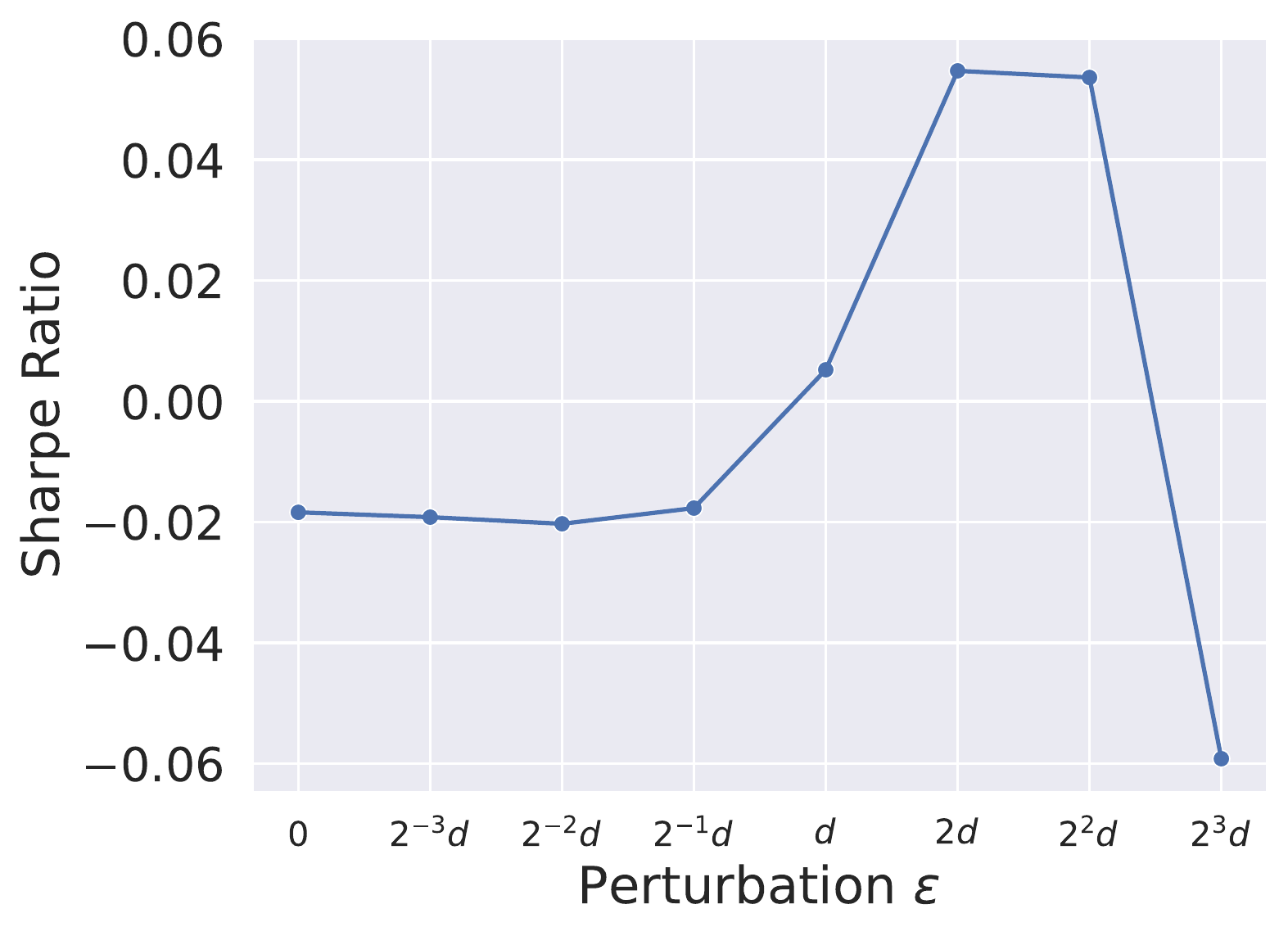}
\end{minipage}
	\caption{The top (resp.\,bottom) 3 figures show the $\varepsilon$-sensitivity analysis in the scenario of trading 20 (resp.\,50) assets where per share transactions cost are considered. Similar to Figure \ref{2_assets_sensitivity_analysis}, the results provide evidence that introducing a non-zero perturbation $\varepsilon$ is indeed beneficial for a better profit and sharpe ratio. Moreover, it can be seen that a larger $\varepsilon$, i.e. more uncertainty, makes the model more risk-averse, and therefore leads to a smaller standard deviation of the profits.}
\end{figure}

In particular, our results show that the approach can indeed be applied to high-dimensional data in contrast to conventional approaches such as linear programming.  Moreover, our approach clearly beats the market in all considered cases.

\subsection{Outperformance of pairs trading strategies} \label{exa_pairs_trading}
In a third example we show that with our approach using Algorithm~\ref{algo_training_nn} it is possible to trade profitably in a market environment where classical pairs trading approaches fail. To this end, we consider $d=2$ securities which show a high degree of mutual dependence. More specifically, we study the stocks of \emph{ExxonMobil} (Ticker symbol: XOM) and of \emph{BP p.l.c} (Ticker symbol: BP). The training period ranges from $1987/10/22$ to $2005/08/24$ and indeed, an augmented Dickey-Fuller test (compare e.g. \cite{fuller2009introduction}) indicates with a $p$-value of $ 0.0129$ that the spread $1.15 \cdot$XOM-BP is stationary and thus XOM and BP are indeed cointegrated. However, as depicted in Figure \ref{fig_test_period_divergence}, in the testing period that ranges from $2005/08/25$ to $2013/08/06$, the cointegration relationship breaks down, i.e., the spread diverges, and thus conventional pairs trading approaches fail. However, as the results in Table~\ref{tbl_profit_example_8} {and the equity curve in Figure~\ref{equity_curve_bp}} indicate, a trading strategy that was trained according to Algorithm~\ref{algo_training_nn} performs even in this environment remarkably well. This provides further evidence for the robustness of our presented approach. 
\begin{figure}[h!]
\begin{center}
\includegraphics[width=10cm]{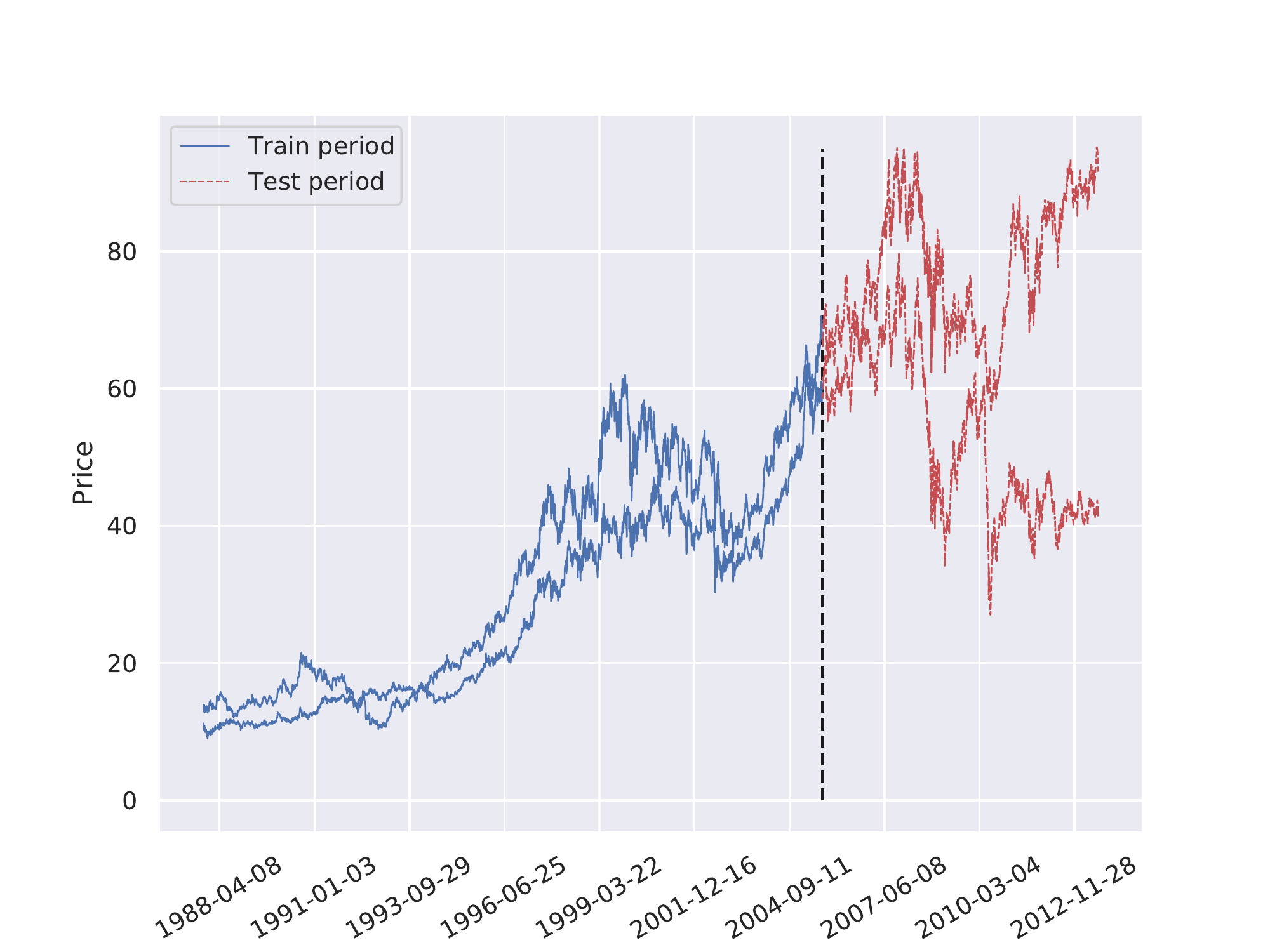}
\caption{This Figure shows the evolution of $XOM$ and $BP$ in the training period from $1987/10/22$ to $2005/08/23$ and the testing period from $2005/08/24$ to $2013/08/06$. In particular, we can observe that the cointegration relationship breaks down during the testing period.}\label{fig_test_period_divergence}
\end{center}
\end{figure}

\begin{table}[h!]
	\small
	\begin{center}
		\begin{tabular}{l>{\bfseries}c>{}c>{\bfseries}c>{}c>{\bfseries}c>{}c} \toprule
			
			{ Transaction Costs}& 0 & $0$ \text{B}\&\text{H} & Prop.& Prop. $\text{B}\&\text{H}$ & P.S. & P.S. $\text{B}\&\text{H}$ \\
			
			\midrule
			
			Overall Profit& 82.97 & -149.40& 66.59 & -202.26&62.03 & -230.20\\ 
			Average Profit &0.42 & -0.75& 0.33 & -1.01&0.31 & -1.15\\
			\% of Profitable Trades &54.37 &53.5 & 53.17 &53.0 &53.13 & 53.0\\
			Max. Profit  &43.77 & 121.0& 43.35 & 120.67&41.78 & 120.6\\ 
			Min. Profit &-17.74 & -221.7&-17.92 & -221.94&-18.25 & -222.1\\
			Sharpe Ratio & 0.4093 &-0.073 & 0.3216 & -0.099&0.3091 & -0.112\\ 
			Sortino Ratio& 0.8722 & -0.095&0.7125 &-0.129 &0.6951 & -0.146\\  \bottomrule
		\end{tabular}
		\caption{The Table shows the performance of the neural network statistical arbitrage strategies, trained according to Algorithm~\ref{algo_training_nn}, in the setting of Section~\ref{exa_pairs_trading}, where we test trading in $d=2$ securities in a testing period in which the spread, that was tested to be cointegrated in the training period, diverges. The profits in bold are depicted in dependence of { zero transaction costs ($0$), nonzero transaction costs with proportional~(Prop.), per share transaction costs~(P.S.), and online learning with per share transaction costs (P.S. O.L.), respectively}, and are compared with the outcomes of a buy-and-hold strategy (B$\&$H) with $\Delta_i^k=10$ for all $i=0,\dots,9$ and all $k=1,2$. { We  also report the 10-day average profits of the one-time-buy-and-hold strategy of the above three scenarios respectively, i.e., without opening and closing the positions during each trading window. (0 B\&H: 0.36, Prop. B\&H: 0.36, P.S. B\&H: 0.36)}}
		\label{tbl_profit_example_8}
	\end{center}
\end{table}

\begin{figure}[h!]
	\begin{center}
		\includegraphics[width=10cm]{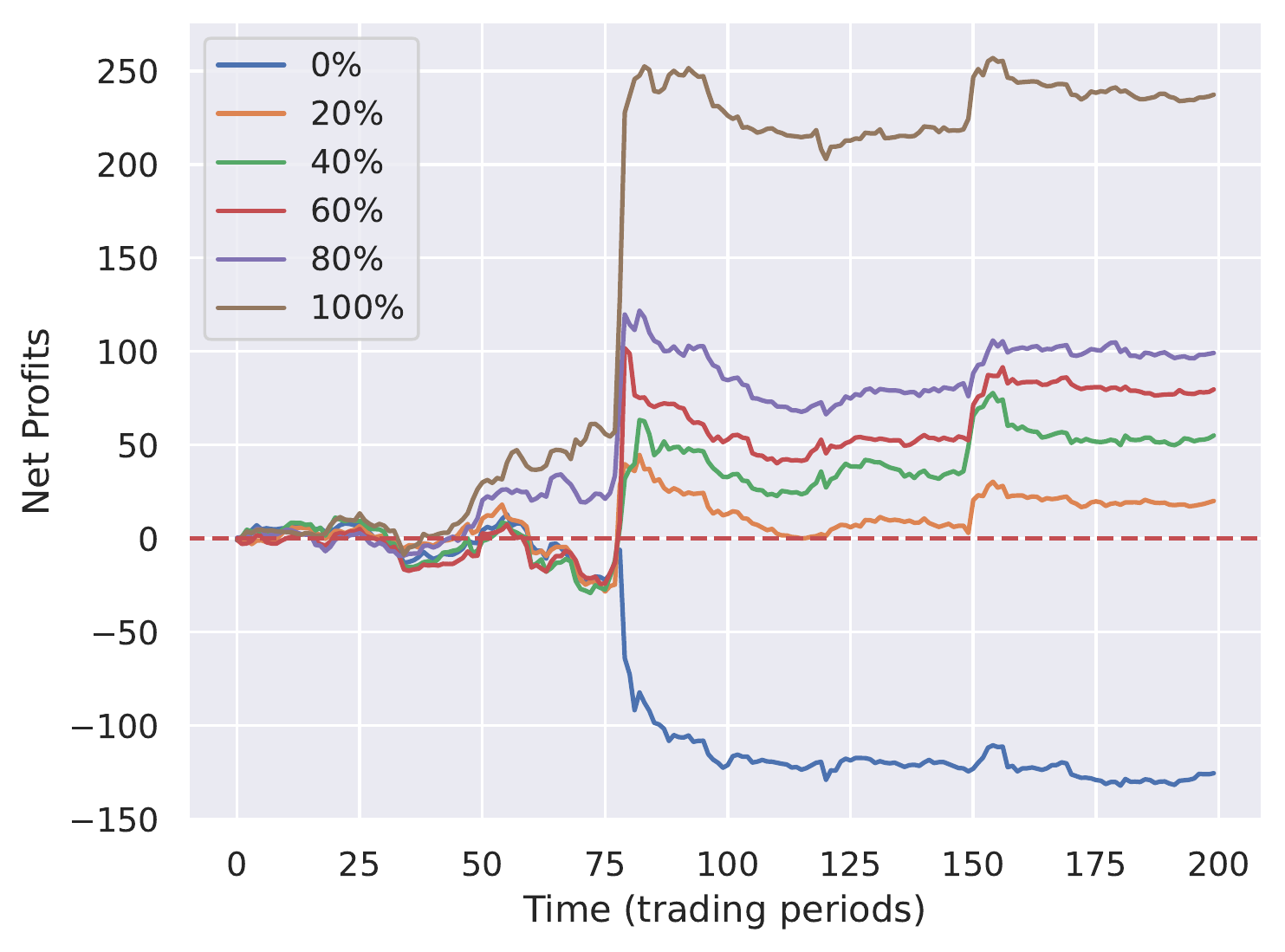}
		\caption{The figure shows the \emph{equity curves} of the trained strategies with per share transaction costs during the testing period shown in Figure \ref{fig_test_period_divergence}. The y-axis represents the cumulated net profits as the trading window propagates. Since a total of 50 independent experiments are conducted, we plot the equity curves by quantiles, where the rank is based on the net profit of the experiment at the end of the testing period. }\label{equity_curve_bp}
	\end{center}
\end{figure}

\section{Conclusion and Future Research}\label{sec_conclusion}
In this paper we have presented a deep-learning based approach to the computation of profitable trading strategies. In the main result in Theorem~\ref{thm_summary} we prove, in particular, the convergence of a penalized functional involving strategies that can be represented by neural networks towards the minimal conditional super-replication price of profitable trading strategies. This allows theoretically to determine profitable trading strategies by optimizing neural networks. To ensure also the practical applicability of our approach, we have introduced an algorithm in Section~\ref{sec_application} for the determination of a data-driven ambiguity set $\mathcal{P}$ of underlying probability measures as well as an algorithm for the computation of $\mathcal{P}$-robust statistical arbitrage strategies. In Section~\ref{sec_real_world_examples} we then have shown in several empirical examples that our approach indeed leads to profitable trading outcomes, even in scenarios where it is naturally difficult for trading strategies to perform well.

Analogue to the setting from \cite{lutkebohmert2021robust}, the presented approach can be extended in a natural way by trading in liquid options and by considering more general $\sigma$-algebras $\mathcal{G}$, leading to the notion of $\mathcal{P}$-robust $\mathcal{G}$-arbitrage strategies instead of $\mathcal{P}$-robust statistical arbitrage strategies. In this case, the $\sigma$-algebra $\mathcal{G}$ may require a different approximation as the case $\mathcal{G}=\sigma(S_{t_n})$, which was considered in Proposition~\ref{prop_filtration}. Moreover, the determination of the ambiguity set $\mathcal{P}$ in Section~\ref{sec_ambiguity_set_p}, even though entirely data-driven and straightforward, could also be replaced by approaches that include, for example, a more subtle weighting of past returns or even by a Bayesian approach in which one updates sequentially an assumed prior distribution contingent on observed returns.
Eventually we remark that all of our results, and in particular Theorem~\ref{thm_summary}, not only enable the approximation of $\mathcal{P}$-robust $\mathcal{G}$-arbitrage strategies, but also allow to detect strategies that conditionally super-replicate some payoff $\Phi$ and thus allow to involve mispriced derivatives (with payoff $\Phi$) into trading. 
We leave these extensions and its effects on the outcomes of the resultant trading strategies for future research.

\section{Proofs}\label{sec_proofs}
In this section we provide the proofs from the mathematical statements in Section~\ref{sec_main} and Section~\ref{sec_application}.
\begin{proof}[Proof of Lemma~\ref{lem_compactness}] Let $B>0$, $L>0$ and pick a sequence $( h_{c^{(k)},\Delta^{(k)}})_{k \in \N} \subset \HM$.
Note that, by definition of $\HM$, for all $k\in \N$, it holds $|c^{(k)}| \leq B$ as well as $|{\Delta_0^{j}}^{(k)}| \leq B$ for all $j=1,\dots,d$. Thus, according to the Bolzano--Weierstrass theorem, there exists a convergent subsequence of $(c^{(k)})_{k \in \N}$ as well as for $({\Delta_0^{j}}^{(k)})_{k \in \N}$ for each $j=1,\dots,d$. Moreover, for all  $i=1,\dots,n$, $j=1,\dots,d$ and for all $k \in \N$, we have $ \| {\Delta_i^j}^{(k)} \|_{\infty,\Omega_i}\leq B$ and that ${\Delta_i^j}^{(k)}$ is $L$-Lipschitz. Hence, by the Arzelà–-Ascoli theorem, for all $i=1,\dots,n$, $j=1,\dots,d$ there exists a subsequence of $({\Delta_i^j}^{(k)})_{k\in \N}$ converging uniformly on $\Omega_i$. Inductively we obtain therefore a subsequence such that $(c^{(k_l)})_{l \in \N}$ converges against some $c\in \R$, such that for each $j=1,\dots,d$ the subsequence $({\Delta^j_0}^{(k_l)})_{l\in \N}$ converges to some $\Delta_0^j\in \R$, and such that $({\Delta_i^j}^{(k_l)})_{l\in \N}$ converges uniformly for each $i=1,\dots,n$, $j=1,\dots,d$ against some $L$-Lipschitz function ${\Delta_i^j}$ . Due to the assumed continuity of the {trading costs ${c_{\operatorname{trans}}}_i^j, {c_{\operatorname{spread}}}_i^j, {c_{\operatorname{short}}}_i^j$}, for all $i=0,\dots,n,~j=1,\dots,d$, this implies the uniform convergence 
\[
\lim_{l \rightarrow \infty} h_{c^{(k_l)},\Delta^{(k_l)}}= h_{c,\Delta}\text{ on } \Omega.
\]
Moreover, it holds $h_{c,\Delta}\in \HM$, since 
\begin{align*}
|c|&=\lim_{l \rightarrow \infty} |c^{(k_l)}| \leq B,\\
\|\Delta_i^j\|_{\infty,\Omega_i}&=\lim_{l \rightarrow \infty}  \|{\Delta_i^j}^{(k_l)}\|_{\infty,\Omega_i}\leq B \text{ for all } i=1,\dots,n,~j=1,\dots,d,\\
|\Delta_0^j|&=\lim_{l \rightarrow \infty} |{\Delta_0^j}^{(k_l)}|\leq B \text{ for all } j=1,\dots,d.
\end{align*}
\end{proof}
\begin{proof}[Proof of Lemma~\ref{lem_attainment}]
Let $\G \subseteq\sigma(S)$ be some $\sigma$-algebra. We  define the map 
\begin{align*}
F: \HM &\to \R\\
h_{c,\Delta} &\mapsto c + k \sum_{\PP \in \mathcal{P}} \int_{\Omega} \beta\big(\E_{\PP}[\Phi(S)-h_{c,\Delta}(S) ~|~ \G]\big) \D \PP.
\end{align*}
The map $F$ is continuous with respect to the uniform topology induced by $\|\cdot \|_{\infty,\Omega}$ due to the dominated convergence theorem and by the assumed continuity of $\beta$. Let $\left(h_{c^{(i)},\Delta^{(i)}}\right)_{i \in \N}\subset \HM$ be a sequence with 
$\lim_{i \rightarrow \infty} F\left(h_{c^{(i)},\Delta^{(i)}}\right)=\Gamma_{B,L,k}(\Phi,
\G)$. Then, by the compactness of $\HM$, as stated in Lemma~\ref{lem_compactness}, there exists a subsequence $\left(h_{c^{(i_l)},\Delta^{(i_l)}}\right)_{l \in \N}$ and some ${h_{c,\Delta}} \in \HM$ such that 
\[
\Gamma_{B,L,k}(\Phi,
\G)=\lim_{l \rightarrow \infty} F\left(h_{c^{(i_l)},\Delta^{(i_l)}}\right)=F\left({h_{c,\Delta}}\right),
\]
where the last equality is a consequence of the continuity of $F$.
\end{proof}
\begin{proof}[Proof of Proposition~\ref{prop_convergence}]
Let $\G \subseteq\sigma(S)$ be some $\sigma$-algebra. First, we note that the condition $\|\Phi\|_{\infty,\Omega} \leq B$ ensures the existence of some strategy $h_{c,\Delta} \in \HM$ such that
\[
\E_{\PP}[h_{c,\Delta}(S)~|~\G] \geq \E_{\PP}[\Phi(S)~|~\G] ~~~\PP\text{-a.s. for all } \PP \in \mathcal{P}.
\]
Thus, due to Assumption~\ref{asu_beta}, we have for all $k \in \N$
\begin{align*}
\Gamma_{B,L,k}(\Phi,\G)&\leq\inf_{h_{c,\Delta} \in \HM} \bigg\{c~+ k \sum_{\PP \in \mathcal{P}} \int_{\Omega} \beta\big(\E_{\PP}[\Phi(S)-h_{c,\Delta}(S) ~|~ \G]\big) \D \PP ~\bigg|\\
&\hspace{4cm}\E_{\PP}[h_{c,\Delta}(S)~|~\G] \geq \E_{\PP}[\Phi(S)~|~\G] ~~~\PP\text{-a.s. for all } \PP \in \mathcal{P}\bigg\} \\
&=\inf_{h_{c,\Delta} \in \HM} \bigg\{ c ~~~\bigg|~ \E_{\PP}[h_{c,\Delta}(S)~|~\G] \geq \E_{\PP}[\Phi(S)~|~\G] ~~~\PP\text{-a.s. for all } \PP \in \mathcal{P}\bigg\} = \Gamma_{B,L}(\Phi,\G).
\end{align*}
By the assumption that $\beta \geq 0$ and by the above inequality we obtain for each $k \in \N$
\[
-B \leq \Gamma_{B,L,k}(\Phi,\G) \leq \Gamma_{B,L,k+1}(\Phi,\G) \leq \cdots \leq \Gamma_{B,L}(\Phi,\G)\leq B < \infty.
\]
This means in particular that $\Gamma_{B,L,\infty}(\Phi,\G):=\lim_{k \rightarrow \infty} \Gamma_{B,L,k}(\Phi,\G)$ exists and
\begin{equation}\label{eq_proof_ineq_1}
\Gamma_{B,L,\infty}(\Phi,\G)\leq \Gamma_{B,L}(\Phi,\G).
\end{equation}
Moreover, by Lemma~\ref{lem_attainment}, for each $k \in \N$ there exists $h_{c^{(k)},\Delta^{(k)}}\in \HM$ which minimizes $\Gamma_{B,L,k}(\Phi,\G)$, i.e., we have that
\begin{equation}\label{eq_chain_1}
c^{(k)}~+ k \sum_{\PP \in \mathcal{P}} \int_{\Omega} \beta\big(\E_{\PP}[\Phi(S)-h_{c^{(k)},\Delta^{(k)}}(S) ~|~ \G]\big) \D \PP = \Gamma_{B,L,k}(\Phi,\G)\leq B.
\end{equation}
According to Lemma~\ref{lem_compactness}, $\HM$ is compact, thus there exists a uniformly convergent subsequence $(h_{c^{(k_l)},\Delta^{(k_l)}})_{l\in \N}$ with limit $h_{c,\Delta} \in \HM$. Hence, by \eqref{eq_chain_1} we have 
\begin{align}
-B \leq &\lim _{l \rightarrow \infty} c^{(k_l)}~+ k_l \sum_{\PP \in \mathcal{P}} \int_{\Omega}  \beta\big(\E_{\PP}[\Phi(S)-h_{c^{(k_l)},\Delta^{(k_l)}}(S) ~|~ \G]\big) \D \PP \notag \\
=& c~+ \lim _{l \rightarrow \infty} k_l \cdot \sum_{\PP \in \mathcal{P}} \int_{\Omega}  \beta\big(\E_{\PP}[\Phi(S)-h_{c^{(k_l)},\Delta^{(k_l)}}(S) ~|~ \G]\big) \D \PP\leq B<\infty.  \label{eq_lower_infinty}
\end{align}
Since $\beta$ is continuous and nonnegative the dominated convergence theorem and the boundedness of the expression in \eqref{eq_lower_infinty} imply that 
\begin{equation}\label{eq_sum_zero}
0=\lim_{l \rightarrow \infty}\sum_{\PP \in \mathcal{P}} \int_{\Omega} \beta\big(\E_{\PP}[\Phi(S)-h_{c^{(k_l)},\Delta^{(k_l)}}(S) ~|~ \G]\big) \D \PP=\sum_{\PP \in \mathcal{P}} \int_{\Omega}  \beta\big(\E_{\PP}[\Phi(S)-h_{c,\Delta}(S) ~|~ \G]\big) \D \PP.
\end{equation}
By Assumption~\ref{asu_beta}, this in turn can only hold if
\begin{equation}\label{eq_inequality_p_tohold}
\E_{\PP}[h_{c,\Delta}(S) ~|~\G] \geq \E_{\PP}[\Phi(S)~|~\G]  ~~~\PP\text{-a.s. for all }  \PP \in \mathcal{P},
\end{equation}
By the validity of \eqref{eq_inequality_p_tohold}, we see that 
\begin{equation}\label{eq_d_geq_gamma}
c \geq \Gamma_{B,L}(\Phi,\G).
\end{equation}
Further, since $\beta \geq 0$, we have with \eqref{eq_chain_1} that for all $l \geq 1$ 
\begin{align}
\label{eq_l_0} -B \leq c^{(k_l)}\leq &c^{(k_l)}~+ k_{l} \sum_{\PP \in \mathcal{P}} \int_{\Omega} \beta\big(\E_{\PP}[\Phi(S)-h_{c^{(k_l)},\Delta^{(k_l)}}(S)  ~|~ \G]\big) \D \PP = \Gamma_{B,L,{k_l}}(\Phi,\G)  < \infty.
\end{align}
We consider in \eqref{eq_l_0} the limit $l \rightarrow \infty$ and obtain with \eqref{eq_sum_zero} that
\begin{equation}\label{eq_d_lim_delta}
c \leq \lim_{l\rightarrow \infty} \Gamma_{B,L,{k_l}}(\Phi,\G)=\Gamma_{B,L,\infty}(\Phi,\G).
\end{equation}
Thus, combining \eqref{eq_proof_ineq_1},~\eqref{eq_d_lim_delta}, and \eqref{eq_d_geq_gamma} yields the following inequalities
\[
\Gamma_{B,L}(\Phi,\G) \geq \Gamma_{B,L,\infty}(\Phi,\G) \geq  c \geq \Gamma_{B,L}(\Phi,\G),
\]
which implies that $\Gamma_{B,L}(\Phi,\G)=\Gamma_{B,L,\infty}(\Phi,\G)$.
\end{proof}

\begin{proof}[Proof of Lemma~\ref{lem_universal_approx}]
Let $B>0$, $L>0$, $0 < \varepsilon < B$, $i \in \{1,\dots,n\}$, and let $f:\Omega_i\rightarrow \R^d$ be some function such that $f_j:=\pi_j \circ f$ is $L$-Lipschitz with $\|f_j\|_{\Omega_i,\infty,1} \leq  B$ for all $j=1,\dots,d$. We define for all $j=1,\dots,n$ the truncated function $\widetilde{f}_j:\Omega_i \rightarrow \R$ by
\[
\widetilde{f}_j:=\max\left\{\min\left\{f_j,B-\frac{\varepsilon}{2\sqrt{d}}\right\},-B+\frac{\varepsilon}{2\sqrt{d}}\right\}
\]
which is $L$-Lipschitz, since for all $x,y \in \Omega_i$ we have by construction
\[
\left|\widetilde{f}_j(x)-\widetilde{f}_j(y)\right|\leq|{f}_j(x)-{f}_j(y)|\leq L \|x-y\|_{id}.
\]
Then, according to a version of the universal approximation theorem for Lipschitz functions in the form of \cite[Theorem 1]{eckstein2020lipschitz}, there exists for all $j=1,\dots,d$ some neural network $g_j:\R^{id} \rightarrow \R$, $g_j \in \mathfrak{N}_{id,1}$,  which is $L$-Lipschitz continuous, and which fulfils
\begin{equation}\label{eq_f-n_eps_2}
\|g_j-\widetilde{f}_j\|_{\infty,\Omega_i,1} \leq  \frac{\varepsilon}{2\sqrt{d}}.
\end{equation}
Thus,  by \eqref{eq_f-n_eps_2}, and since by construction $\|\widetilde{f}_j\|_{\infty,\Omega_i}\leq B-\frac{\varepsilon}{2\sqrt{d}}$, we have for all  $j=1,\dots,d$ that
\[
\|g_j\|_{\infty,\Omega_i,1}\leq \|\widetilde{f}_j\|_{\infty,\Omega_i,1}+\|g_j-\widetilde{f}_j\|_{\infty,\Omega_i,1}\leq  B-\frac{\varepsilon}{2\sqrt{d}}+\frac{\varepsilon}{2\sqrt{d}}=B.
\]
Moreover, we observe for all $j=1,\dots,d$ that 
\[
\|g_j-{f}_j\|_{\infty,\Omega_i,1}\leq \|g_j-\widetilde{f}_j\|_{\infty,\Omega_i,1}+\|\widetilde{f}_j-{f}_j\|_{\infty,\Omega_i,1}\leq  \varepsilon/\sqrt{d}.
\]
Then by defining $g:=(g_1,\dots,g_d) $ we obtain
\[
\|g-f\|_{\infty, \Omega_i} \equiv \|g-f\|_{\infty, \Omega_i,d} \leq \sqrt{\sum_{i=1}^d \|g_i-f_i\|_{\infty, \Omega_i,1}^2} \leq \varepsilon.
\]
Finally, note that $g \in \mathfrak{N}_{i,B,L}$, as the set $\mathfrak{N}_{i,B,L}$ also contains neural networks which are not fully connected, which can be seen by setting the weights of connections from a neural network in $\mathfrak{N}_{i,B,L}$, which do not appear in partially connected networks, to be $0$.
\end{proof}

\begin{proof}[Proof of Proposition~\ref{prop_neuralnetworks}]
Let $\G \subseteq\sigma(S)$ be some $\sigma$-algebra, and let $k \in \N$. First note that as $\HM^{\mathcal{N}\mathcal{N}} \subseteq \HM$ we have $\Gamma_{B,L,k}^{\mathcal{N}\mathcal{N}}(\Phi) \geq \Gamma_{B,L,k}(\Phi)$. Hence it remains to prove that $\Gamma_{B,L,k}^{\mathcal{N}\mathcal{N}}(\Phi) \leq \Gamma_{B,L,k}(\Phi)$. To that end, according to Lemma~\ref{lem_attainment} there exists some $h_{c,\Delta} \in \HM$ such that 
\[
c~+ k \sum_{\PP \in \mathcal{P}} \int_{\Omega} \beta\big(\E_{\PP}[\Phi(S)-h_{c,\Delta}(S) ~|~ \G]\big) \D \PP=\Gamma_{B,L,k}(\Phi,\mathcal{G}).
\]
By Lemma~\ref{lem_universal_approx} there exists a sequence $(\Delta^{(m)})_{m \in \N}:=\left(({\Delta_i^j}^{(m)})_{i,j}\right)_{m \in \N}$ such that for all $i=1,\dots,n-1$ we have $\left({\Delta_i^1}^{(m)},\dots,{\Delta_i^d}^{(m)}\right) \in \mathfrak{N}_{i,B,L}$ for all $m\in \N$ and such that 
\[
\left({\Delta_i^1}^{(m)},\dots,{\Delta_i^d}^{(m)}\right) \rightarrow \left({\Delta_i^1},\dots,{\Delta_i^d}\right) \text{ uniformly on }\Omega_i \text{ for }m \rightarrow \infty.
\]
This implies particularly for all $i=1,\dots,n$, $j=1,\dots,d$ that ${\Delta_i^j}^{(m)}\rightarrow\Delta_i^j$ uniformly on $\Omega_i$ for $m \rightarrow \infty$.  By the dominated convergence theorem, with the continuity of $\beta$ and the continuity of the {trading costs ${c_{\operatorname{trans}}}_i^j, {c_{\operatorname{spread}}}_i^j, {c_{\operatorname{short}}}_i^j$ }for all $i=0,\dots,n$, $j=1,\dots,d$,  we have that 
\begin{align*}
\lim_{m \rightarrow \infty} &c~+ k \sum_{\PP \in \mathcal{P}} \int_{\Omega} \beta\big(\E_{\PP}[\Phi(S)-h_{c,\Delta^{(m)}}(S) ~|~ \G]\big) \D \PP\\
=&c~+ k \sum_{\PP \in \mathcal{P}} \int_{\Omega} \beta\big(\E_{\PP}[\Phi(S)-h_{c,\Delta}(S) ~|~ \G]\big) \D \PP=\Gamma_{B,L,k}(\Phi,\mathcal{G}).
\end{align*}
Thus, as $h_{c,\Delta^{(m)}} \in \HM^{\mathcal{N}\mathcal{N}}$ for all $m\in \N$, we obtain
\[
\Gamma_{B,L,k}^{\mathcal{N}\mathcal{N}}(\Phi)\leq \lim_{m \rightarrow \infty} \left(c~+ k \sum_{\PP \in \mathcal{P}} \int_{\Omega} \beta\big(\E_{\PP}[\Phi(S)-h_{c,\Delta^{(m)}}(S) ~|~ \G]\big) \D \PP \right) = \Gamma_{B,L,k}(\Phi,\mathcal{G}).
\]
\end{proof}

\begin{proof}[Proof of Proposition~\ref{prop_filtration}]
We first show that the following equality holds $\PP^U$-almost surely:\footnote{More precisely, we show that the following holds for $\PP^U$-almost all realizations of $(\mathcal{E}_i)_{i \in \N}$ following \eqref{defn_E_i} and \eqref{defn_A_i}.}
\begin{equation}\label{eq_claim_proof_filtration}
\sigma(S_{t_n})=\sigma \left(\bigcup_{i=1}^{\infty}\mathcal{F}_{i}\right).
\end{equation}
Note that for all $i\in\mathbb{N}$ it holds
\begin{equation}\label{f_i_in_sigma_s_tn}
\mathcal{F}_{i}=\left\{S_{t_n}^{-1}(A)~\middle|~ A \in \sigma(\mathcal{E}_i)\right\}\subseteq \left\{S_{t_n}^{-1}(A)~\middle|~ A \in \mathcal{B}\bigg([\underline{K}^1,\overline{K}^1]\times \cdots \times [\underline{K}^d,\overline{K}^d]\bigg)\right\}=\sigma(S_{t_{n}}).
\end{equation}
Hence, it follows by \eqref{f_i_in_sigma_s_tn} that
$
\bigcup_{i=1}^{\infty}\mathcal{F}_{i}\subseteq \sigma(S_{t_n}),
$
and in particular 
\[
\sigma\left(\bigcup_{i=1}^{\infty}\mathcal{F}_{i}\right)\subseteq\sigma(S_{t_{n}}).
\]

This proves one inclusion of \eqref{eq_claim_proof_filtration}. Next, for the other inclusion, pick for $j=1,\dots,d$ some arbitrary  numbers $a_j\in [\underline{K}^j,\overline{K}^j)$.
Then we aim to show that $\PP^U$-almost surely 
\begin{equation}\label{interval_in_sigma_partition}
(a_1,\overline{K}^1]\times\dots\times(a_d,\overline{K}^d] \in \sigma \left(\bigcup_{i=1}^\infty \mathcal{E}_i\right).
\end{equation}
Let $a^{(i)}_j$, $b^{(i)}_j$ for $j=1,...,d$, $i\in \N$  be the random variables which are $\PP^U$-distributed, i.e, 
\[
 a_j^{(i)} \sim \mathcal{U}\left([\underline{K}^j,\overline{K}^j]\right), \qquad  b_j^{(i)}\equiv \overline{K}^j,\qquad j=1,...,d,~~i\in \N,
 \]
where $ a_j^{(i)}$ is independent of  $a_k^{(l)}$ for $(j,i) \neq (k,l)$.
%
Choose $i_0\in \N$ so large such that $\overline{K}^j-a_j>i^{-\frac{1}{d}}$ for all $i \geq i_0$ and  for all $j=1,\dots,d$. Then, we define events
\[
E_i:=\bigg\{ a_j<{a^{(i)}_j}<a_j+i^{-\frac{1}{d}}~~\text{for all}~~j=1,\dots,d\bigg\}, ~i\in\mathbb{N}.
\]
It follows now from $a_j^{(i)} \sim \mathcal{U}\left([\underline{K}^j,\overline{K}^j]\right)$ under $\PP^U$ that the following equality holds for all $i \geq i_0$
\begin{align*}
\PP^U\left(a_j<a^{(i)}_j<a_j+i^{-\frac{1}{d}}\right)& = \frac{1}{\overline{K}^j-\underline{K}^j}\cdot i^{-\frac{1}{d}}
\end{align*}
for each $j=1,\dots,d$. Thus, by using the independence of $(a_j^{(i)})_{j=1,\dots,d}$, we have for all $i \geq i_0$
\begin{align*}
\PP^U(E_i)=\prod_{j=1}^{d}\PP^U\left(a_j<a^{(i)}_j<a_j+i^{-\frac{1}{d}}\right)=\prod_{j=1}^{d}\frac{1}{(\overline{K}^j-\underline{K}^j)}i^{-\frac{1}{d}}=\frac{1}{i} \cdot \prod_{j=1}^{d}\frac{1}{(\overline{K}^j-\underline{K}^j)}.
\end{align*}
This implies that $\sum_{i=1}^{\infty}\PP^U(E_i)=\infty$. By the Borel--Cantelli lemma and by the fact that the events $\{E_i, i \in \N\}$ are independent, we have $\PP^U\left(\limsup\limits_{i\rightarrow\infty}{E_i}\right)=1$. Thus, $\PP^U$-almost surely we have 
\begin{equation}\label{eq_aj_bj_condition}
a_j<a^{(i)}_j<a_j+i^{-\frac{1}{d}}~~\text{for all}~~j=1,\dots,d
\end{equation}
for infinitely many $i\in \N$. Therefore, $\PP^U$-a.s., there exists a subsequence $\left(\left(a_j^{(i_k)}\right)_{j=1,\dots,d}\right)_{k \in \N}$ which fulfils \eqref{eq_aj_bj_condition} for every $k\in \N$. Hence, $\PP^U$-a.s., according to the Bolzano--Weierstrass theorem, there exists a subsequence which converges, and due to \eqref{eq_aj_bj_condition} we have $a_j^{(i_k)} \downarrow a_j$ for $k \rightarrow \infty$ for all $j=1,\dots,d$. We hence get
\[
(a_1,\overline{K}^1]\times \cdots \times(a_d,\overline{K}^d] = \bigcup_{k\in \N} (a_1^{(i_k)},\overline{K}^1] \times \cdots \times (a_d^{(i_k)},\overline{K}^d] \in  \sigma \left(\bigcup_{i=1}^\infty \mathcal{E}_i\right).
\]
Next, note that $\mathcal{B}([\underline{K}^1,\overline{K}^1]\times \cdots \times [\underline{K}^d,\overline{K}^d])$ is generated by 
\[
\bigg\{(a_1,\overline{K}^1]\times\dots\times(a_d,\overline{K}^d]~\bigg|~\underline{K}^i\leq a_i < \overline{K}^i,~i=1,\dots,d\bigg\}.
\]
Therefore, by using \eqref{interval_in_sigma_partition} we obtain $\PP^U$-a.s. that
\begin{align*}
\sigma(S_{t_n})&=\left\{S_{t_n}^{-1}(A)~\middle|~ A \in \sigma \left(\bigg\{(a_1,\overline{K}^1]\times\dots\times(a_d,\overline{K}^d]~\bigg|~\underline{K}^i\leq a_i < \overline{K}^i,~i=1,\dots,d\bigg\}\right)\right\}\\
&\subseteq\left\{S_{t_n}^{-1}(A)~\middle|~ A \in \sigma \left(\bigcup_{i=1}^{\infty}\mathcal{E}_i\right)\right\}\\
&=\sigma\left(\bigcup_{i=1}^{\infty}\left\{S_{t_n}^{-1}(A)~\middle|~ A \in \sigma\left(\mathcal{E}_i\right)\right\}\right)=\sigma\left(\bigcup_{i=1}^{\infty}\mathcal{F}_{i}\right).
\end{align*}
Thus, the claim \eqref{eq_claim_proof_filtration} is proved. 

Next, we show (ii). To this end, let $\Phi:\Omega\rightarrow \R$ be Borel measurable with $\|\Phi\|_{\infty,\Omega} \leq B$, let $h_{c,\Delta}(S) \in \HM$, and $\PP \in \mathcal{P}$. Since $\Phi$ and $h_{c,\Delta}(S)$ are both bounded, it holds $\Phi-h_{c,\Delta}(S) \in L^1(\mathbb{P})$.
Moreover, by definition, $(\mathcal{F}_{i})_{i \in \N}$ is an increasing sequence of $\sigma$-algebras. Given $\sigma(S_{t_n})=\sigma \left(\bigcup_{i=1}^{\infty}\mathcal{F}_{i}\right)$, it follows from Lévy's zero-one law that,
$$
\lim_{i\to\infty}\mathbb{E}_{\mathbb{P}}[\Phi(S)-h_{c,\Delta}(S)~|~\mathcal{F}_{i}]=\mathbb{E}_{\mathbb{P}}\left[\Phi(S)-h_{c,\Delta}(S)~\middle|~\sigma \left(\bigcup_{i=1}^{\infty}\mathcal{F}_{i}\right)\right]=\mathbb{E}_{\mathbb{P}}[\Phi(S)-h_{c,\Delta}(S)~|~\sigma(S_{t_n})],
$$
where, by Lévy's zero-one law, the convergence holds both $\mathbb{P}$-almost surely and in $L^1(\mathbb{P})$.

\end{proof}

\begin{proof}[Proof of Theorem~\ref{thm_summary}]
Let $B>0$, $L>0$. Moreover, let $\sigma(S_{t_n})=\sigma \left(\bigcup_{i=1}^{\infty}\mathcal{F}_{i}\right)$, which holds true $\PP^U$-a.s. according to Proposition~\ref{prop_filtration}. By Lemma~\ref{lem_attainment}, for all $k \in \N$ and all $i \in \N$ there exists some strategy $h_{c^{(i)},\Delta^{(i)}} \in \HM$ such that
\[
\Gamma_{B,L,k}(\Phi,\mathcal{F}_i)=c^{(i)}~+ k \sum_{\PP \in \mathcal{P}} \int_{\Omega} \beta\big(\E_{\PP}[\Phi(S)-h_{c^{(i)},\Delta^{(i)}}(S)  ~|~ \mathcal{F}_i]\big) \D \PP,
\]
and some $h_{c_{\infty},\Delta_{\infty}} \in \HM$ s.t.
\[
\Gamma_{B,L,k}(\Phi,\sigma(S_{t_n}))=c_{\infty}~+ k \sum_{\PP \in \mathcal{P}} \int_{\Omega} \beta\big(\E_{\PP}[\Phi(S)-h_{c_{\infty},\Delta_{\infty}}(S)  ~|~ \sigma(S_{t_n})]\big) \D \PP.
\]
According to Lemma~\ref{lem_compactness}, $\HM$ is compact, thus there exists a subsequence of $(h_{c^{(i)},\Delta^{(i)}} )_{i \in \N}$ (labelled identically) such that
$(h_{c^{(i)},\Delta^{(i)}} )_{i \in \N}$ converges uniformly for $i \rightarrow \infty$ against some $ h_{c,\Delta}\in \HM$.
Then, it follows with the dominated convergence theorem, the continuity of the {trading costs}, the continuity of $\beta$, and with Proposition~\ref{prop_filtration} that
\begin{align}
\Gamma_{B,L,k}(\Phi,\sigma(S_{t_n})) &\leq c~+ k \sum_{\PP \in \mathcal{P}} \int_{\Omega} \beta\big(\E_{\PP}[\Phi(S)-h_{c,\Delta}(S)  ~|~ \sigma(S_{t_n})]\big) \D \PP \notag \\
&=\lim_{i \rightarrow \infty} c~+ k \sum_{\PP \in \mathcal{P}} \int_{\Omega} \beta\big(\E_{\PP}[\Phi(S)-h_{c,\Delta}(S)  ~|~ \mathcal{F}_i]\big) \D \PP\notag \\
&=\lim_{i \rightarrow \infty}\lim_{j \rightarrow \infty} c^{(j)}~+ k \sum_{\PP \in \mathcal{P}} \int_{\Omega} \beta\big(\E_{\PP}[\Phi(S)-h_{c^{(j)},\Delta^{(j)}}(S)  ~|~ \mathcal{F}_i]\big) \D \PP. \label{eq_limilimj}
\end{align}
By the tower property of the conditional expectation and by Jensen's inequality, as $\beta$ is convex, it holds for every $j \geq i$, by using $\mathcal{F}_i \subseteq \mathcal{F}_j$, that
\begin{align}
\int \beta\big(\E_{\PP}[\Phi(S)-h_{c^{(j)},\Delta^{(j)}}(S)  ~|~ \mathcal{F}_i]\big) \D \PP&=\int_{\Omega} \beta\bigg(\E_{\PP}\left[\E_{\PP}[\Phi(S)-h_{c^{(j)},\Delta^{(j)}}(S)  ~|~ \mathcal{F}_j]~\middle|~\mathcal{F}_i\right]\bigg) \D \PP\notag \\
&\leq \int_{\Omega} \E_{\PP}\left[\beta\big(\E_{\PP}[\Phi(S)-h_{c^{(j)},\Delta^{(j)}}(S)  ~|~ \mathcal{F}_j]\big)~\middle|~\mathcal{F}_i\right] \D \PP\notag \\
&= \int_{\Omega} \beta\big(\E_{\PP}[\Phi(S)-h_{c^{(j)},\Delta^{(j)}}(S)  ~|~ \mathcal{F}_j]\big) \D \PP.
\end{align}
Thus, we obtain by \eqref{eq_limilimj} that
\begin{align*}
\Gamma_{B,L,k}(\Phi,\sigma(S_{t_n}))&\leq \lim_{j \rightarrow \infty} c^{(j)}~+ k \sum_{\PP \in \mathcal{P}} \int_{\Omega} \beta\big(\E_{\PP}[\Phi(S)-h_{c^{(j)},\Delta^{(j)}}(S)  ~|~ \mathcal{F}_j]\big) \D \PP\\
&=  \lim_{j \rightarrow \infty}\Gamma_{B,L,k}(\Phi,\mathcal{F}_j)\\
&\leq \lim_{j \rightarrow \infty} c_{\infty}~+ k \sum_{\PP \in \mathcal{P}} \int_{\Omega} \beta\big(\E_{\PP}[\Phi(S)-h_{c_{\infty},\Delta_{\infty}}(S)  ~|~ \mathcal{F}_j]\big) \D \PP\\
&= c_{\infty}~+ k \sum_{\PP \in \mathcal{P}} \int_{\Omega} \beta\big(\E_{\PP}[\Phi(S)-h_{c_{\infty},\Delta_{\infty}}(S)  ~|~ \sigma(S_{t_n})]\big) \D \PP=\Gamma_{B,L,k}(\Phi,\sigma(S_{t_n})).
\end{align*}
This means we have shown that for each $k\in \N$
\[
\lim_{j \rightarrow \infty}\Gamma_{B,L,k}(\Phi,\mathcal{F}_j) = \Gamma_{B,L,k}(\Phi,\sigma(S_{t_n})).
\]
Therefore we conclude with Proposition~\ref{prop_neuralnetworks}, and Proposition~\ref{prop_convergence} that
\[
\lim_{k \rightarrow \infty}\lim_{i \rightarrow \infty}\Gamma_{B,L,k}^{\mathcal{N}\mathcal{N}}(\Phi,\mathcal{F}_i)=\lim_{k \rightarrow \infty}\lim_{i \rightarrow \infty}\Gamma_{B,L,k}(\Phi,\mathcal{F}_i)=\lim_{k \rightarrow \infty} \Gamma_{B,L,k}(\Phi,\sigma(S_{t_n})) = \Gamma_{B,L}(\Phi,\sigma(S_{t_n})).
\]
\end{proof}
\begin{proof}[Proof of Lemma~\ref{lem_wasserstein}]
We set $x_\ell:=\left(S_{t_0}\cdot \frac{Y_{s_{\ell+1}}}{Y_{s_\ell}},\dots,S_{t_0}\cdot \frac{Y_{s_{\ell+n}}}{Y_{s_\ell}}\right)\in \R^{nd}$ for $\ell=1,\dots,N-n$ and define
\[
\widetilde{\pi}:= \frac{1}{N-n}\sum_{\ell=1}^{N-n}\delta_{(x_{\ell},x_{\ell}+\tau_\ell)}\in \Pi(\widehat{\PP},\widehat{\PP}_\tau) \subset \mathcal{M}_1(\R^{nd}\times \R^{nd}).
\]
Then it holds by construction that
\begin{align*}
\mathcal{W}(\widehat{\PP},\widehat{\PP}_\tau) &= \inf_{\pi \in \Pi(\widehat{\PP},\widehat{\PP}_\tau)}\int_{\R^{nd}\times \R^{nd}} \| u-v \|_{nd} \,\D\pi(u,v)\\
&\leq \int_{\R^{nd}\times \R^{nd}} \| u-v \|_{nd} \,\D\widetilde{\pi}(u,v)= \frac{1}{N-n}\sum_{\ell=1}^{N-n} \|\tau_\ell \|_{nd} < \varepsilon.
\end{align*}
This shows $\widehat{\PP}_\tau \in \mathcal{B}_{\varepsilon}(\widehat{\PP})$.
\end{proof}

\begin{proof}[Proof of Remark~\ref{rem_choice_Omega}]
We write  $\tau_\ell = \left({\tau_\ell}^{i,j} \right)_{i=1,\dots,n, \atop j=1,\dots,d}$ for $\ell=1,\dots,N-n$. Note that \eqref{eq_ineq_tau_epsilon} implies $\left|{\tau_\ell}^{i,j}\right|< \varepsilon \leq \delta$ for all $\ell =1,\dots,N-n$, $i=1,\dots,n$, $j=1,\dots,d$. Hence, it follows 
\[
\underline{K}^j = \underline{S}^j-\delta \leq S_{t_0}^j \cdot \frac{Y_{s_{\ell+i}}^j }{Y_{s_\ell}^j} +{\tau_\ell}^{i,j} \leq \overline{S}^j+\delta = \overline{K}^j ,
\]
for all $\ell =1,\dots,N-n$, $i=1,\dots,n$, $j=1,\dots,d$, which yields the assertion that $\widehat{\PP}_\tau \in \mathcal{M}_1(\Omega)$. 
\end{proof}

\section*{Acknowledgments}
\noindent
Financial support by the MOE AcRF Tier 1 Grant \emph{RG74/21} and by the  Nanyang Assistant Professorship Grant (NAP Grant) \emph{Machine Learning based Algorithms in Finance and Insurance} is gratefully acknowledged. 
\bibliographystyle{plain} 
\bibliography{literature}
\end{document}